\newcommand{\bibmod}[1]{}
\newcommand{\cred}{\color{black}}
\newcommand{\cb}{\color{black}}
\def\JELcode{\textbf{JEL Classification}\enspace}
\def\JEL#1{\par\addvspace\medskipamount{\rightskip=0pt plus1cm
		\def\and{\ifhmode\unskip\nobreak\fi\ $\cdot$
		}\noindent\JELcode\enspace\ignorespaces#1\par}}
\begin{document}

\title{Speculative Trading, Prospect Theory and Transaction Costs}
\titlerunning{Speculative Trading, Prospect Theory and Transaction Costs}

\author{Alex S.L. Tse \and Harry Zheng
}

\authorrunning{Alex S.L. Tse and Harry Zheng}

\institute{Alex S.L. Tse \at
	Department of Mathematics, University College London, London WC1H 0AY, UK. \\
	\email{alex.tse@ucl.ac.uk}           
	\and
	Harry Zheng \at
	Department of Mathematics, Imperial College London, London SW7 2AZ, UK.\\
	\email{harry.zheng@imperial.ac.uk}
}

\date{\today}
\maketitle

\begin{abstract}
A speculative agent with Prospect Theory preference chooses the optimal time to purchase and then to sell an indivisible risky asset to maximize the expected utility of the round-trip profit net of transaction costs. The optimization problem is formulated as a sequential optimal stopping problem and we provide a complete characterization of the solution. Depending on the preference and market parameters, the optimal strategy can be ``buy and hold'', ``buy low sell high'', ``buy high sell higher'' or ``no trading''. Behavioral preference and market friction interact in a subtle way which yields surprising implications on the agent's trading patterns. For example, increasing the market entry fee does not necessarily curb speculative trading, but instead it may induce a higher reference point under which the agent becomes more risk-seeking and in turn is more likely to trade. 
\end{abstract}
\vspace{\baselineskip}

\noindent {\bf Keywords} Sequential optimal stopping, S-shaped utility, transaction costs, entry-and-exit strategies.

\noindent {\bf Mathematics Subject Classification (2020)} 60G40, 60J60.

\noindent {\bf JEL classification} D81, G19, G40.

\section{Introduction}
\label{sect:intro}

When it comes to modeling of trading behaviors, the standard economic paradigm is the maximization of risk-averse agents' expected utility in a frictionless market. This criterion however has been criticized on many levels. In terms of trading environment, financial friction is omnipresent in reality where transactions are subject to various costs. In terms of agents' preferences, behavioral economics literature suggests that many individuals do not make decisions in accordance to expected utility theory. First, utilities are not necessarily derived from final wealth but typically what matters is the change in wealth relative to some reference point. Second, individuals are usually risk-averse over the domain of gains but risk-seeking over the domain of losses - this can be captured by an S-shaped utility function. Finally, individuals may fail to take portfolio effect into account when making investment decision and this phenomenon is known as narrow framing. These psychological ideas are explored for example in the seminal work of Kahneman and Tversky~\cite{kahneman-tversky79}, Tversky and Kahneman~\cite{tversky-kahneman81, tversky-kahneman92} and Kahneman and Lovallo~\cite{kahneman-lovallo93}.

In this paper, we develop a tractable dynamic trading model which captures a number of stylized behavioral biases of individuals as well as market frictions. In our setup, trading is costly due to proportional transaction costs as well as a fixed market entry fee. The goal of an agent is to find the optimal time to buy and then to sell an indivisible risky asset to maximize the expected utility of the round-trip profit under Prospect Theory preference of Tversky and Kahneman~\cite{tversky-kahneman92}. While a realistic economy can consist of multiple assets, we can interpret the assumption of a single indivisible asset as a manifestation of narrow framing such that the trading decision associated with one particular unit of indivisible asset can be completely isolated from the other investment opportunities. We believe the model is the best suitable to describe the trading behaviors of speculative agents. These ``less-than-fully rational'' agents purchase and sell an asset with a narrow objective of making a one-off round-trip profit rather than supporting consumption or stipulating a long term portfolio growth. 

A sequential optimal stopping problem featuring an S-shaped utility function is solved to identify the entry and exit time of the market by the agent. The solution approach is based on a backward induction idea. In the first stage, we focus on the exit strategy of the agent: Conditional on the ownership of the asset purchased at a given price level (which determines the agent’s reference point), the optimal liquidation problem is solved. Then the value function of this exit problem reflects the utility value of purchasing the asset at different price level. Upon comparison against the utility value of inaction, we obtain the payoff function of the real option to purchase the asset which is then used in the second stage problem concerning the entry decision of the agent: The agent picks the optimal time to enter the trade as to maximize the expected payoff of this real option to purchase the asset.

The traditional route to analyze an optimal stopping problem is to first conjecture a candidate optimal stopping rule and then the dynamic programming principle is invoked to derive a free boundary value problem that the value function should satisfy. Then one can attempt to solve for the free boundaries via value matching and smooth pasting. For this approach to work, we need to correctly identify the form of the optimal stopping rule but this exercise may not be trivial. As it turns out, the optimal continuation region of our entry problem can either be connected or disconnected depending on the model parameters. It is thus difficult to adopt such a guess-and-verify approach since we do not know the correct form of the optimal stopping rule upfront. In our analysis, martingale method is employed to solve the underlying optimal stopping problems, which has an important advantage that no priori conjecture on the optimal strategy is required. The optimal continuation/stopping set can be deduced directly by studying the smallest concave majorant to a suitably scaled payoff function.

Despite its relatively simple nature, our model is capable of generating a rich variety of trading behaviors such as ``buy and hold'', ``buy low sell high'', ``buy high sell higher'' and ``no trading''. The risk-seeking preference of a behavioral agent over the loss domain will typically encourage him to enter the trade but his precise trading behaviors depend crucially on the level of transaction costs relative to his preference parameters. Generally speaking, a high proportional (fixed) transaction cost discourages trading at a high (low) nominal price. When proportional costs are high and the asset is expensive, the agent prefers waiting until the price level declines and hence he is more inclined to consider a ``buy low sell high’’ strategy. But if instead the fixed entry fee is high and the asset is cheap, the agent might prefer delaying the purchase decision until the asset reaches a higher price level, and this leads to a trading pattern of ``buy high sell higher’’. 

Both behavioral preferences and market frictions are studied extensively as separate topics in the mathematical finance literature. To the best of our knowledge, however, their interaction has not been explored to date. Under Prospect Theory, the risk attitude of the agent is heavily influenced by the reference point. Since the reference point is endogenized in our model which depends on the cost of purchase including the transaction cost paid, the level of transaction cost has a direct impact on the agent's risk preference. This subtle interaction between risk preference and transaction cost leads to interesting policy implications on how speculative trading can be curbed effectively. For example, a surprisingly result is that imposing a fixed market entry fee might indeed accelerate rather than cool down trading participation.

Our paper is closely related to the literature of optimal stopping under S-shaped utility function. Kyle et al.~\cite{kyle-ouyang-xiong06} and Henderson~\cite{henderson12} consider a one-off optimal liquidation problem in which the agent solves for the optimal time to liquidate an endowed risky asset to maximize the expected Prospect Theory utility. They do not consider the purchase decision and the reference point is taken as some exogenously given status quo. A main contribution of our paper is that we further endogenize the reference point which depends on the purchase price of the asset, and the optimal purchase price must be determined as a part of the optimization problem. The recent work of Henderson and Muscat~\cite{henderson-muscat20} extends the model of Henderson~\cite{henderson12} by considering partial liquidation of multiple indivisible assets. Both this paper and Henderson and Muscat~\cite{henderson-muscat20} consider a sequential optimal stopping problem as the underlying mathematical framework. However, the economic natures of the problems are completely different where we study the sequential decision of purchase and sale while they exclusively focus on sales.

Another relevant class of works is the realization utility model which further incorporates reinvestment possibility within a behavioral optimal stopping model such as Barberis and Xiong~\cite{barberis-xiong12}, Ingersoll and Jin~\cite{ingersoll-jin13}, He and Yang~\cite{he-yang19}, Kong et al.~\cite{kong-qin-yue22} and Dai et al.~\cite{dai-qin-wang22}. In such models, the agent repeatedly purchases and sells an asset to maximize the sum of utility bursts realized from the gain and loss associated with each round-trip transaction. In a certain sense, their models consider endogenized reference point which is continuously updated based on the historical prices within each trading episode. However, the purchase decision is exogenously given in many of these models where the agent is simply assumed to buy the asset again immediately after a sale. These cited papers on realization utility models all feature transaction costs which are required to make the problems well-posed. As a result, the purchase pattern is not entirely realistic: If the agent is willing to sell an asset and then instantaneously repurchase an identical (or a different but statistically identical) asset, then the agent is essentially throwing away money in form of transaction costs without altering his own financial position. Only He and Yang~\cite{he-yang19} carefully analyze the purchase decision of the agent, but in any case they find that the purchase strategy is trivial where the agent either buys the asset immediately after a sale or never enters the trade again. Our model differs from the realization utility model in a way that we do not consider perpetual reinvestment opportunities (which can be understood as narrow framing that the agent only focuses on a single episode of the trading experience when evaluating the entry and exit strategies). Nonetheless, the optimal purchase region of our model is non-trivial under typical parameters which encapsulates many realistic trading strategies. 

This casts doubts over whether such models are the best suitable in terms of explaining the purchase-and-sale behaviors of an individual. In our baseline model, the optimal purchase strategy can be non-trivial and this might hint how existing realization utility models can be extended to produce more realistic purchase behavior. See the discussion in Section \ref{sect:realization_uti}.

Beyond the context of behavioral economics, there are a few works attempting to model the sequential purchase and sale decisions under optimal switching framework. However, identification of a modeling setup which can generate reasonable trading patterns proves to be much more difficult than expected. On the one hand, Zervos et al. (\cite{zervos-johnson-alazemi13}, p.561) report that ``...the prime example of an asset price process, namely, the geometric Brownian motion, does not allow for optimal buying and selling strategies that have a sequential nature''. Indeed, existing literature which gives ``buy low sell high'' as an optimal trading strategy often relies on extra statistical features of the asset price process such as mean reversion. See for example Zhang and Zhang~\cite{zhang-zhang08}, Song et al.~\cite{song-yin-zhang09}, Leung et al.~\cite{leung-li-wang15} and Leung and Li~\cite{leung-li15}. On the other hand, momentum-based trading strategy is also rarely studied in mathematical finance literature. The scarce examples include the work of Dai, Zhang and Zhu~\cite{dai-zhang-zhu10} and Dai, Yang, Zhang and Zhu~\cite{dai-yang-zhang-zhu16} who find that trend-following strategy is optimal under a regime-switching model of asset price. We contribute to this strand of literature by showing that a trading model based on a simple geometric Brownian motion can also generate many realistic trading patterns including both reversal strategy (buy low sell high) and momentum strategy (buy high sell higher). This is achieved via incorporating standard elements of behavioral preferences and market frictions.

The optimal investment rule in the classical Merton~\cite{merton69, merton71} portfolio selection problem can also be viewed as a buy low sell high strategy: Since the agent keeps a constant fraction of wealth invested in the risky asset, extra units of risky asset are sold (purchased) when the price increases (falls), ceteris paribus. In our paper, we focus on a single indivisible asset and do not consider portfolio effect.

The rest of the paper is organized as follows. Section \ref{sect:prob} provides a description of the model and the underlying optimization problem. In Section \ref{sect:solmethod}, we outline the solution methods to a standard optimal stopping problem and discuss heuristically how the solution to our sequential optimal stopping problem shall be characterized via the idea of backward induction. The main results are collected in Section \ref{sect:main}. Some comparative statics results and their policy implications are discussed in Section \ref{sect:compstat}. {\cred Several extensions of the baseline model are discussed in Section \ref{sect:extend}}. Section \ref{sect:conc} concludes. A few technical proofs are deferred to the appendix.



\section{Problem description}
\label{sect:prob}

Let $(\Omega, \mathcal{F}, \{\mathcal{F}_t\},\mathbb{P})$ be a filtered probability space satisfying the usual conditions which supports a one-dimensional Brownian motion $B=(B_t)_{t\geq 0}$. There is a single indivisible risky asset in the economy. Its price process $P=(P_t)_{t\geq 0}$ is modeled by a one-dimensional diffusion with state space $\mathcal{J}\subseteq \mathbb{R}_{+}$ and dynamics of
\begin{align*}
dP_t=\mu(P_t)dt+\sigma(P_t)dB_t,
\end{align*}
where $\mu:\mathcal{J}\to\mathbb{R}$ and $\sigma:\mathcal{J}\to(0,\infty)$ are Borel functions. $\mathcal{J}$ is assumed to be an interval with endpoints $0\leq a_{\mathcal{J}}<b_{\mathcal{J}}\leq \infty$ and that $P$ is regular in $(a_{\mathcal{J}},b_{\mathcal{J}})$, {\cred i.e. for any $p,y\in(a_{\mathcal{J}},b_{\mathcal{J}})$ we have $\mathbb{P}[\tau_y<\infty|P_0=p]>0$ where $\tau_y:=\inf\{t\geq 0: P_t=y\}$.}

We assume that interest rate is zero in our exposition. For the non-zero interest rate case one can interpret the process $P$ as the numeraire-adjusted price of the asset. Then the drift term $\mu(\cdot)$ can be viewed as the instantaneous excess return of the risky asset. 

Trading in the asset is costly. If the agent wants to purchase the asset at its current price $p$, he will need to pay $\lambda p + \Psi$ to initiate the trade where $\lambda\in[1,\infty)$ is the proportional transaction cost on purchase and $\Psi\geq 0$ represents a fixed market entry fee. When the agent sells the asset at price $p$, he will only receive $\gamma p$ where $\gamma\in(0,1]$ is the proportional transaction cost on sale.

Preference of the agent is described by Prospect Theory of Tversky and Kahneman~\cite{tversky-kahneman92}. Under this framework, utility is derived from gains and losses relative to some reference point rather than the total wealth. Individuals are typically risk-averse over the domain of gains and risk-seeking over the domain of losses. This can be captured by an S-shaped utility function $U:\mathbb{R}\to\mathbb{R}$ with $U(0)=0$ and that $U$ is concave (resp. convex) over $\mathbb{R}_{+}$ (resp. $\mathbb{R}_{-}$). Finally, individuals also exhibit loss-aversion such that the negative utility brought by a unit of loss is much larger in magnitude than the positive utility from a unit of gain. 

In behavioral optimal liquidation literature such as Kyle et al.~\cite{kyle-ouyang-xiong06} and Henderson~\cite{henderson12}, the liquidation payoff is always compared against some exogenously given constant reference point. In our setup, we assume the reference point depends on both an exogenous constant $R$ as well as the amount paid by the agent to purchase the asset. Suppose the agent has executed a speculative round-trip trade where he has bought and then sold the asset at stopping times $\tau$ and $\nu$ (with $\tau\leq \nu$) respectively. The liquidation payoff $\gamma P_{\nu}$ is evaluated against $\lambda P_{\tau}+\Psi+R$ as the reference point, where $\lambda P_{\tau}+\Psi$ is the capital spent on purchasing the asset and $R$ is a constant outside the model specification. The parameter $R$ can be interpreted as a preference parameter of the agent which reflects his ``aspiration level'' in the sense of Lopes and Oden~\cite{lopes-oden99} where a more motivated agent will set a higher economic benchmark as a profit target to beat. The realized utility of this round-trip trade is $U(\gamma P_{\nu}-\lambda P_{\tau}-\Psi-R)$.

A caveat, however, is that the agent is not obligated to enter or exit the trade at all if it is undesirable to do so. A realization of $\tau=\infty$ refers to the case that the purchase decision is deferred indefinitely, which is economically equivalent to not entering the trade at all. The liquidation value is zero because there is nothing to be sold, and the reference point becomes $R$ since the required cash outflow for purchase $\lambda P_{\tau}+\Psi$ has never materialized. Thus the Prospect Theory value under such strategy is simply $U(-R)$. Similarly, the agent may enter the trade at some time point but never liquidate the asset. This corresponds to a realization of $\tau<\infty$ and $\nu=\infty$. In this case, the liquidation value is again zero which is evaluated against the reference point $\lambda P_{\tau}+\Psi+R$. To summarize all the possibilities, the realized Prospect Theory utility associated with trading strategy $(\tau,\nu)$ shall be written as
\begin{align}
\begin{cases}
U(\gamma P_{\nu}-\lambda P_{\tau}-\Psi-R),& \tau<\infty,\nu<\infty;\\
U(-R),& \tau=\infty;\\
U(-\lambda P_{\tau}-\Psi-R),& \tau<\infty,\nu=\infty.
\end{cases}
\label{eq:realisedPT}
\end{align}

The objective of the agent is to find the optimal purchase time $\tau$ and sale time $\nu$ to maximize the expected value of \eqref{eq:realisedPT}.
Define the objective function as
\begin{align}
J(p;\tau,\nu):=\mathbb{E}\left[U\left(\gamma P_{\nu} {\mathbbm 1}_{\{\tau<\infty,\nu<\infty\}}-\left(\lambda P_{\tau}+\Psi\right) {\mathbbm 1}_{\{\tau<\infty\}}-R\right)\Bigl | P_0=p\right].
\label{eq:obj}
\end{align}
Formally, the agent is solving the sequential optimal stopping problem
\begin{align}
\mathcal{V}(p):=\sup_{\tau,\nu\in\mathcal{T}:\tau \leq \nu}J(p;\tau,\nu)
\label{eq:valfun}
\end{align}
where $\mathcal{T}$ is the set of $\{\mathcal{F}_t\}$-stopping times valued in $\mathbb{R}_{+}\cup \{+\infty\}$. Problem \eqref{eq:valfun} has two features which make it a non-standard one relative to a typical optimal stopping problem. First, the decision space is two-dimensional. Second, the objective function has an explicit dependence on the stopping times $\tau,\nu$ via the indicator functions which further complicate the analysis. 

\begin{remark}
Similar to Henderson~\cite{henderson12}, Xu and Zhou~\cite{xu-zhou13} and Henderson et al.~\cite{henderson-hobson-tse18}, we do not explicitly consider subjective discounting {\cred in our baseline model}. {\cred On the one hand, our model features cash flows at different time points and it is not entirely clear what is the most appropriate way to apply subjective discounting because the standard Prospect Theory framework is not directly applicable to intertemporal choices. On the other hand,} under discounting an impatient agent is much more inclined to delay losses and to realize profits earlier, this will lead to an extreme \textit{disposition effect} which is not consistent with the empirical trading pattern of retail investors. See the discussion in Henderson~\cite{henderson12}. {\cred At a mathematical level, introducing subjective discounting will also make our problem harder to be analyzed under full generality. We will briefly discuss in Section \ref{sect:discount} how subjective discounting might be incorporated and explore (numerically in some cases) how it affects the optimal trading behaviors.}
\end{remark}




\section{The solution methods}
\label{sect:solmethod}

In this section, we give an overview of how problem \eqref{eq:valfun} can be solved. We begin by offering a short summary about the solution approach to solve a standard optimal stopping problem for one-dimensional diffusion.

\subsection{The martingale methods for optimal stopping problems}
\label{sect:mgmethod}

We review the martingale methods to solve an undiscounted optimal stopping problem, which is based on Dynkin and Yushkevich~\cite{dynkin-yushkevich69} and Dayanik and Karatzas~\cite{dayanik-karatzas03}. 

Consider a general problem in form of
\begin{align*}
V(p)=\sup_{\tau\in\mathcal{T}}\mathbb{E}\left[G\left(P_{\tau}\right)|P_0=p\right]
\end{align*}
for some payoff function $G$. Under standard theory of optimal stopping, the optimal stopping time can be characterized by the first exit time of the process from some open set $\mathcal{C}$ such that $\tau=\inf\{t\geq 0: P_t\notin C\}$. In a one-dimensional diffusion setting, it is sufficient to consider stopping times which have the form $\tau_{a,b}:=\tau_a \wedge \tau_b$ where $\tau_a:=\inf\{t\geq 0: P_t=a\}$ and $\tau_b:=\inf\{t\geq 0: P_t=b\}$ with $a\leq p\leq b$. Here $[a,b]\subseteq \mathcal{J}$ is the unknown interval to be identified (and it depends on $p$ in general).

Let $s(\cdot)$ be the scale function of process $P$ (which is unique up to affine transformation) defined as a strictly increasing function such that $\Theta_t:=s(P_t)$ is a local martingale. A simple application of Ito's lemma shows that $s(\cdot)$ should solve the second order differential equation
\begin{align}
\frac{\sigma^2(p)}{2}s''(p)+\mu(p)s'(p)=0.
\label{eq:scaleode}
\end{align}
Let $\theta:=s(p)$. Then
\begin{align*}
J(p;\tau_{a,b}):=\mathbb{E}\bigl[G(P_{\tau_{a,b}})|P_{0}=p\bigr]
&=\mathbb{E}\Bigl[G\bigl(s^{-1}(\Theta_{\tau_{a,b}})\bigr)\Bigl|\Theta_{0}=\theta\Bigr]\\
&=\mathbb{E}\bigl[\phi(\Theta_{\tau_{a,b}})\bigl|\Theta_{0}=\theta\bigr]\\
&=\mathbb{P}[\tau_a<\tau_b|\Theta_{0}=\theta]\phi\bigl(s(a)\bigr)\\
&\qquad +\mathbb{P}[\tau_b<\tau_a|\Theta_{0}=\theta]\phi\bigl(s(b)\bigr)\\
&=\frac{s(b)-\theta}{s(b)-s(a)}\phi\bigl(s(a)\bigr)+\frac{\theta-s(a)}{s(b)-s(a)}\phi\bigl(s(b)\bigr)
\end{align*}
where $\phi:=G\circ s^{-1}$. The above can be maximized with respect to $a$ and $b$. Moreover, the dummy variables $a$ and $b$ can be replaced by $a'=s(a)$ and $b'=s(b)$. Hence
\begin{align*}
V(p)=\sup_{a,b:a\leq p\leq b}J(p;\tau_{a,b})=\sup_{a',b':a'\leq \theta\leq b'}\left[\frac{b'-\theta}{b'-a'}\phi(a')+\frac{\theta-a'}{b'-a'}\phi(b')\right]=:v(\theta)
\end{align*}
and thus $V(p)=v(s(p))$. The scaled value function $v(\theta)$ can be characterized by the smallest concave majorant to $\phi(\theta)=G(s^{-1}(\theta))$ the scaled payoff function over $s(\mathcal{J})$ which is defined as an interval with endpoints $s(a_\mathcal{J})$ and $s(b_\mathcal{J})$. The continuation and stopping set associated with the optimal stopping rule are given by $\mathcal{C}=\{p\in\mathcal{J}:v(s(p))>\phi(s(p))\}$ and $\mathcal{S}=\{p\in\mathcal{J}:v(s(p))=\phi(s(p))\}$ respectively.

\subsection{Decomposition of the sequential optimal stopping problem}

In the following two subsections, we discuss heuristically how the value function of problem \eqref{eq:valfun} can be constructed by considering two sub-problems based on the idea of backward induction. The well-posedness conditions as well as formal verification of optimality will be explored in Section \ref{sect:main} when we specialize the modeling setup.

\subsubsection{Exit problem}
\label{sect:exit}

Suppose for the moment that the agent has already purchased the asset at some known price level $q$ via paying $\lambda q+\Psi$ at some time point in the past. Conditional on this information, the reference point has been fixed at a known constant level $H:=\lambda q + \Psi+R$. Suppose the current time is labeled as $t=0$ and the current price of the asset is $P_0=p$. The goal of the agent in the exit problem is to find the optimal time to sell this owned asset to maximize the expected Prospect Theory value of the sale proceed relative to the reference point $H$. If the asset is (ever) sold at time $\nu$, the utility of gain and loss relative to the reference point is $U(\gamma P_{\nu} {\mathbbm 1}_{\{\nu<\infty\}}-H)$ after taking the transaction cost on sale into account. Since the realized utility is increasing in $P_{\nu}$ and the process $P$ is non-negative, in general there is no incentive for the agent to forgo the sale opportunity. Hence heuristically one can drop the indicator function ${\mathbbm 1}_{\{\nu<\infty\}}$ and it is sufficient to consider the objective function $G_1(P_{\nu};H):=U(\gamma P_{\nu}-H)$. The agent solves an optimal stopping problem
\begin{align}
V_1(p; H):=\sup_{\nu\in\mathcal{T}}\mathbb{E}\Bigl[G_1(P_{\nu};H)\Bigl | P_0=p\Bigr]=\sup_{\nu\in\mathcal{T}}\mathbb{E}\Bigl[U(\gamma P_{\nu}-H)\Bigl | P_0=p\Bigr]
\label{eq:exit}
\end{align}
to determine the optimal time of the asset sale. The value function of the exit problem is then given by $V_1(p; H)=\bar{g}_1(s(p); H)$ where $\bar{g}_1=\bar{g}_1(\theta;H)$ is the smallest concave majorant of
\begin{align*}
g_1(\theta;H):=G_1\Bigl(s^{-1}(\theta); H\Bigr)=U\Bigl(\gamma s^{-1}(\theta)-H\Bigr).
\end{align*}
We write the optimizer to problem \eqref{eq:exit} as $\nu^*(p;H)$ which depends on the initial price level $p$ and the given reference point $H$.

\subsubsection{Entry problem}
\label{sect:entry}

Now we assume that the agent does not own any asset to begin with. His economic objective is to determine the optimal time to purchase (and then to sell) the asset to maximize the expected utility of the liquidation proceed relative to the endogenized reference point.

At a given current asset price level $p$, there are two possible actions for the agent. First, he can opt to initiate the speculative trade by buying the asset now which fixes the reference point as $\lambda p+\Psi+R$, and then sell it later in the future. When the asset is liquidated at his choice of the sale time $\nu$, the realized utility is $U(\gamma P_{\nu}-\lambda p-\Psi-R)$. Conditional on the decision to purchase the asset today at price $p$, the agent can find the best time of sale to maximize his expected utility by solving problem \eqref{eq:exit} on setting $H=\lambda p + \Psi+R$. Then the best possible expected utility he can attain is
\begin{align*}
\sup_{\nu\in\mathcal{T}}\mathbb{E}\Bigl[U(\gamma P_{\nu}-\lambda p-\Psi-R)\Bigl | P_0=p\Bigr]=V_{1}(p; \lambda p+\Psi+R)
\end{align*}
provided that he decides to enter the trade at the given price of $p$.

Alternatively, the agent can forgo the opportunity to enter the trade and stay away from the market forever. In this case, the payoff must be zero and the reference point is simply equal to $R$. The utility he will receive is just a constant of $U(-R)$.

Therefore, the opportunity to enter the speculative trade can be viewed as a real option. At a given price level $p$ the agent is willing to enter the trade only if the maximal expected utility of trading is not less than that of inaction, i.e. $V_{1}(p; \lambda p+\Psi+R)\geq U(-R)$. This is similar to a financial option being in-the-money. The payoff of this real option to the agent in utility terms as a function of price level $p$ is given by
\begin{align}
G_2(p):=\max\Bigl\{V_1(p;\lambda p+\Psi+R),U(-R)\Bigr\}.
\label{eq:G2}
\end{align}
The entry problem for the agent is to find the optimal time to initiate the trade as to maximize the expected value of \eqref{eq:G2}. It is equivalent to solving
\begin{align}
V_2(p)&:=\sup_{\tau\in\mathcal{T}}\mathbb{E}\Bigl[G_2(P_{\tau})\Bigl|P_0=p\Bigr]\nonumber\\
&=\sup_{\tau\in\mathcal{T}}\mathbb{E}\biggl[\max\Bigl\{V_1(P_\tau;\lambda P_\tau+\Psi+R),U(-R)\Bigr\}\biggl|P_0=p\biggr]
\label{eq:entry}
\end{align}
provided that the exit problem value function $V_1$ is well-defined. We identify $\bar{g}_2=\bar{g}_2(\theta)$ as the smallest concave majorant of
\begin{align*}
g_2(\theta):=G_2(s^{-1}(\theta))&=\max\Bigl\{V_1\bigl(s^{-1}(\theta); \lambda s^{-1}(\theta)+\Psi + R\bigr),U(-R)\Bigr\}\\
&=\max\Bigl\{\bar{g}_1\bigl(\theta; \lambda s^{-1}(\theta)+\Psi + R\bigr),U(-R)\Bigr\}.
\end{align*}
Then the value function of the entry problem is $V_2(p)=\bar{g}_2(s(p))$.

Let the optimizer to problem \eqref{eq:entry} be $\tau^*(p)$. With $p$ being the initial price of the asset at $t=0$, the agent will purchase the asset at the stopping time $t=\tau^*(p)$. Then conditional on the realization of the entry price level $P_{\tau^*(p)}$, the agent solves the exit problem \eqref{eq:exit} under initial value $P_{\tau^*(p)}$ and reference point $H=\lambda P_{\tau^*(p)}+ \Psi+R$. The corresponding optimizer is given by $\nu^*(P_{\tau^*(p)};\lambda P_{\tau^*(p)}+ \Psi+R)$ which reflects the time lapse between the initiation and closure of the trade. In particular, the agent will sell the asset at the stopping time $t=\tau^*(p)+\nu^*(P_{\tau^*(p)};\lambda P_{\tau^*(p)}+ \Psi+R)$. This gives the complete characterization of the optimal entry and exit strategy of the agent. Note that it is possible to have $\mathbb{P}[\tau^*(p)<\infty]<1$. In other words, there is a possibility that the entry strategy is not executed in finite time, and hence there is no decision to sell (see the discussion in Section \ref{sect:main}).

Intuitively, we expect $\mathcal{V}(p)=V_2(p)$ where $\mathcal{V}$ is the value function of the original sequential optimal stopping problem \eqref{eq:valfun}. This claim has to be verified formally. Without any further specifications of the utility function $U$ and the underlying price process $P$, however, it is hard to make any further progress. For example, it is not even clear upfront whether \eqref{eq:valfun} is a well-posed problem which yields a finite value function.

\section{Main results}
\label{sect:main}

The procedures described in Section \ref{sect:solmethod} are very generic and can guide us to write down the value function of the sequential optimal stopping problem under a range of model specifications. To derive stronger analytical results, in the rest of this paper we specialize to the piecewise power utility function of Tversky and Kahneman~\cite{tversky-kahneman92} in form of
\begin{align*}
U(x)=
\begin{cases}
x^\alpha,&x>0; \\
-k |x|^\alpha,& x\leq 0.
\end{cases}
\end{align*}
Here $\alpha\in(0,1)$ such that $1-\alpha$ is the level of risk-aversion and risk-seeking on the domain of gains and losses, and $k>1$ controls the degree of loss-aversion. Experimental results of \cite{tversky-kahneman92} give an estimation of $\alpha=0.88$ and $k=2.25$.

The price process of the risky asset $P=(P_t)_{t\geq 0}$ is assumed to be a geometric Brownian motion 
\begin{align*}
dP_t=P_t(\mu dt +\sigma dB_t)
\end{align*}
with $\mu\geq 0$ and $\sigma>0$ being the constant drift and volatility of the asset. Define $\beta:=1-\frac{2\mu}{\sigma^2}\leq 1$, then by substituting $\mu(p)=\mu p$ and $\sigma(p)=\sigma p$ in \eqref{eq:scaleode} a scale function of $P$ can be found as
\begin{align*}
s(x)=
\begin{cases}
x^{\beta},& \beta > 0;\\
\ln x,& \beta=0;\\
x^{-\beta},& \beta<0.
\end{cases}
\end{align*}

Finally, we assume $R>0$ so that the aspiration level of the agent is always positive. This is not unreasonable since this parameter can be understood as some performance benchmark that an agent wants to outperform and such a goal is typically a positive one. 

We begin by looking at the necessary condition under which problem \eqref{eq:valfun} is well-posed.
\begin{lemma}
	If $\beta\leq 0$ or $0<\beta<\alpha$, then there exists a sequence of stopping times $(\tau_n,\nu_n)_{n=1,2,...}$ such that $J(p;\tau_n,\nu_n)\to +\infty$ as $n \uparrow \infty$, where $J(p;\cdot,\cdot)$ is defined in \eqref{eq:obj}.
	\label{lem:wellpose}
\end{lemma}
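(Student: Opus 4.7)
The plan is to exhibit an explicit family of admissible strategies $(\tau_n,\nu_n)$ whose objective values diverge. The natural candidate is the simplest ``buy immediately, then sell at an ever-higher target'' strategy: take $\tau_n:=0$ and $\nu_n:=\inf\{t\geq 0:P_t=M_n\}$ for an increasing sequence $M_n\uparrow\infty$ with $M_n>p$. Since $P$ has continuous paths, $P_{\nu_n}=M_n$ on $\{\nu_n<\infty\}$, and plugging into \eqref{eq:obj}, for any $M_n$ large enough that $\gamma M_n>\lambda p+\Psi+R$ one obtains
\begin{align*}
J(p;\tau_n,\nu_n)=\mathbb{P}(\nu_n<\infty)(\gamma M_n-\lambda p-\Psi-R)^\alpha-k\,\mathbb{P}(\nu_n=\infty)(\lambda p+\Psi+R)^\alpha.
\end{align*}
The negative summand is bounded uniformly in $M_n$, so the whole argument reduces to controlling the hitting probability in each regime.

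When $\beta\leq 0$ (i.e., $\mu\geq\sigma^2/2$), the left endpoint $0$ is inaccessible because the scale function ($\log x$ for $\beta=0$, a positive multiple of $-x^\beta$ for $\beta<0$) diverges at $0^+$, so every level above $p$ is hit almost surely. Hence $\mathbb{P}(\nu_n<\infty)=1$ and $J(p;0,\nu_n)=(\gamma M_n-\lambda p-\Psi-R)^\alpha\to+\infty$ as $M_n\uparrow\infty$. When instead $0<\beta<\alpha$, a direct Ito computation shows that $(P_t^\beta)_{t\geq 0}$ is a local martingale, bounded by $M_n^\beta$ on $[0,\nu_n]$; applying optional stopping at $\nu_n$ and using the fact that $P_t\to 0$ almost surely (since $\mu<\sigma^2/2$) on $\{\nu_n=\infty\}$ yields the standard formula $\mathbb{P}(\nu_n<\infty\mid P_0=p)=(p/M_n)^\beta$. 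Substituting back shows that $J(p;0,\nu_n)$ is asymptotically of order $\gamma^\alpha p^\beta M_n^{\alpha-\beta}$, which diverges since $\alpha>\beta$, while the loss term remains bounded.

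The argument contains no genuinely delicate step; once the target-sale strategy is written down the conclusion follows from the scale function (equivalently, from the martingale $(P_t^\beta)$) and elementary asymptotics. The only point requiring a line of justification is the uniform integrability needed for optional stopping in the $0<\beta<\alpha$ regime, which is immediate from the bound $P_{t\wedge\nu_n}^\beta\leq M_n^\beta$ together with bounded convergence as $t\uparrow\infty$.
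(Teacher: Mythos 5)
Your proposal is correct and follows essentially the same route as the paper's own proof: buy immediately ($\tau_n=0$), sell at an increasing target level, compute the hitting probability via the scale function (trivial for $\beta\leq 0$, equal to $(p/M_n)^\beta$ for $\beta>0$), and observe that the gain term grows like $M_n^{\alpha-\beta}$ while the loss term stays bounded. The only difference is that you spell out the optional-stopping justification for the hitting probability, which the paper takes as standard.
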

\begin{proof}
	Consider a sequence of stopping times defined by
	\begin{align}
	\tau_n:=0,\qquad \nu_n:=\inf\{t\geq 0: P_t \geq n\}.
	\label{eq:buy_and_hold}
	\end{align}
	If $\beta\leq 0$, then $$P_t=P_0\exp\biggl[\Bigl(\mu-\frac{\sigma^2}{2}\Bigr)t+\sigma B_t\biggr]=P_0\exp\biggl[\sigma\Bigl(-\frac{\sigma}{2}\beta t+ B_t\Bigr)\biggr]$$ such that the Brownian motion in the exponent has non-negative drift and hence $P$ can reach any arbitrarily high level in finite time, and as such $\nu_n<\infty$ and $P_{\nu_n}=n$ almost surely. Hence 
	\begin{align*}
	J(p;\tau_n,\nu_n)&=\mathbb{E}\left[U\left(\gamma P_{\nu_n} {\mathbbm 1}_{\{\tau_n<\infty,\nu_n<\infty\}}-\left(\lambda P_{\tau_n}+\Psi\right) {\mathbbm 1}_{\{\tau_n<\infty\}}-R\right)\Bigl | P_0=p\right]\\
	&=U\left(\gamma n-\lambda p-\Psi-R\right)\to +\infty
	\end{align*}
	as $n\uparrow \infty$.
	
	If $\beta>0$, then instead $P$ may not reach any arbitrarily given level above its starting value in finite time and we have $\lim_{t\to\infty}P_t=0$ almost surely. Then $\{v_n<\infty\}=\{P_{\nu_n}=n\}$, and for sufficiently large $n$ such that $n>p$ one can compute that
	\begin{align*}
	\mathbb{P}[\nu_n=\infty|P_0=p]=\frac{n^\beta-p^\beta}{n^\beta},\qquad \mathbb{P}[\nu_n<\infty|P_0=p]=\frac{p^\beta}{n^\beta}.
	\end{align*}
	Then
	\begin{align*}
	J(p;\tau_n,\nu_n)&=\mathbb{P}[\nu_n=\infty|P_0=p]U\left(-\lambda p-\Psi-R\right)\\
	&\qquad+\mathbb{P}[\nu_n<\infty|P_0=p]U\left(\gamma n-\lambda p-\Psi-R\right) \\
	&=-\frac{n^\beta-p^\beta}{n^\beta}k(\lambda p +\Psi+R)^{\alpha}+\frac{p^\beta}{n^\beta}(\gamma n-\lambda p-\Psi-R)^{\alpha}\to +\infty
	\end{align*}
	as $n\uparrow \infty$ if $\beta<\alpha$. \qed
\end{proof}

Mathematically speaking, the sequential optimal stopping problem \eqref{eq:valfun} is ill-posed under the parameters combination in Lemma \ref{lem:wellpose} where the value function diverges to infinity. This arises when the performance of the asset is too good relative to the agent's risk-aversion level over gains. \eqref{eq:buy_and_hold} corresponds to a ``buy and hold'' trading rule as a possible optimal strategy: the agent purchases the asset immediately from the outset and the profit-target level of sale can be set arbitrarily high. 

\begin{remark}
 Empirically, historical returns on equities are excessively high relative to their risk level. For example, the annualized mean and standard deviation of the equity premium (i.e. excess return above the riskfree rate) of the U.S. market over the time period 1889-1978 are 6.18\% and 16.67\% respectively (see Mehra and Prescott\cite{mehra-prescott85}) such that $\beta=1-2\mu/\sigma\approx -3.45$, while the estimates based on a more recent time period 1950-2015 are 7.15\% and 16.83\% (see Bai and Lu~\cite{bai-lu22}) such that $\beta\approx -4.05$. Although this may cast doubt over the practical relevance of the condition $0<\alpha\leq\beta$ (where buy-and-hold is an optimal strategy if the condition does not hold), we would like to emphasize $\beta$ in general is a noisy statistical quantity which is hard to be forecasted. In our model, $\beta$ should be interpreted as the agent's subjective assessment of the asset performance which can be much more conservative than the historical estimates. $\mu$ may also encapsulate subjective discounting which further lowers its value. See Section \ref{sect:discount}.
\end{remark}

From here onwards, we focus on the case $0<\alpha\leq \beta$ which is the necessary condition for problem \eqref{eq:valfun} to be well-posed. The form of the solutions to the exit problem \eqref{eq:exit} and entry problem \eqref{eq:entry} are first provided and then we discuss the economic intuitions behind the associated trading strategies. Towards the end of this section, the optimality of the value function of the entry problem is formally verified to show that it indeed corresponds to the solution of the sequential optimal stopping problem \eqref{eq:valfun}.

We first state the solution to the exit problem \eqref{eq:exit} where a similar result can be found in Henderson~\cite{henderson12}.

\begin{lemma}
	For the exit problem \eqref{eq:exit}, if $0<\alpha\leq \beta$ then the agent will sell the asset when its price level first reaches $\frac{cH}{\gamma}$ or above where $c>1$ is a constant given by the solution to the equation
	\begin{align}
	\frac{\alpha}{\beta} c(c-1)^{\alpha-1}-(c-1)^{\alpha}-k=0.
	\label{eq:eqC}
	\end{align}
	The value function is given by
	\begin{align}
	V_1(p; H)=
	\begin{cases}
	\frac{\alpha}{\beta} c^{1-\beta}(c-1)^{\alpha-1}H^{\alpha-\beta}(\gamma p)^\beta-kH^{\alpha},&p< \frac{cH}{\gamma}; \\
	(\gamma p-H)^{\alpha},&p\geq \frac{cH}{\gamma}.
	\end{cases}
	\label{eq:exitvalufun}
	\end{align}
	\label{lem:henderson}
\end{lemma}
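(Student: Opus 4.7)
The plan is to apply the martingale / smallest-concave-majorant machinery described in Section 3.1 directly to the exit problem. Because the asset is a geometric Brownian motion on $(0,\infty)$ with $\beta>0$, a valid scale function is $s(x)=x^\beta$, mapping the state space bijectively to $(0,\infty)$, and $V_1(p;H)=\bar g_1(p^\beta;H)$ where $\bar g_1(\cdot;H)$ is the smallest concave majorant on $(0,\infty)$ of
\[
\phi(\theta):=g_1(\theta;H)=U\!\left(\gamma\theta^{1/\beta}-H\right).
\]
So the task reduces to constructing this majorant explicitly and then identifying the stopping set as $\{\theta:\bar g_1(\theta;H)=\phi(\theta)\}$.

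First I would record the qualitative shape of $\phi$. It is continuous, strictly increasing, with $\phi(0^+)=-kH^\alpha$, $\phi((H/\gamma)^\beta)=0$, $\phi(\theta)\to\infty$, and an infinite right- and left-derivative at the kink $\theta_0:=(H/\gamma)^\beta$. Direct differentiation (writing $h(\theta):=H-\gamma\theta^{1/\beta}$ on the loss side and $g(\theta):=\gamma\theta^{1/\beta}-H$ on the gain side) shows $\phi$ is convex on $[0,\theta_0]$ (using $\alpha<1$, $\beta\le1$ and that $h$ is concave and non-negative there). On the gain side, the condition $\phi''<0$ reduces after simplification to $(\beta-\alpha)\,\gamma\theta^{1/\beta}>-(1-\beta)H$, which is automatic when $\alpha\le\beta\le1$. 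Thus $\phi$ is \emph{convex–then–concave} on $(0,\infty)$, with an infinite cusp at $\theta_0$.

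Given this shape, the smallest concave majorant must consist of a straight-line segment on $[0,\theta^*]$ joined smoothly to $\phi$ itself on $[\theta^*,\infty)$ for some unique $\theta^*>\theta_0$. Since $\phi(0^+)=-kH^\alpha$ is finite and $0$ is a natural boundary, the line has to start at $(0,-kH^\alpha)$ and be tangent to $\phi$ at $\theta^*$; this double condition is just the tangency relation
\[
\phi'(\theta^*)=\frac{\phi(\theta^*)+kH^\alpha}{\theta^*}.
\]
Writing $\theta^*=(cH/\gamma)^\beta$ so that $\gamma(\theta^*)^{1/\beta}-H=(c-1)H$, and substituting $\phi(\theta^*)=(c-1)^\alpha H^\alpha$ and $\phi'(\theta^*)=\frac{\alpha\gamma}{\beta}(c-1)^{\alpha-1}H^{\alpha-1}(cH/\gamma)^{1-\beta}$, the tangency relation collapses algebraically to equation \eqref{eq:eqC}. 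To show a unique root $c>1$ exists, I would introduce $\psi(\theta):=\theta\phi'(\theta)-\phi(\theta)$ on $(\theta_0,\infty)$; since $\psi'(\theta)=\theta\phi''(\theta)<0$ on the concave region, $\psi$ is strictly decreasing, and $\psi(\theta_0^+)=+\infty$ while a routine asymptotic estimate gives $\psi(\theta)\to -(1-\alpha/\beta)\gamma^\alpha\theta^{\alpha/\beta}$, which tends to $-\infty$ when $\alpha<\beta$ and to $0$ when $\alpha=\beta$. In either case the equation $\psi(\theta^*)=kH^\alpha>0$ has a unique solution.

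Finally I would just read off the explicit formulas. Using equation \eqref{eq:eqC} to replace $(c-1)^\alpha+k$ by $\frac{\alpha}{\beta}c(c-1)^{\alpha-1}$ in the slope of the tangent line, the value on the linear piece becomes
\[
\bar g_1(\theta;H)=\frac{\alpha}{\beta}c^{1-\beta}(c-1)^{\alpha-1}H^{\alpha-\beta}\gamma^\beta\theta-kH^\alpha,\qquad\theta<\theta^*,
\]
and $\bar g_1(\theta;H)=(\gamma\theta^{1/\beta}-H)^\alpha$ for $\theta\ge\theta^*$; substituting $\theta=p^\beta$ yields \eqref{eq:exitvalufun}, and the corresponding optimal stopping set $\{p\ge cH/\gamma\}$ translates into the threshold rule $\nu^*(p;H)=\inf\{t\ge 0:P_t\ge cH/\gamma\}$. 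The only delicate step is verifying the convex–then–concave shape — in particular that the gain-side concavity relies crucially on $\alpha\le\beta$ (the well-posedness condition) — since once the shape is established, the structure of the concave majorant, and hence the value function, is forced.
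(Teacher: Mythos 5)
Your proposal is correct and follows essentially the same route as the paper: compute the scaled payoff $g_1(\theta;H)=U(\gamma\theta^{1/\beta}-H)$, establish its convex--then--concave shape under $0<\alpha\le\beta\le1$, and build the smallest concave majorant as the tangent line from $(0,-kH^\alpha)$ touching the concave branch at $\theta^*=(cH/\gamma)^\beta$, with the tangency condition reducing to \eqref{eq:eqC}. The only substantive difference is that you additionally prove existence and uniqueness of the root $c>1$ via the monotonicity of $\psi(\theta)=\theta\phi'(\theta)-\phi(\theta)$ and its boundary limits, a point the paper passes over by simply conjecturing the form of $\theta^*$.
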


\begin{proof}
	Recall the notation introduced in Section \ref{sect:exit}. For $\beta>0$, the scaled payoff function of the exit-problem is given by
	\begin{align*}
	g_1(\theta)=g_1(\theta;H)=G_1(s^{-1}(\theta);H)&=U(\gamma \theta^{\frac{1}{\beta}} -H) \\
	&=
	\begin{cases}
	-k(H - \gamma \theta^{\frac{1}{\beta}})^{\alpha},& 0\leq \theta< \left(\frac{H}{\gamma}\right)^\beta;\\
	(\gamma \theta^{\frac{1}{\beta}} -H)^{\alpha},& \theta\geq \left(\frac{H}{\gamma}\right)^\beta.
	\end{cases}
	\end{align*}
	It is straightforward to work out the derivatives of $g_1$ as
	\begin{align*}
	g_1'(\theta)=
	\begin{cases}
	\frac{k\alpha\gamma}{\beta}\theta^{\frac{1}{\beta}-1}(H - \gamma \theta^{\frac{1}{\beta}})^{\alpha-1},& 0\leq \theta< \left(\frac{H}{\gamma}\right)^\beta;\\
	\frac{\alpha\gamma}{\beta}\theta^{\frac{1}{\beta}-1}(\gamma \theta^{\frac{1}{\beta}} -H)^{\alpha-1},& \theta\geq \left(\frac{H}{\gamma}\right)^\beta,
	\end{cases}
	\end{align*}
	and
	\begin{align*}
	g_1''(\theta)=
	\begin{cases}
	\frac{k\alpha\gamma}{\beta}\theta^{\frac{1}{\beta}-2}(H-\gamma\theta^{\frac{1}{\beta}})^{\alpha-2}\left[\frac{\gamma(\beta-\alpha)}{\beta}\theta^{\frac{1}{\beta}}+\frac{1-\beta}{\beta}H\right],& 0\leq \theta< \left(\frac{H}{\gamma}\right)^\beta;\\
	\frac{\alpha\gamma}{\beta}\theta^{\frac{1}{\beta}-2}(\gamma\theta^{\frac{1}{\beta}}-H)^{\alpha-2}\left[\frac{\gamma(\alpha-\beta)}{\beta}\theta^{\frac{1}{\beta}}-\frac{1-\beta}{\beta}H\right],& \theta\geq \left(\frac{H}{\gamma}\right)^\beta.
	\end{cases}
	\end{align*}
	
	Given the standing assumption $\beta\leq 1$ and the condition $0<\alpha\leq\beta$, $g_1$ is increasing concave on $\theta>(\frac{H}{\gamma})^\beta$ and is increasing convex on $0\leq \theta<(\frac{H}{\gamma})^\beta$. The smallest concave majorant of $g_1$ can be formed by drawing a straight line from $(0,g_1(0))$ which touches $g_1$ at some $\theta^{*}>(\frac{H}{\gamma})^\beta$. In particular, $\theta^*$ is a solution to $\frac{g_1(\theta)-g_1(0)}{\theta}=g_1'(\theta)$ on $\theta>(\frac{H}{\gamma})^\beta$, i.e.
	\begin{align*}
	\frac{\alpha\gamma}{\beta}\theta^{\frac{1}{\beta}-1}(\gamma \theta^{\frac{1}{\beta}}-H)^{\alpha-1}=\frac{(\gamma \theta^{\frac{1}{\beta}} -H)^{\alpha}+kH^\alpha}{\theta}.
	\end{align*}
	We conjecture that the solution is in form of $\theta^*=c^\beta (\frac{H}{\gamma})^\beta$ for some constant $c>1$. Then direct substitution shows that the constant $c$ should solve \eqref{eq:eqC}. The smallest concave majorant of $g_1$ is then
	\begin{align*}
	\bar{g}_1(\theta)&=
	\begin{cases}
	g_1(0)+\theta g_1'(\theta^*),& 0\leq \theta<\theta^*;\\
	g_1(\theta),& \theta>\theta^*;
	\end{cases}\\
	&=
	\begin{cases}
	-kH^{\alpha}+\frac{\alpha}{\beta}H^{\alpha-\beta}c^{1-\beta}(c-1)^{\alpha-1}\gamma^\beta\theta,&0\leq \theta<c^\beta \left(\frac{H}{\gamma}\right)^\beta;\\
	(\gamma \theta^{\frac{1}{\beta}} -H)^{\alpha},& \theta>c^\beta \left(\frac{H}{\gamma}\right)^\beta.
	\end{cases}
	\end{align*}
	The value function is given by $V_1(p;H)=\bar{g}_1(s(p))=\bar{g}_1(p^\beta)$ leading to \eqref{eq:exitvalufun}. The corresponding optimal stopping time is
	\begin{align*}
		\tau=\inf\left\{t\geq 0:\bar{g}_1(\Theta_t)=g_1(\Theta_t)\right\}&=\inf\left\{t\geq 0:\Theta_t\geq \theta^{*}\right\}\\
		&=\inf\biggl\{t\geq 0:P_t\geq c \Bigl(\frac{H}{\gamma}\Bigr)\biggr\}.
	\end{align*} \qed
\end{proof}

The optimal sale strategy is a gain-exit rule where the agent is looking to sell the asset when its price is sufficiently high without considering stop-loss. Note that the gain-exit target $\frac{cH}{\gamma}$ is increasing in transaction costs (i.e. decreasing in $\gamma$). It means that the agent tends to delay the sale decision in a more costly trading environment.

\begin{remark}
	Inspired by the literature, we expect that more sophisticated exit strategies can be observed under alternative model setups. For example, the agent will consider stop-loss when the asset has negative drift (see Section \ref{sect:negative} and Henderson~\cite{henderson12}) or when there are multiple trading opportunities as per the realization utility models (see Section \ref{sect:realization_uti} as well as Ingersoll and Jin~\cite{ingersoll-jin13} and He and Yang~\cite{he-yang19}). Introducing jump-diffusion price dynamics will further result in disconnected waiting regions where an agent who has recently suffered from a huge jump loss may refuse to stop-loss (Dai et al.~\cite{dai-qin-wang22} and Kong et al.~\cite{kong-qin-yue22}). We opt to work with a simpler baseline model to make the whole entry-and-exit problem as tractable as possible. 
	\end{remark}

\begin{remark}
	Since we require $\beta>0$, if the initial price of the asset is below the gain-exit target then there is a strictly positive probability that the asset is never sold. Moreover, the agent who fails to sell the asset at his target gain-exit level will suffer a total loss in the long run.
	\label{remark:bm}
\end{remark}

\begin{remark}
	The expression of the target gain-exit threshold and in turn the value function of the exit problem are available in close-form, thanks to the specialization that the degree of risk-aversion over gains is the same as that of risk-seeking over losses. This allows us to make a lot of analytical progress when studying the entry problem. We will also lose the close-form expressions in Lemma \ref{lem:henderson} if fixed transaction cost on sale is introduced: In this case the agent will only sell the asset when the utility proceed from the sale $U(\gamma p - H-\Gamma)$ is larger than that of inaction $U(-H)$ where $\Gamma\geq 0$ represents a fixed market exit fee. Then the payoff function of the exit problem will become $$G_1(p;H):=\max\Bigl\{U(\gamma p - H-\Gamma),U(-H)\Bigr\}.$$   
	\label{remark:modelassump}
\end{remark}

We now proceed to describe the optimal entry strategy of the agent. The proofs of the two propositions below are given in the appendix.
\begin{proposition}
	Suppose $\alpha\leq \beta< 1$. For the entry problem \eqref{eq:entry}, the value function is given by $V_2(p)=\bar{g}_2(p^{\beta})$ where $\bar{g}_2=\bar{g}_2(\theta)$ is the smallest concave majorant to
	\begin{align}
	g_2(\theta)&:=\max\bigl\{v_1(\theta),-kR^{\alpha}\bigr\}:=\max\Biggl\{(R+\Psi)^{\alpha}f\biggl(\Bigl(\frac{\gamma}{R+\Psi}\Bigr)^{\beta}\theta\biggr),-kR^{\alpha}\Biggr\}
	\label{eq:g2}
	\end{align}
	with
	\begin{align}
	f(x):=\frac{\frac{\alpha}{\beta}c^{1-\beta}(c-1)^{\alpha-1}x-k(\frac{\lambda}{\gamma} x^{1/\beta}+1)^{\beta}}{(\frac{\lambda}{\gamma} x^{1/\beta}+1)^{\beta-\alpha}}.
	\label{eq:f}
	\end{align}
	There are three different cases:
	\begin{enumerate}[wide, labelwidth=!, labelindent=0pt, label={(\arabic*)}]
		\item If $\frac{\lambda}{\gamma}\leq[\frac{\alpha}{\beta k}c^{1-\beta}(c-1)^{\alpha-1}]^{\frac{1}{\beta}}$, there exists $p_1^{*}\in[0,\infty)$ such that the agent will enter the trade when the asset price is at or above $p_1^*$. The value function is
		\begin{align}
		V_2(p)=
		\begin{cases}
		\frac{\alpha}{\beta} c^{1-\beta}(c-1)^{\alpha-1}(\lambda p+\Psi+R)^{\alpha-\beta}(\gamma p)^\beta-k(\lambda p+\Psi+R)^{\alpha},& p\geq p_1^*;\\
		\Bigl(\frac{\alpha}{\beta} c^{1-\beta}(c-1)^{\alpha-1}(\lambda p_1^*+\Psi+R)^{\alpha-\beta}(\gamma p_1^*)^\beta&\\
		\qquad \qquad-k(\lambda p_1^*+\Psi+R)^{\alpha}+kR^{\alpha}\Bigr)\frac{p^\beta}{(p_1^*)^{\beta}}-kR^{\alpha},&p<p_1^*.
		\end{cases}
		\label{eq:exitvalfun1}
		\end{align}
		
		\item If $\frac{\lambda}{\gamma}> [\frac{\alpha}{\beta k}c^{1-\beta}(c-1)^{\alpha-1}]^{\frac{1}{\beta}}$, there exists a constant $C\in(0,\infty)$ independent of $\Psi$ and $R$ such that:
		
		\begin{enumerate}[wide, labelwidth=!, labelindent=0pt]
			\item If $\Psi< CR$, there exists $0\leq p_1^{*}<p_2^{*}<\infty$ such that the agent will enter the trade when the asset price is within the interval $[p_1^{*},p_2^{*}]$. The value function is
			\begin{align}
			V_2(p)=
			\begin{cases}
			\frac{\alpha}{\beta} c^{1-\beta}(c-1)^{\alpha-1}(\lambda p_2^*+\Psi+R)^{\alpha-\beta}(\gamma p_2^*)^\beta&\\
			\qquad\qquad-k(\lambda p_2^*+\Psi+R)^{\alpha},& p> p_2^*;\\
			\frac{\alpha}{\beta} c^{1-\beta}(c-1)^{\alpha-1}(\lambda p+\Psi+R)^{\alpha-\beta}(\gamma p)^\beta&\\
			\qquad\qquad-k(\lambda p+\Psi+R)^{\alpha},& p_1^* \leq p \leq p_2^*;\\
				\Bigl(\frac{\alpha}{\beta} c^{1-\beta}(c-1)^{\alpha-1}(\lambda p_1^*+\Psi+R)^{\alpha-\beta}(\gamma p_1^*)^\beta&\\
				\qquad\qquad -k(\lambda p_1^*+\Psi+R)^{\alpha}+kR^{\alpha}\Bigr)\frac{p^\beta}{(p_1^*)^{\beta}}-kR^{\alpha},&p<p_1^*.
			\end{cases}
			\label{eq:exitvalfun2a}
			\end{align}
			\item If $\Psi\geq CR$, the agent will never enter the trade. The value function is $V_2(p)=-kR^{\alpha}$.
		\end{enumerate}
	\end{enumerate}
	Moreover, in case (1) and case (2)(a), $p_1^*=\frac{R+\Psi}{\gamma} (x_1^*)^{1/\beta}$ where $x^*_1$ is the smaller solution to the equation
	\begin{align}
	k&=\left(1+\frac{\Psi}{R}\right)^{\alpha}\nonumber\\
	&\quad\times \frac{k\left(\frac{\lambda}{\gamma}x^{\frac{1}{\beta}}+1\right)^\beta\left[\frac{\lambda }{\gamma}\left(1-\frac{\alpha}{\beta}\right)x^{\frac{1}{\beta}}+1\right]-\frac{\alpha}{\beta}c^{1-\beta}(c-1)^{\alpha-1}\frac{\lambda}{\gamma}\left(1-\frac{\alpha}{\beta}\right)x^{\frac{1}{\beta}+1}}{\left(\frac{\lambda}{\gamma}x^{\frac{1}{\beta}}+1\right)^{\beta-\alpha+1}}.
	\label{eq:p1eq}
	\end{align}
	In case (2)(a), $p_2^*=\frac{R+\Psi}{\gamma}(x_2^*)^{1/\beta}$ where $x_2^*$ is the unique solution to the equation
	\begin{align}
	c^{1-\beta}(c-1)^{\alpha-1}\left(x^{-\frac{1}{\beta}}+\frac{\lambda \alpha}{\gamma\beta}\right)-\frac{k\lambda}{\gamma} \left(x^{-\frac{1}{\beta}}+\frac{\lambda}{\gamma}\right)^{\beta}=0.
	\label{eq:p2eq}
	\end{align}
	
	\label{prop:entry}
\end{proposition}

In the special case of $\beta=1$ such that the asset has zero drift, the results are slightly different from those in Proposition \ref{prop:entry}.

\begin{proposition}
	
	Suppose $\alpha< \beta= 1$. For the entry problem \eqref{eq:entry}:
	\begin{enumerate}[wide, labelwidth=!, labelindent=0pt, label={(\arabic*)}]
		\item If $\frac{\lambda}{\gamma}\leq \frac{\alpha}{k}(c-1)^{\alpha-1}$, there exists $p_1^{*}\in[0,\infty)$ such that the agent will enter the trade when the asset price is at or above $p_1^*$. The value function is given by \eqref{eq:exitvalfun1} on setting $\beta=1$. 
		\item If $\frac{\alpha}{k}(c-1)^{\alpha-1}<\frac{\lambda}{\gamma}<\frac{(c-1)^{\alpha-1}}{k}$, there exists a constant $C\in(0,\infty)$ independent of $\Psi$ and $R$ such that:
		\begin{enumerate}[wide, labelwidth=!, labelindent=0pt]
			\item If $\Psi< CR$, there exists $0\leq p_1^{*}<p_2^{*}<\infty$ such that the agent will enter the trade when the asset price is within the interval $[p_1^{*},p_2^{*}]$. The value function is given by \eqref{eq:exitvalfun2a} on setting $\beta=1$. 
			\item If $\Psi\geq CR$, the agent will never enter the trade. The value function is $V_2(p)=-kR^{\alpha}$.
		\end{enumerate}
		\item If $\frac{\lambda}{\gamma}\geq \frac{(c-1)^{\alpha-1}}{k}$, the agent will never enter the trade. The value function is $V_2(p)=-kR^{\alpha}$.
	\end{enumerate}
	Moreover, $p_1^*$ has the same form in Proposition \ref{prop:entry} on setting $\beta=1$, and the expression of $p_2^*$ is available in close-form which is
	\begin{align*}
	p_2^{*}=\frac{\Bigl(R+\Psi\Bigr)\Bigl[(c-1)^{\alpha-1}-k\lambda/\gamma\Bigr]}{\lambda\Bigl[k\lambda/\gamma-\alpha(c-1)^{\alpha-1}\Bigr]}.
	\end{align*}
	\label{prop:entryspecial}
\end{proposition}

{\cred
\begin{remark}
For certain types of agents such as retail investors, fixed transaction costs are typically insignificant such that their trading behaviors might be better described by the specialization of $\Psi=0$. In this case, Proposition \ref{prop:entry} and \ref{prop:entryspecial} can be simplified. See Corollary \ref{cor:nofix} and the discussion thereafter. 
\end{remark}}

The value function of the entry problem is characterized by the smallest concave majorant to the payoff function defined in \eqref{eq:g2}. Indeed, the function $v_1$ defined in \eqref{eq:g2} is simply the scaled value function of the exit problem such that $v_1(\theta)=V_1(\theta^{1/\beta}; \lambda \theta^{1/\beta}+\Psi+R)$, as discussed in Section \ref{sect:entry}. At a mathematical level, the various cases arising in Proposition \ref{prop:entry} and \ref{prop:entryspecial} are due to the different possible shapes of $v_1$ under different combinations of parameters. Some illustrations are given in Figure \ref{fig:valfun}.

Economically, the optimal entry strategy crucially depends on the level of transaction costs relative to the market and preference parameters. A fixed market entry fee in general discourages trading when the asset price is low. Paying a flat fee of \$10 to purchase an asset at \$20 is much less appealing compared to the case that the asset is trading at \$1000, because in the former case the asset has to increase in value by 50\% just for breakeven against the fixed transaction fee paid. Meanwhile, proportional transaction costs are the most significant for asset trading at high nominal price. A 10\% transaction fee charged on a million worth of property is much more expensive in monetary terms relative to the same percentage fee charged on a penny stock.

In case (1) of both Proposition \ref{prop:entry} and \ref{prop:entryspecial}, the proportional transaction costs are relatively low. Hence the agent does not mind purchasing the asset at a high nominal price. He will just avoid purchasing the asset when its price is very low due to the consideration of fixed transaction costs and therefore the purchase region is in form of $[p_1^*,\infty)$. 

In case (2)(a), proportional transaction costs start becoming significant. On the one hand, the agent avoids initiating the trade when the asset price is too low since the fixed entry cost will be too large relative to the size of the trade. On the other hand, the agent does not want to trade an expensive asset when the proportional costs are large. Upon balancing these two factors, the agent will wait when asset price is either too low or too high, and will only purchase the asset when the price first enters an interval $[p_1^*, p_2^*]$. A very interesting feature of the optimal entry strategy is that the waiting region here is disconnected. 

In case (2)(b) of Proposition \ref{prop:entry}, or case (2)(b) and (3) of Proposition \ref{prop:entryspecial}, the overall level of transaction costs is too high and hence the agent is discouraged from entering the trade in the first place. The key difference between Proposition \ref{prop:entry} and \ref{prop:entryspecial} is that when the asset has a strictly positive drift ($\beta<1\iff \mu>0$), one must impose a strictly positive fixed entry cost in order to stop the agent from trading at all price levels (if $\Psi=0$, then either case (1) or (2)(a) in Proposition \ref{prop:entry} applies in which case the agent is willing to enter the trade at a certain price level). When the asset is a statistically fair gamble ($\beta=1\iff \mu=0$), then a high proportional transaction cost alone is sufficient to discourage the agent from trading. It is interesting to note that the trading decision also depends on the agent's aspiration level $R$. Comparing case (2)(a) and case (2)(b) in Proposition \ref{prop:entry} and \ref{prop:entryspecial}, a low value of $R$ will more often lead to the ``no trading'' case. The economic interpretation is that an agent with low aspiration level (i.e. a low target benchmark) is less likely to participate trading, especially when the (proportional) costs of trading are high. In Section \ref{sect:endoR}, we briefly discuss how the aspiration level $R$ may be endogenized.

\begin{figure}[!htbp]
	\captionsetup[subfigure]{width=0.45\textwidth}
	\centering
	\subcaptionbox{Case (1) with: $\lambda=1.01$, $\gamma=0.99$, $\Psi=1$.\label{fig:case1}}{\includegraphics[scale =0.38] {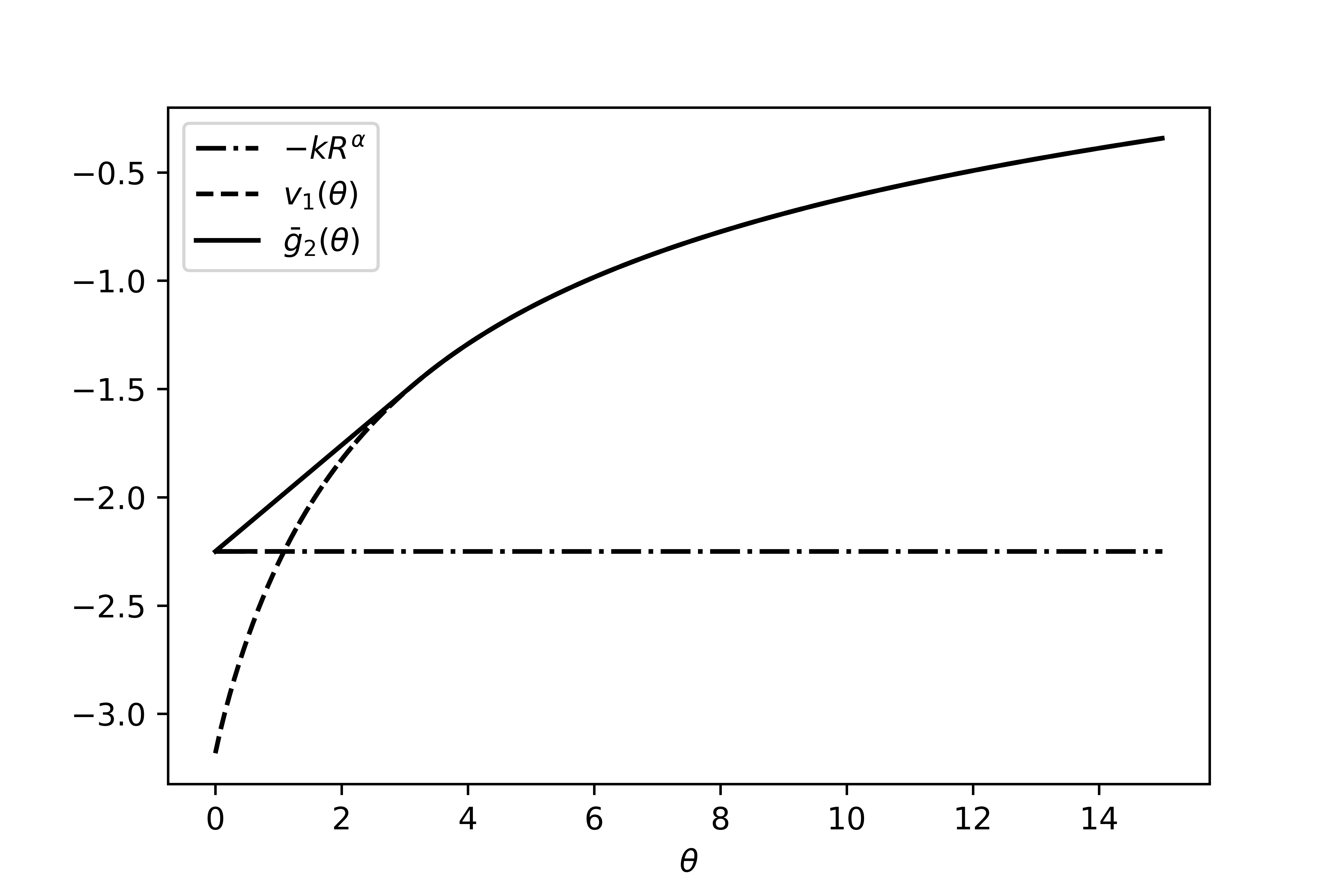}}
	\subcaptionbox{Case (2)(a) with: $\lambda=1.1$, $\gamma=0.9$, $\Psi=1$.\label{fig:case2}}{\includegraphics[scale =0.38]{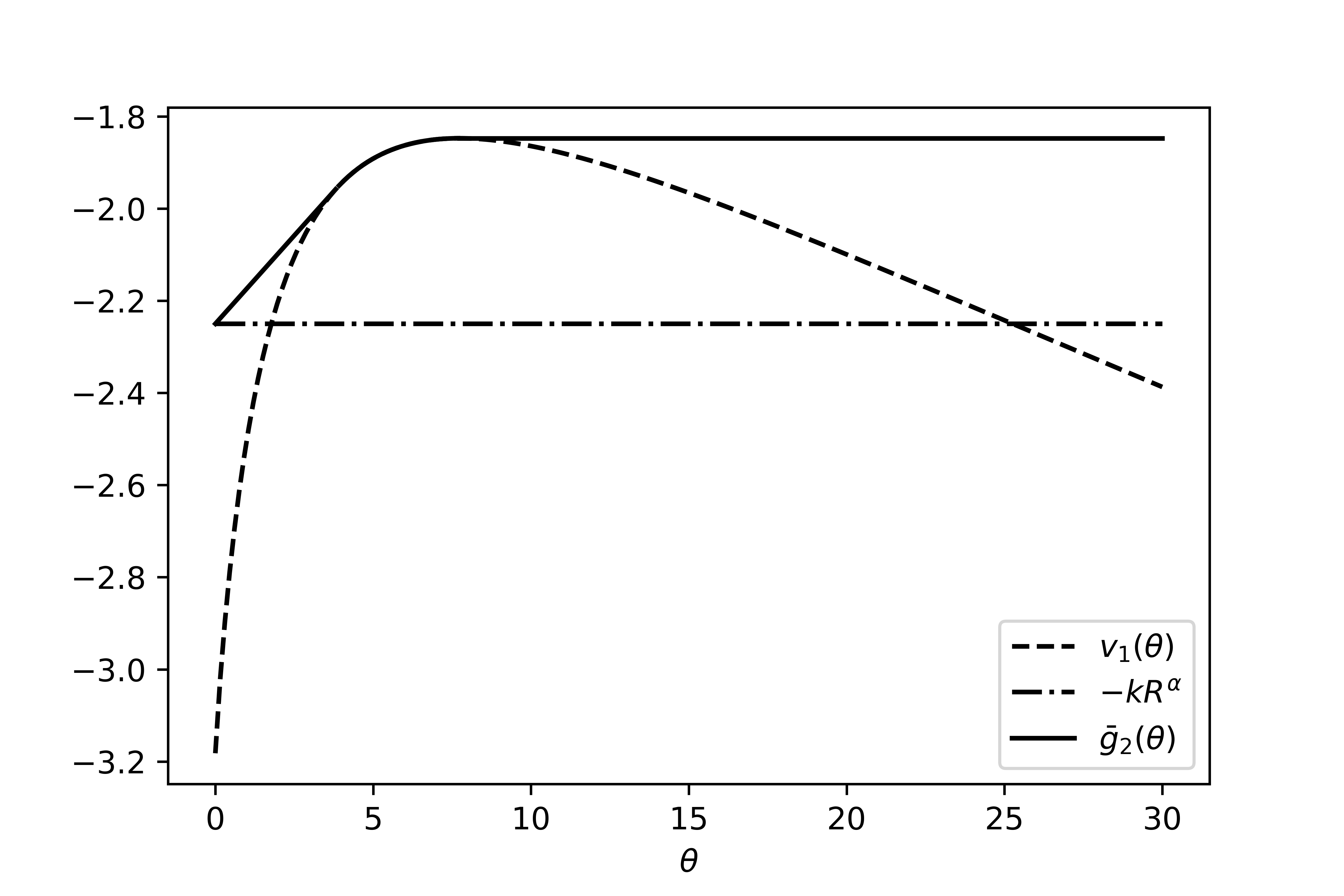}}
	\subcaptionbox{Case (2)(b) with: $\lambda=1.1$, $\gamma=0.9$, $\Psi=2.5$.\label{fig:case3}}{\includegraphics[scale =0.38]{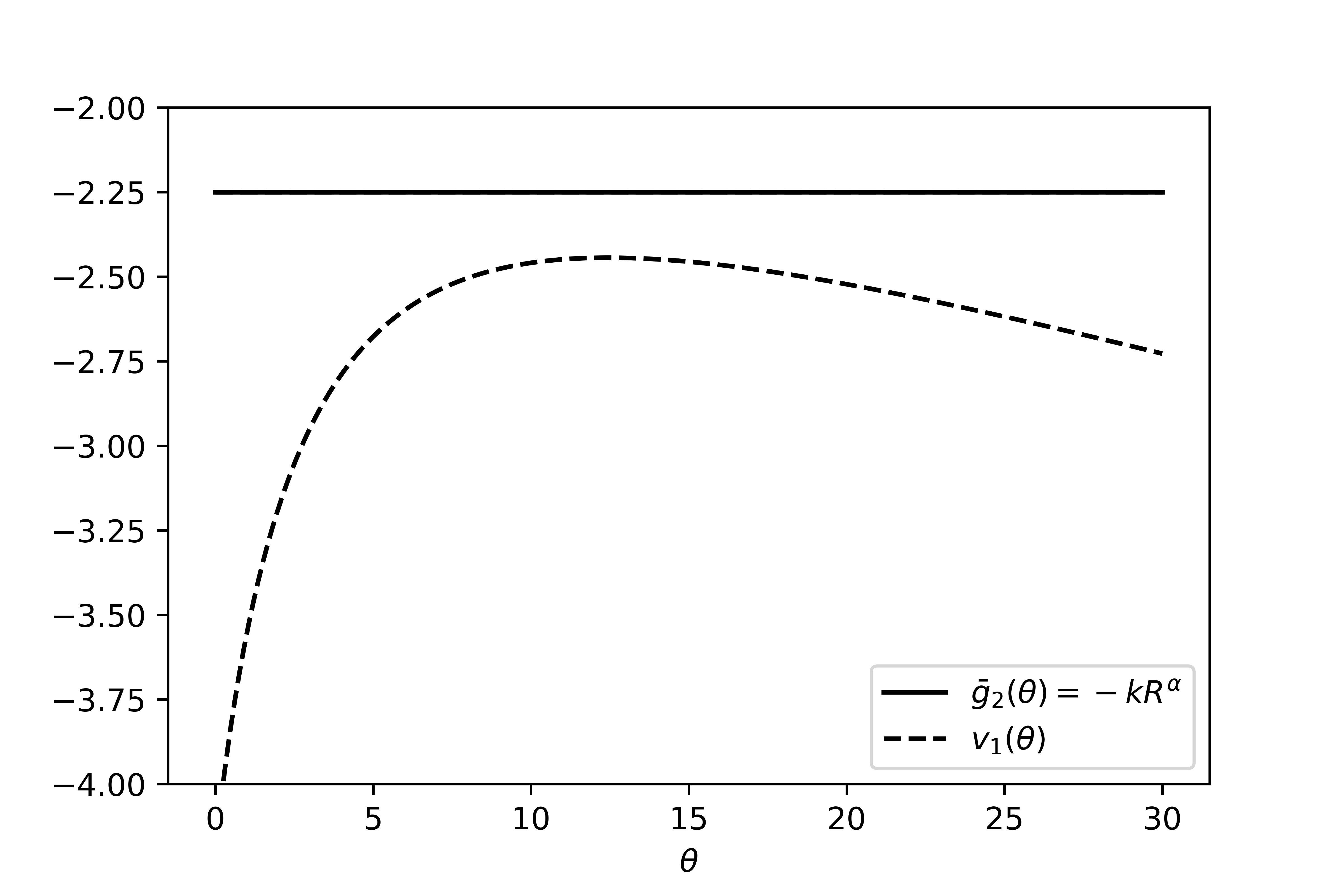}}
	
	\caption{The scaled value functions of the entry problem under different cases of Proposition \ref{prop:entry}. Base parameters used are: $\alpha=0.5$, $k=2.25$, $R=1$, $\beta=0.85$.}
	\label{fig:valfun}
\end{figure}

When viewed in conjunction with the results of well-posedness (Lemma \ref{lem:wellpose}) and the optimal exit strategy (Lemma \ref{lem:henderson}), our model can encapsulate many styles of trading behaviors. First, if $\beta\leq 0$ or $\beta<\alpha<1$ such that the whole problem is ill-posed, then we have already shown that a simple ``buy-and-hold'' strategy in form of \eqref{eq:buy_and_hold} is optimal in the sense that the attained utility level can be arbitrarily high.

In case (1) or (2)(a) of Proposition \ref{prop:entry} and \ref{prop:entryspecial}, if the asset price starts below $p_1^*$ at time zero, then the agent will purchase the asset when its price level increases to $p_1^*$. Note that, similar to Remark \ref{remark:bm}, the price process $P$ may not reach a fixed level $p_1^*>P_0$ in finite time. In this case the entry strategy will not be executed and the payoff to the agent is zero. But otherwise if a purchase is realized, then at the time of purchase the reference point is set as $H=\lambda p_1^*+\Psi+R$. Then due to Lemma \ref{lem:henderson}, the agent is looking to sell this asset later when its price level further increases to $\frac{cH}{\gamma}=\frac{c(\lambda p_1^*+\Psi+R)}{\gamma}$. Since $c>1$, $\gamma\leq 1$ and $\lambda\geq 1$, it is clear that the target sale level $\frac{c(\lambda p_1^*+\Psi+R)}{\gamma}$ is strictly larger than $p_1^*$. This trading rule is thus a momentum strategy in form of ``buy high and sell higher''.

If the asset price starts above $p_2^*$ at time zero in case (2)(a), then the agent will buy the asset when its price level drops to $p_2^*$ and later to sell the asset when it increases to $\frac{c(\lambda p_2^*+\Psi+R)}{\gamma}>p_2^*$. This is a counter-trend trading strategy in form of ``buy low sell high''.

Finally, in the high transaction cost cases (case (2)(b) of Proposition \ref{prop:entry}, and case (2)(b) or (3) of Proposition \ref{prop:entryspecial}) the agent will never participate in trading at any asset price level. 

The various cases above are generated by different level of transaction costs relative to the other model parameters. The following two corollaries further highlight the role of transaction costs in relationship to the optimal trading strategies.  
\begin{corollary}
	If $\lambda=\gamma=1$, the agent will purchase the asset when its price level is at or above $p_1^*$ for some $p_1^*\in[0,\infty)$.
	\label{cor:noprop}
\end{corollary}
\begin{proof}
	The result will follow if we can show that $[\frac{\alpha}{\beta k}c^{1-\beta}(c-1)^{\alpha-1}]^{\frac{1}{\beta}}>1$ such that case (1) of Proposition \ref{prop:entry} and \ref{prop:entryspecial} always applies when $\lambda=\gamma=1$. The required inequality is
	\begin{align*}
	\left[\frac{\alpha}{\beta k}c^{1-\beta}(c-1)^{\alpha-1}\right]^{\frac{1}{\beta}}>1 &\iff \frac{\alpha}{\beta }c^{1-\beta}(c-1)^{\alpha-1}>k \\
	&\iff  \frac{\alpha}{\beta }c^{1-\beta}(c-1)^{\alpha-1}>\frac{\alpha}{\beta} c(c-1)^{\alpha-1}-(c-1)^{\alpha} \\
	&\iff (c-1)^{\alpha}- \frac{\alpha}{\beta }c(c-1)^{\alpha-1}(1-c^{-\beta})>0
	\end{align*}
	where we have used \eqref{eq:eqC}. Using simple calculus we can show that $$F(x):=(x-1)^{\alpha}- \frac{\alpha}{\beta }x(x-1)^{\alpha-1}(1-x^{-\beta})>0$$ for all $x>1$. This concludes the proof. \qed
\end{proof}

\begin{corollary}
	Under the parameter combinations that $p_1^*$ is well-defined, if $\Psi=0$ then we have $p_1^*=0$.
	\label{cor:nofix}
\end{corollary}
\begin{proof}
	This follows immediately from Proposition \ref{prop:entry} and \ref{prop:entryspecial} by observing that $x=0$ is the solution to \eqref{eq:p1eq} when $\Psi=0$. \qed
\end{proof}

From Corollary \ref{cor:noprop}, if there is no proportional transaction cost then the agent does not care about entering the trade at a high nominal price level because he no longer needs to worry about the large magnitude of trading fee arising from the proportional nature of the transaction costs. Hence ``buy low sell high'' will not be observed as an optimal strategy. Similarly, Corollary \ref{cor:nofix} suggests that in absence of fixed market entry fee the agent is happy to purchase an asset of any arbitrarily low price (in the non-degenerate case) since now he does not need to take the size of the trade into account against any fixed cost for breakeven consideration. Thus ``buy high sell higher'' will not be an optimal strategy in this special case.

{\cb
\begin{remark}
If we further assume $R=\Psi=0$ (where the main results in Proposition \ref{prop:entry} and \ref{prop:entryspecial} still hold except \eqref{eq:g2} will have a different and simpler expression), then only case (1) or case (2)(b) of Proposition \ref{prop:entry} can arise under $\beta<1$. Alternatively, only case (1), case (2)(b) or case (3) of Proposition \ref{prop:entryspecial} can arise under $\beta=1$. But $p_1^*=0$ when $\Psi=0$ as in Corollary \ref{cor:nofix}. The entry behavior is thus trivial where the agent either is willing to purchase the asset at any price level if $\frac{\lambda}{\gamma}\leq[\frac{\alpha}{\beta k}c^{1-\beta}(c-1)^{\alpha-1}]^{\frac{1}{\beta}}$, or never enters the trade if  $\frac{\lambda}{\gamma}>[\frac{\alpha}{\beta k}c^{1-\beta}(c-1)^{\alpha-1}]^{\frac{1}{\beta}}$. From the modeling perspective, the constant $R+\Psi$ is a crucial component of the endogenized reference point which enables the model to produce non-trivial purchase behaviors. See Section \ref{sect:realization_uti} as well.
\end{remark}
}

The decomposition of the original purchase-and-sale problem \eqref{eq:valfun} into two sub-problems of optimal exit and optimal entry is based on some economic heuristics described in Section \ref{sect:exit} and \ref{sect:entry}. To close this section, we formally show that the value function of the entry problem in Proposition \ref{prop:entry} and \ref{prop:entryspecial} indeed corresponds to the value function of the sequential optimal stopping problem \eqref{eq:valfun}.
\begin{theorem}
	Recall the definition of $V_2$ which is defined in Proposition \ref{prop:entry} and \ref{prop:entryspecial} as the value function of the entry problem \eqref{eq:entry}. We have $\mathcal{V}(p)=V_2(p)$ where $\mathcal{V}$ is the value function of the sequential optimal stopping problem \eqref{eq:valfun}. The optimal purchase and sale rules are given by
	\begin{align}
	\begin{cases}
	\tau^*:=\inf\Bigl\{t\geq 0: V_2(P_t)=G_2(P_t)\Bigr\};&\\
	\nu^*:=\inf\left\{t\geq \tau^*: P_t \geq \frac{c}{\gamma}(\lambda P_{\tau^*}+\Psi+R)\right\}.
	\end{cases}
	\label{eq:optstoptime}
	\end{align}
	
\end{theorem}

\begin{proof}
Starting from \eqref{eq:obj}, we have for any $\tau,\nu\in\mathcal{T}$ with $\tau\leq \nu$ that
\begin{align*}
&J(p;\tau,\nu)\\
&=\mathbb{E}\left[U\left(\gamma P_{\nu} {\mathbbm 1}_{\{\tau<\infty,\nu<\infty\}}-\left(\lambda P_{\tau}+\Psi\right) {\mathbbm 1}_{\{\tau<\infty\}}-R\right)\Bigl | P_0=p\right]\\
&\leq \mathbb{E}\left[U\left(\left(\gamma P_{\nu}-\lambda P_{\tau}-\Psi\right) {\mathbbm 1}_{\{\tau<\infty\}}-R\right)\Bigl | P_0=p\right]\\
&=\mathbb{P}[\tau<\infty|P_0=p]\mathbb{E}\left[U\left(\gamma P_{\nu}-\lambda P_{\tau}-\Psi -R\right)\Bigl | P_0=p,\tau<\infty\right]\\
&\qquad+\mathbb{P}[\tau=\infty|P_0=p]U(-R) \\
&=\mathbb{E}\left\{\mathbb{E}\left[U\left(\gamma P_{\nu}-\lambda P_{\tau}-\Psi-R\right)\Bigl | P_\tau,\tau<\infty\right] \Bigl| P_0=p,\tau<\infty\right\}\\
&\qquad\times \mathbb{P}[\tau<\infty|P_0=p]+\mathbb{P}[\tau=\infty|P_0=p]U(-R) \\
&= \mathbb{E}\left\{\mathbb{E}\left[G_1(\gamma P_{\nu};\lambda P_{\tau}+\Psi+R)\Bigl | P_\tau,\tau<\infty\right] \Bigl| P_0=p,\tau<\infty\right\}\\
&\qquad \times \mathbb{P}[\tau<\infty|P_0=p]+\mathbb{P}[\tau=\infty|P_0=p]U(-R)\\
&\leq\mathbb{E}\left\{\sup_{\nu\in\mathcal{T}:\nu\geq \tau}\mathbb{E}\left[G_1(\gamma P_{\nu};\lambda P_{\tau}+\Psi+R)\Bigl | P_\tau,\tau<\infty\right] \Bigl| P_0=p,\tau<\infty\right\}\\
&\qquad\times \mathbb{P}[\tau<\infty|P_0=p] +\mathbb{P}[\tau=\infty|P_0=p]U(-R) .
\end{align*}
But using the Markovian structure of the problem, 
\begin{align*}
&\sup_{\nu\in\mathcal{T}:\nu\geq \tau}\mathbb{E}\left[G_1(\gamma P_{\nu};\lambda P_{\tau}+\Psi+R)\Bigl | P_\tau=s,\tau<\infty\right]\\
&=\sup_{\nu\in\mathcal{T}:\nu\geq 0}\mathbb{E}\left[G_1(\gamma P_{\nu};\lambda P_{0}+\Psi+R)\Bigl | P_0=s\right]=V_1(s;\lambda s+\Psi+R).
\end{align*}
Then we have
\begin{align*}
J(p;\tau,\nu)&\leq  \mathbb{E}\left[V_1(P_{\tau};\lambda P_{\tau}+\Psi+R) \Bigl | P_0=p,\tau<\infty\right]\\
&\qquad\times \mathbb{P}[\tau<\infty|P_0=p]+\mathbb{P}[\tau=\infty|P_0=p]U(-R)\\
&\leq \mathbb{E}\left[\max\left(V_1(P_{\tau};\lambda P_{\tau}+\Psi+R),U(-R)\right)\Bigl | P_0=p\right]\\
&\leq \sup_{\tau\in\mathcal{T}}\mathbb{E}\left[\max(V_1(P_{\tau};\lambda P_{\tau}+\Psi+R),U(-R) )\Bigl | P_0=0\right]\\
&=\sup_{\tau\in\mathcal{T}}\mathbb{E}[G_2(P_{\tau})|P_0=p]=V_2(p).
\end{align*}
Taking supremum on both sides leads to $\mathcal{V}(p)\leq V_2(p)$.

To complete the proof, it is sufficient to show $J(p;\tau^*,\nu^*)=V_2(p)$ where $\tau^*,\nu^*$ are defined in \eqref{eq:optstoptime}. This can be directly verified under the various cases covered in Proposition \ref{prop:entry} and \ref{prop:entryspecial} with different initial price level $p$.

As an example, we cover Case (2)(a) in Proposition \ref{prop:entry} and a level of $p$ such that $p<p_1^*$. We can deduce from the shape of $V_2$ in this case that $\tau^*=\inf\{t\geq 0: P_t\notin(0,p_1^*)\}$. Since $\beta>0$, there are two possibilities: the asset price reaches the purchase target level $p_1^*$ in a finite time which happens with probability $\mathbb{P}[\tau^*<\infty|P_0=p]=\frac{p^{\beta}}{(p_1^*)^\beta}$; or it never reaches $p_1^*$ in a finite time such that the agent never enters the trade where he faces a realized utility of $U(-R)$, and this happens with probability $\mathbb{P}[\tau^*=\infty|P_0=p]=1-\frac{p^{\beta}}{(p_1^*)^\beta}$. In the first scenario, after the agent purchases the asset at price $p_1^*$ he will sell the asset when its price further increases to $x^*:=\frac{c}{\gamma}(\lambda p_1^*+\Psi+R)$. The conditional probability of a successful sale is $\mathbb{P}[\nu^*<\infty|P_{\tau^*}=p_1^*]=\frac{(p_1^*)^{\beta}}{(x^*)^\beta}$ where the realized utility is $U(\gamma x^*-\lambda p_1^*-\Psi-R)$. Otherwise, the target sale level is never reached with conditional probability $\mathbb{P}[\nu^*=\infty|P_{\tau^*}=p_1^*]=1-\frac{(p_1^*)^{\beta}}{(x^*)^\beta}$ where the realized utility becomes $U(-\lambda p_1^*-\Psi-R)$. Then we can directly compute
\begin{align*}
&J(p;\tau^*,\nu^*)\\
&=\mathbb{E}\left[U\left(\gamma P_{\nu^*} {\mathbbm 1}_{\{\tau^*<\infty,\nu^*<\infty\}}-\left(\lambda P_{\tau^*}+\Psi\right) {\mathbbm 1}_{\{\tau^*<\infty\}}-R\right)\Bigl | P_0=p\right]\\
&=\mathbb{P}[\tau^*=\infty|P_0=p]U(-R)\\
&\qquad+\mathbb{P}[\tau^*<\infty,\nu^*=\infty|P_0=p]U(-\lambda p_1^*-\Psi-R)\\
&\qquad +\mathbb{P}[\tau^*<\infty,\nu^*<\infty|P_0=p]U(\gamma x^*-\lambda p_1^*-\Psi-R)\\
&=-\frac{(p_1^*)^\beta-p^{\beta}}{(p_1^*)^\beta}kR^{\alpha}-k\left(\frac{p^{\beta}}{(p_1^*)^\beta}\right)\left(\frac{(x^*)^\beta-(p_1^*)^{\beta}}{(x^*)^\beta}\right)(\lambda p_1^*+\Psi+R)^{\alpha}\\
&\qquad +\left(\frac{p^{\beta}}{(p_1^*)^\beta}\right)\left(\frac{(p_1^*)^{\beta}}{(x^*)^\beta}\right)(\gamma x^*-\lambda p_1^*-\Psi-R)^{\alpha}\\
&=\frac{p^{\beta}}{(p_1^*)^\beta}\left[\frac{(p_1^*)^{\beta}}{(x^*)^\beta}(\gamma x^*-\lambda p_1^*-\Psi-R)^{\alpha}-k\frac{(x^*)^\beta-(p_1^*)^{\beta}}{(x^*)^\beta}(\lambda p_1^*+\Psi+R)^{\alpha}\right]\\
&\qquad -\frac{(p_1^*)^\beta-p^{\beta}}{(p_1^*)^\beta}kR^{\alpha}\\
&=\frac{p^{\beta}}{(p_1^*)^\beta}\left[(\gamma p_1^*)^{\beta}c^{-\beta}((c-1)^{\alpha}+k)(\lambda p_1^*+\Psi+R)^{\alpha-\beta}-k(\lambda p_1^*+\Psi+R)^{\alpha}\right]\\
&\qquad -\frac{(p_1^*)^\beta-p^{\beta}}{(p_1^*)^\beta}kR^{\alpha}\\
&=\frac{p^{\beta}}{(p_1^*)^\beta}\left[(\gamma p_1^*)^{\beta}\frac{\alpha}{\beta}c^{1-\beta}(c-1)^{\alpha-1}(\lambda p_1^*+\Psi+R)^{\alpha-\beta}-k(\lambda p_1^*+\Psi+R)^{\alpha}\right]\\
&\qquad -\frac{(p_1^*)^\beta-p^{\beta}}{(p_1^*)^\beta}kR^{\alpha}\\
&=\frac{p^{\beta}}{(p_1^*)^\beta}\Bigl[(\gamma p_1^*)^{\beta}\frac{\alpha}{\beta}c^{1-\beta}(c-1)^{\alpha-1}(\lambda p_1^*+\Psi+R)^{\alpha-\beta}\\
&\qquad-k(\lambda p_1^*+\Psi+R)^{\alpha}+kR^{\alpha}\Bigr]-kR^{\alpha}\\
&=V_2(p)
\end{align*}
where we have used the definition that $x^*:=\frac{c}{\gamma}(\lambda p_1^*+\Psi+R)$ and the fact that $c$ is the solution to \eqref{eq:eqC}. The other cases can be handled similarly. \qed
\end{proof}

\section{Comparative statics of the optimal trading strategies}
\label{sect:compstat}

The critical trading boundaries in Proposition \ref{prop:entry} and \ref{prop:entryspecial}, although not being available in close-form in general, can be analyzed to shed lights on the comparative statics of the optimal trading strategies with respect to a few underlying model parameters. The proof of the following proposition is given in the appendix. 

\begin{proposition}
	Under the parameters combination such that $p_1^*$ and/or $p_2^*$ are well-defined. We have:
	\begin{enumerate}[wide, labelwidth=!, labelindent=0pt]
		\item $p_1^*$ is decreasing in $\gamma$ and increasing in $\Psi$.
		\item $p_2^*$ is decreasing in $\lambda$, increasing in $\gamma$ and increasing in $\Psi$.
	\end{enumerate}
	\label{prop:compstat}
\end{proposition}

Figure \ref{fig:compgamma} shows the critical purchase boundary $p_1^*$ and $p_2^*$ as $\gamma$ varies. For very large value of $\gamma$ such that the condition in case (1) of Proposition \ref{prop:entry} holds, the optimal strategy is to buy the asset when its price exceeds $p_1^*$ and that the agent is willing to enter the trade no matter how high the price is. Once $\gamma$ is smaller than a certain critical value (labeled by the vertical dotted line on the figure), parameters condition in case (2)(a) of Proposition \ref{prop:entry} applies. The optimal strategy now becomes to purchase the asset only when its price is within a bounded range $[p_1^*,p_2^*]$. As $\gamma$ further decreases, $p_1^*$ increases while $p_2^*$ decreases so that the purchase region $[p_1^*,p_2^*]$ shrinks. Once $\gamma$ reaches another critical value, $p_1^*$ and $p_2^*$ converge and the purchase region diminishes entirely. This corresponds to case (2)(b) of Proposition \ref{prop:entry} that the agent will not enter the trade at any price level. As a reminder, the constant $C$ in case (2) of Proposition \ref{prop:entry} and \ref{prop:entryspecial} depends on $\lambda$ and $\gamma$. Increasing $\frac{\lambda}{\gamma}$ will result in a switch from case (2)(a) to case (2)(b).

We do not mention in Proposition \ref{prop:compstat} the effect of $\lambda$ on $p_1^*$. While the example in Figure \ref{fig:complambda} shows that $p_1^*$ is increasing in $\lambda$, numerical results show that $p_1^*$ is not monotonic in $\lambda$ in general. See Figure \ref{fig:lam_countereg}. Hence, when viewed in conjunction with $p_2^*$ the purchase region $[p_1^*, p_2^*]$ does not necessarily shrink uniformly when proportional cost on purchase increases, i.e. the agent may not delay the purchase decision. Similar observations regarding potential non-monotonicity of trading decisions with respect to (proportional) transaction costs are made by Hobson, Tse and Zhu~\cite{hobson-tse-zhu19a, hobson-tse-zhu19b} in the context of portfolio optimization. 

\begin{figure}[!htbp]
	\captionsetup[subfigure]{width=0.45\textwidth}
	\centering
	\subcaptionbox{Comparative statics with respect to $\lambda$. The dotted vertical line marks the critical value of $\lambda$ above which $p_2^*$ exists.\label{fig:complambda}}{\includegraphics[scale =0.38] {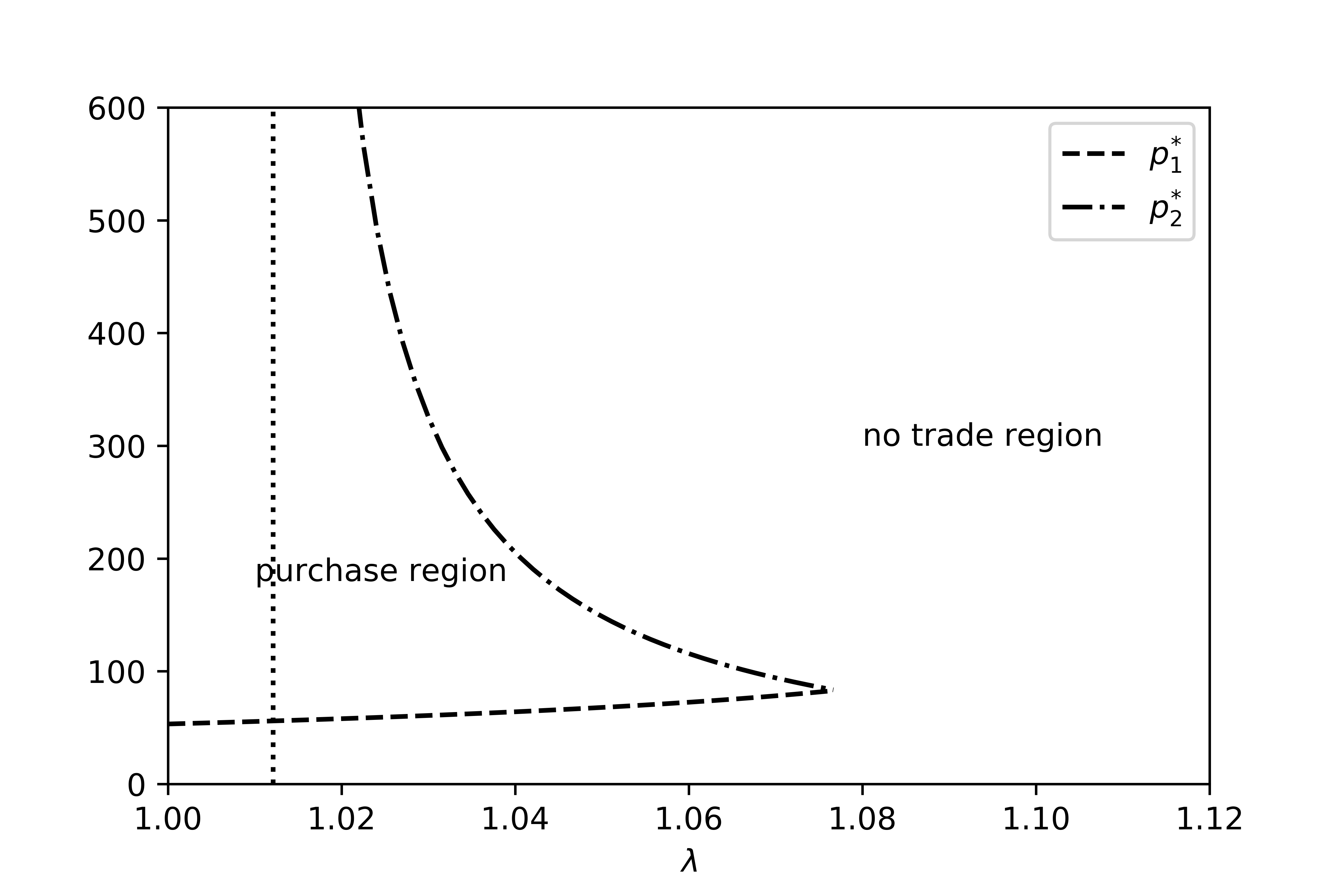}}
	\subcaptionbox{Comparative statics with respect to $\gamma$. The dotted vertical line marks the critical value of $\gamma$ below which $p_2^*$ exists.\label{fig:compgamma}}{\includegraphics[scale =0.38]{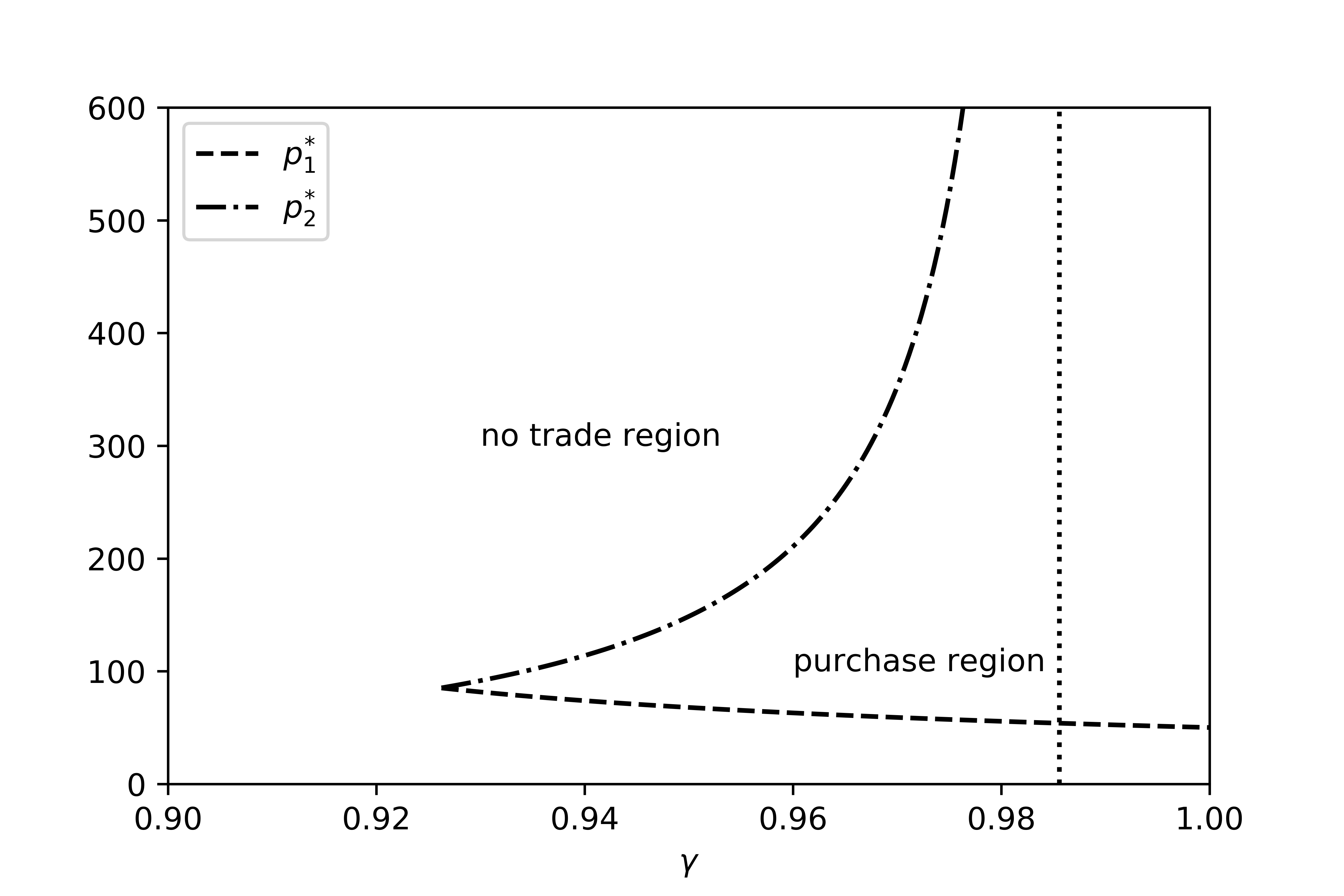}}
	\subcaptionbox{Comparative statics with respect to $\Psi$.\label{fig:comppsi}}{\includegraphics[scale =0.38]{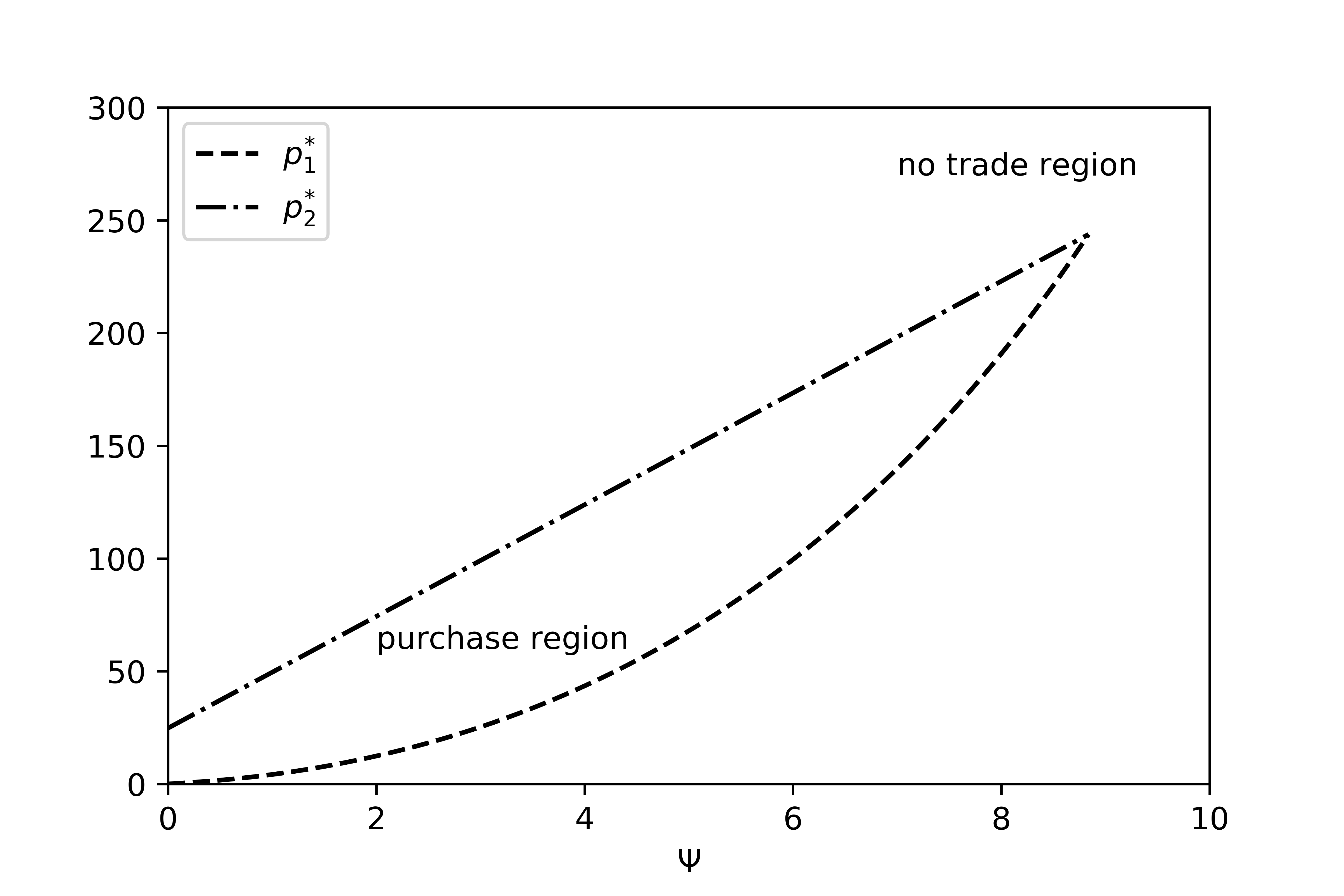}}
	
	\caption{Comparative statics of the optimal purchase boundaries $p_1^*$ and $p_2^*$. Base parameters used are: $\alpha=0.5$, $k=2.25$, $R=1$, $\beta=0.85$, $\lambda=1.05$, $\gamma=0.95$, $\Psi=5$.}
	\label{fig:compstat}
\end{figure}

\begin{figure}[!htbp]
	\centering
	\includegraphics[scale =0.4]{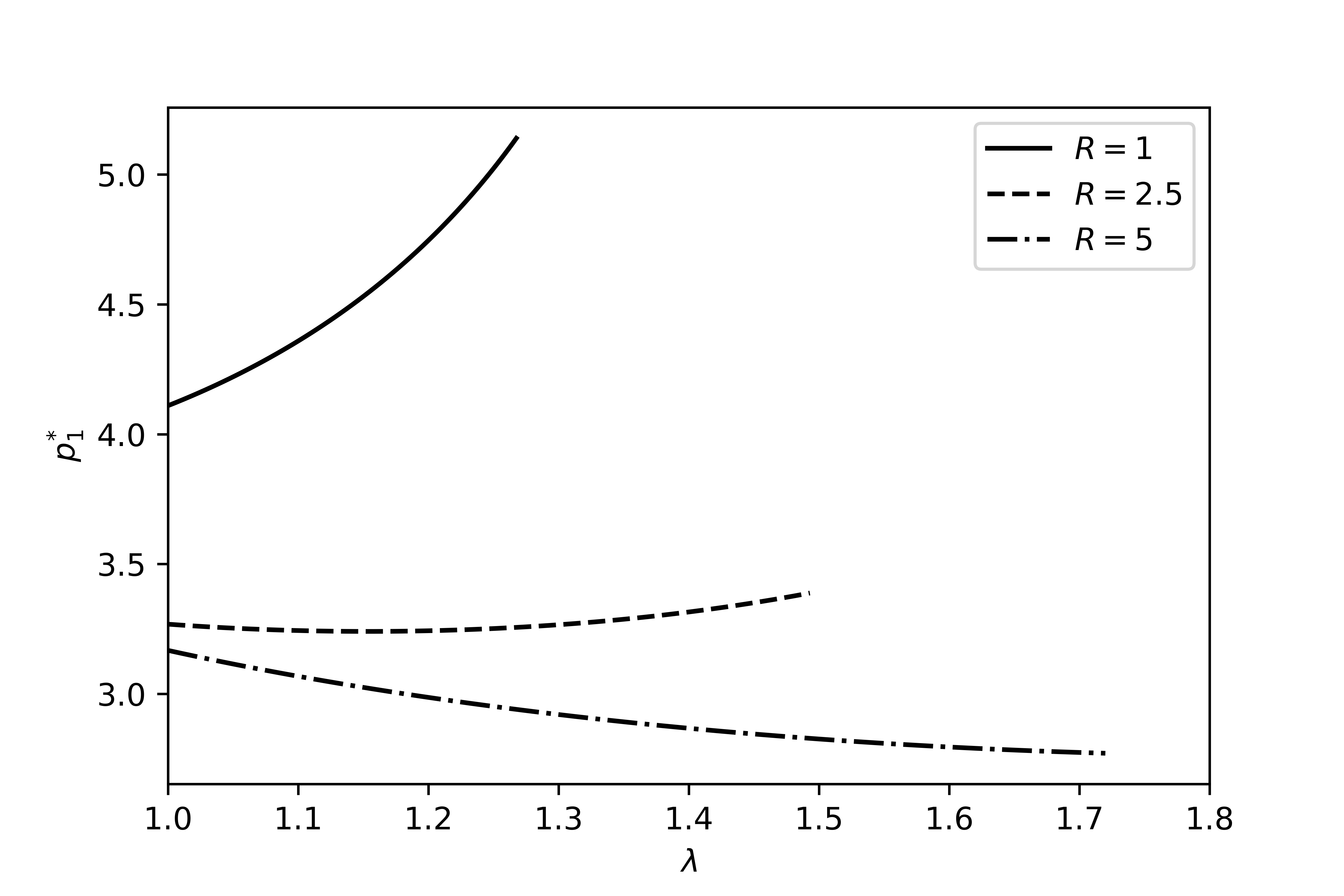}
	\caption{An illustration that $p_1^*$ the lower bound of the purchase region may not be monotonic with respect to $\lambda$. Base parameters used are: $\alpha=0.5$, $k=2.25$, $\beta=0.85$, $\lambda=1.05$, $\gamma=0.95$, $\Psi=1$.}
	\label{fig:lam_countereg}
\end{figure}

Similarly, we can also examine the impact of the fixed market entry cost on the purchase decision. As shown in Figure \ref{fig:comppsi}, $p_1^*$ and $p_2^*$ are both increasing in $\Psi$. The fact that $p_2^*$ is increasing in $\Psi$ is indeed somewhat counter-intuitive and it has a few interesting policy implications.
Suppose there is a market regulator who wants to discourage the agent from purchasing the asset (for example, as a mean to cool down a highly speculative real estate market). A natural measure to curb trading participation is to increase transaction costs. However, Figure \ref{fig:compstat} reveals that there is a subtle difference between the impact of proportional and fixed transaction cost on the agent's trading behavior. 

From Figure \ref{fig:compgamma}, the effect of increasing proportional transaction cost on sale (i.e. decreasing $\gamma$) is ``monotonic'' in terms of changing the trading decision of the agent. At any given current asset price level, decreasing $\gamma$ (while all other parameters are held fixed) can only take the agent from the purchase region to the no trade region. Increasing proportional transaction cost on sale can therefore unambiguously suppress the trading activities in the market. 

In contrast, the impact of the fixed market entry cost is somewhat unclear. Take Figure \ref{fig:comppsi} as an example and suppose the current price of the asset is \$100. If there is no fixed market entry fee initially (i.e. $\Psi=0$), the agent will not participate in trading as he is in the no trade region. However, a policy of increasing $\Psi$ from zero to \$4 will now put the agent in the purchase region such that he is willing to purchase the asset immediately (given that the current asset price stays the same at \$100). It is exactly opposite to the intended outcome of the market regulator because the increase in $\Psi$ actually encourages trading participation.

The rationale behind this phenomenon is as follows: the speculative agent is evaluating the trading opportunity by comparing the potential sale proceed against the reference point which is partly determined by the initial capital required to enter the trade, given by $\lambda p+\Psi$ if the purchase price is $p$. Increasing the fixed market entry fee $\Psi$ increases the total costs required to initiate the trade and it results in a higher effective reference point. However, under Prospect Theory framework the agent's risk attitude is tied to the level of the reference point. When the purchase price is kept as the same, a large $\Psi$ will put the agent deeper in the loss territory over which he becomes highly risk-seeking. Thus he will give a higher valuation to the opportunity to enter the speculative trade and hence is more prone to immediate trade participation.

Of course, increasing $\Psi$ will also decrease the potential profit of the trade which is economically unfavorable. As the fixed cost further increases, say from $\Psi=4$ to $\Psi=8$, the agent will eventually enter the no trade region again. Hence the precise effect of $\Psi$ on the trading decision is ambiguous governed by the two opposing forces of increasing agent's risk appetite versus decreasing profitability. When the economy is consisting of multiple agents with heterogeneous preferences, it is unclear whether increasing the fixed transaction costs can uniformly discourage trading participation for all agents. 

The non-monotonicity of $p_1^*$ with respect to $\lambda$ the proportional transaction cost on purchase also implies that an increase in $\lambda$ can potentially bring certain agents from the no trade region to the purchase region. The rationale is the same as the above that $\lambda$ partly determines the cost of purchase and in turn the reference point. Hence increasing $\lambda$ might actually make an agent find a speculative opportunity attractive.

{\cred
\section{Extensions}
\label{sect:extend}

In this section, we briefly discuss several variations of our baseline model.

\subsection{Risky asset with negative drift and voluntary stop-loss}
\label{sect:negative}

Among all the non-trivial strategies derived in our baseline setup, the agent will never voluntarily realize loss. This is not entirely realistic, and one way to enable the model to generate stop-loss behavior is to allow the excess return of the asset to be negative as inspired by Henderson~\cite{henderson12}. Negative excess return is equivalent to $\beta:=1-\frac{2\mu}{\sigma^2}>1$ in our setup. 

\begin{lemma}
Suppose the model parameters are such that $\beta>1$ instead. For the exit problem \eqref{eq:exit}, the agent will sell the asset when its price level first exits the interval $(\frac{c_1 H}{\gamma},\frac{c_2 H}{\gamma})$ where $0<c_1<1<c_2$ are the solutions to the system of equations $$\frac{k\alpha}{\beta}c_1^{1-\beta}(1-c_1)^{1-\alpha}=\frac{\alpha}{\beta}c_2^{1-\beta}(c_2-1)^{\alpha-1}=\frac{(c_2-1)^{\alpha}+k(1-c_1)^{\alpha}}{c_2^{\beta}-c_1^{\beta}}.$$ The value function is
\begin{align}
V_1(p;H)=
\begin{cases}
-k(H-\gamma p)^{\alpha},& p<\frac{c_1 H}{\gamma};\\
-kH^{\alpha}(1-c_1)^{\alpha}&\\
\qquad+\frac{H^{\alpha-\beta}\bigl[(c_2-1)^{\alpha}+k(1-c_1)^{\alpha}\bigr]\bigl[(\gamma p)^{\beta}-c_1^{\beta} H^{\beta}\bigr]}{c_2^{\beta}-c_1^{\beta}},& \frac{c_1 H}{\gamma}\leq p\leq \frac{c_2 H}{\gamma};\\
(\gamma p-H)^{\alpha},& p>\frac{c_2 H}{\gamma}.
\end{cases}
\label{eq:exitnegative}
\end{align}
\end{lemma}
\begin{proof}
It follows from a slight extension of Proposition 3 in \cite{henderson12}.\qed
\end{proof}

Given that purchase of the asset has occurred at some time $t=\tau$ which determines the reference level for the exit problem via $H=\lambda P_{\tau}+\Psi+R$, the agent is willing to impose a stop-loss level at $\frac{c_1 H}{\gamma}$ if the asset has a negative drift. Of course, it is not clear upfront whether the agent is willing to purchase an asset with negative drift in the first place. To understand the purchase behavior, one needs to solve the entry problem $$V_2(p):=\sup_{\tau\in\mathcal{T}}\mathbb{E}\Bigl[\max\bigl\{V_1(P_{\tau};\lambda P_{\tau}+\Psi+R), U(-R)\bigr\}\Bigl|P_0=p\Bigr]$$ where $V_1$ is now given by \eqref{eq:exitnegative}. Although the principle of martingale method still applies, it is now more challenging to analyze the problem thoroughly to explicitly characterize different possible shapes of the scaled payoff function under different combinations of model parameters. We hence opt to obtain numerical solutions by solving the underlying variational inequality $$\min\left(-\frac{\sigma^2 p^2}{2}V_2''(p)-\mu pV_2'(p),V_2(p)-G_2(p)\right)=0$$ by standard projected successive over-relaxation (PSOR) method and infer the optimal purchase (stopping) and no trade (continuation) region by numerically identifying the set $\{p\geq 0: V_2(p)=G_2(p)\}$ and $\{p\geq 0: V_2(p)>G_2(p)\}$ respectively.

\begin{figure}[!htbp]
	\captionsetup[subfigure]{width=0.45\textwidth}
	\centering
	\subcaptionbox{Optimal purchase (entry) boundaries as a function of $\beta$.\label{fig:negative_entry}}{\includegraphics[scale =0.38] {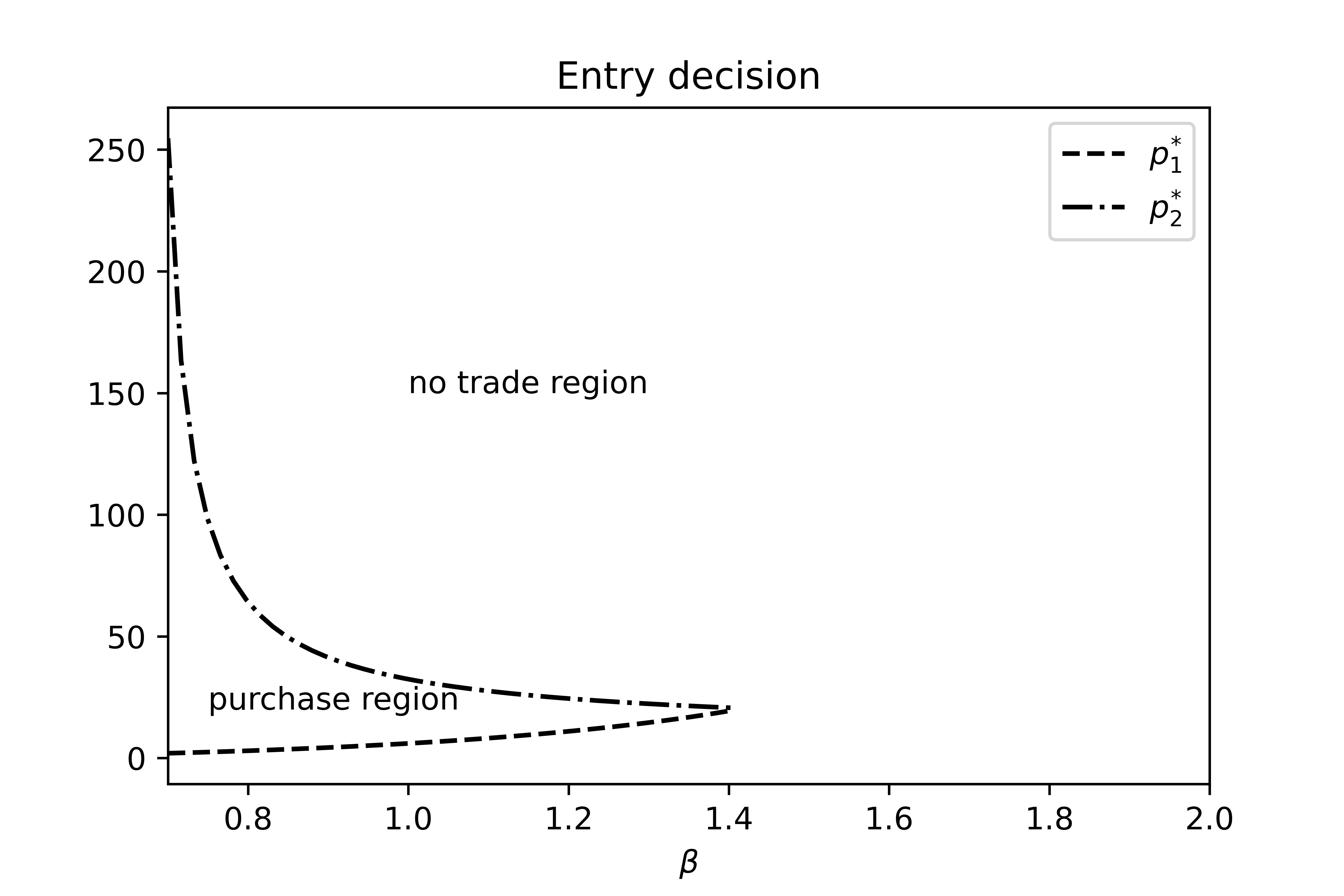}}
	\subcaptionbox{Optimal sale (exit) boundaries as a function of $\beta$.\label{fig:negative_exit}}{\includegraphics[scale =0.38]{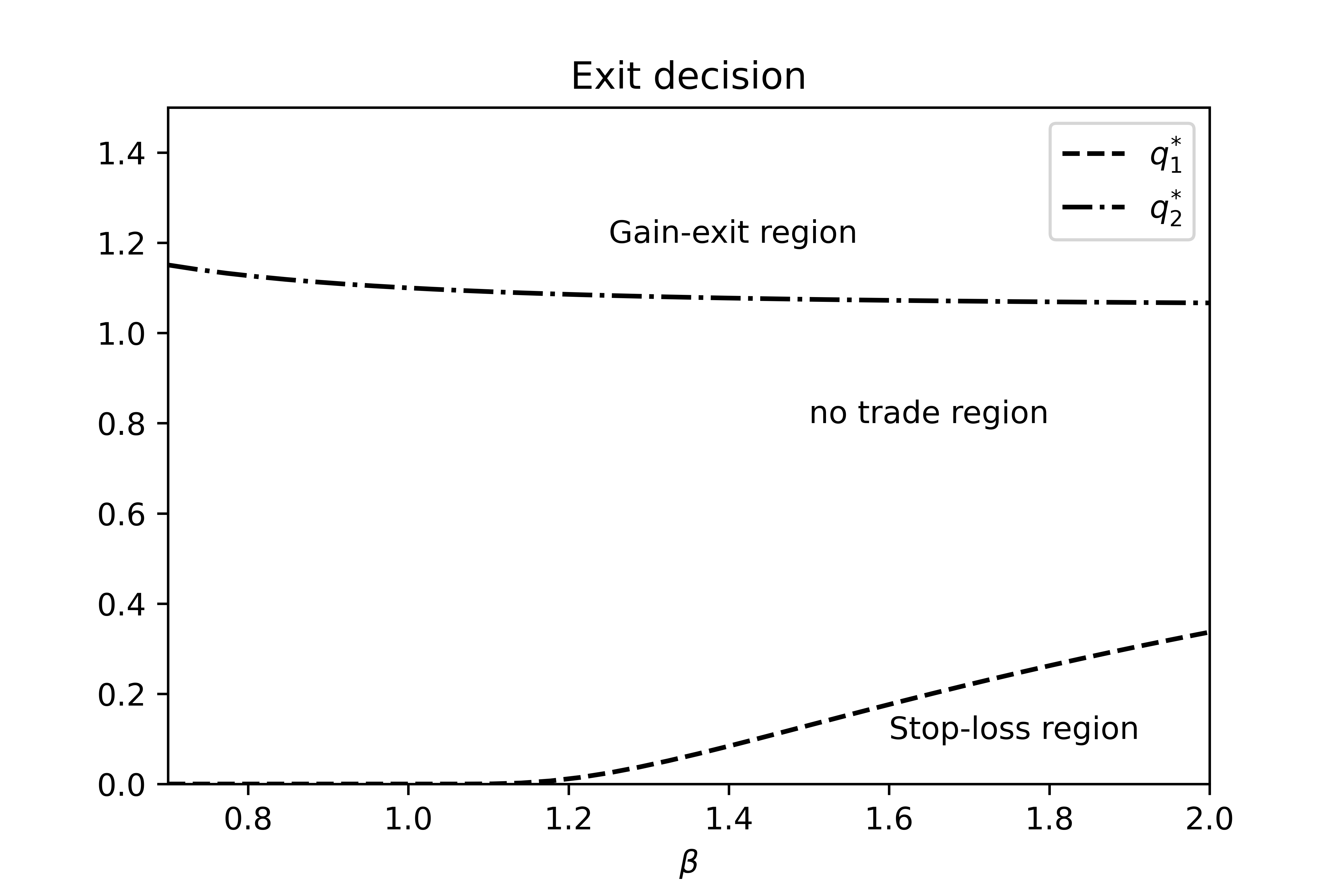}}
	
	\caption{Optimal purchase and sale decision as $\beta$ varies. $\beta< 1$ (resp. $\beta>1$) indicates that the assaet has a positive (negative) drift. Base parameters used are: $\alpha=0.5$, $k=2.25$, $R=1$, $\lambda=1.05$, $\gamma=0.95$, $\Psi=1$.}
	\label{fig:negative}
\end{figure}

Figure \ref{fig:negative_entry} shows the optimal entry decision as $\beta$ varies. When $\beta\leq 1$, the optimal purchase boundaries $p_1^*$ and $p_2^*$ are obtained semi-analytically as per Proposition \ref{prop:entry} and \ref{prop:entryspecial}, while the values under $\beta>1$ are obtained numerically by PSOR method. The agent is willing to purchase the asset if and only if the current price level is between $p_1^*$ and $p_2^*$. As the drift of the asset becomes more and more negative (i.e. $\beta$ increases), the purchase region $[p_1^*,p^*_2]$ shrinks and eventually vanishes when $\beta$ is around 1.4. Beyond this critical level of $\beta$, the agent will not purchase the asset at any price level because of its poor quality. 

In parallel, Figure \ref{fig:negative_exit} plots the optimal sale boundaries in form of $q^*_i:=\frac{c_i }{\gamma}$ such that if the asset has been purchased at level $p$, it will be sold when its price level leaves the interval $(H q_1^*,H q_2^*)$ where $H=\lambda p+\Psi+R$. If $\beta\leq 1$, it is never optimal to voluntarily realize loss which is equivalent to $q_1^*=0$. But for $\beta>1$, $q^*_1$ becomes strictly positive which represents a stop-loss level. Figure \ref{fig:negative} demonstrates that there exists some combinations of model parameters such that the agent is willing to purchase the asset at some price level and subsequently willing to liquidate the asset at loss. This happens when $\beta$ is between 1 and 1.4 in our numerical example. Explicitly characterizing the condition on the model parameters where this phenomenon occurs will be an interesting follow-up research question.

\subsection{Subjective discounting}
\label{sect:discount}

To investigate the effect of impatience on the optimal strategy, it is constructive to consider a version of the problem with discounting. Nonetheless, to the best of our knowledge there is no consensus in the literature regarding how discounting should be incorporated within a Prospect Theory framework with intertemporal cash flows. We briefly present two possible modeling choices.

\subsubsection{Profit-discounting}

The first idea is that the S-shaped utility function is applied to the net present value of the trading proceed, which we term as ``{\it profit-discounting}''. We consider the problem
\begin{align*}
\sup_{\tau,\nu\in\mathcal{T}:\tau\leq \nu}\mathbb{E}\left[U\left(\gamma e^{-\delta\nu}P_{\nu} {\mathbbm 1}_{\{\tau<\infty,\nu<\infty\}}-e^{-\delta\tau}\left(\lambda P_{\tau}+\Psi\right) {\mathbbm 1}_{\{\tau<\infty\}}-R\right)\Bigl | P_0=p\right]
\end{align*}
where $\delta>0$ is the subjective discount factor. If we further assume $\Psi=0$ (which is perhaps relevant in the context of retail trading where fixed transaction cost is insignificant), then this problem is just the same as the undiscounted one except the drift of the asset is now replaced by $\mu-\delta$ under the geometric Brownian motion assumption of the asset price. For small $\delta$ such that $\delta\leq \mu$, our baseline results under the standing assumption $\beta\leq 1$ apply. Otherwise when $\delta>\mu$ the analysis becomes similar to the one covered in Section \ref{sect:negative}. Increasing $\delta$ has the same effect of increasing $\beta$ where an endowed asset tends to be liquidated sooner (lower gain-exit level and possibly higher stop-loss level) while the purchase region shrinks. See Figure \ref{fig:profit_dis}. This result is quite interesting because impatience affects sale and purchase decision somewhat differently. Increasing impatience will cause an agent who has already owned the asset to sell sooner which is in line with common intuition, but surprisingly a higher $\delta$ will also delay the purchase decision. The economic rationale is that increasing discount rate makes the opportunity to sell the asset less valuable, and hence the agent is more reluctant to enter the trade in the first place.

\begin{remark}
	If we insist $\Psi>0$, then the objective function has an explicit dependence on $\tau$ which will make the entry problem time-inhomogeneous. Such a problem is more difficult to be analyzed analytically or numerically.
\end{remark}

\begin{figure}[!htbp]
	\captionsetup[subfigure]{width=0.45\textwidth}
	\centering
	\subcaptionbox{Optimal purchase (entry) boundaries as a function of $\delta$.\label{fig:profit_dis_entry}}{\includegraphics[scale =0.38] {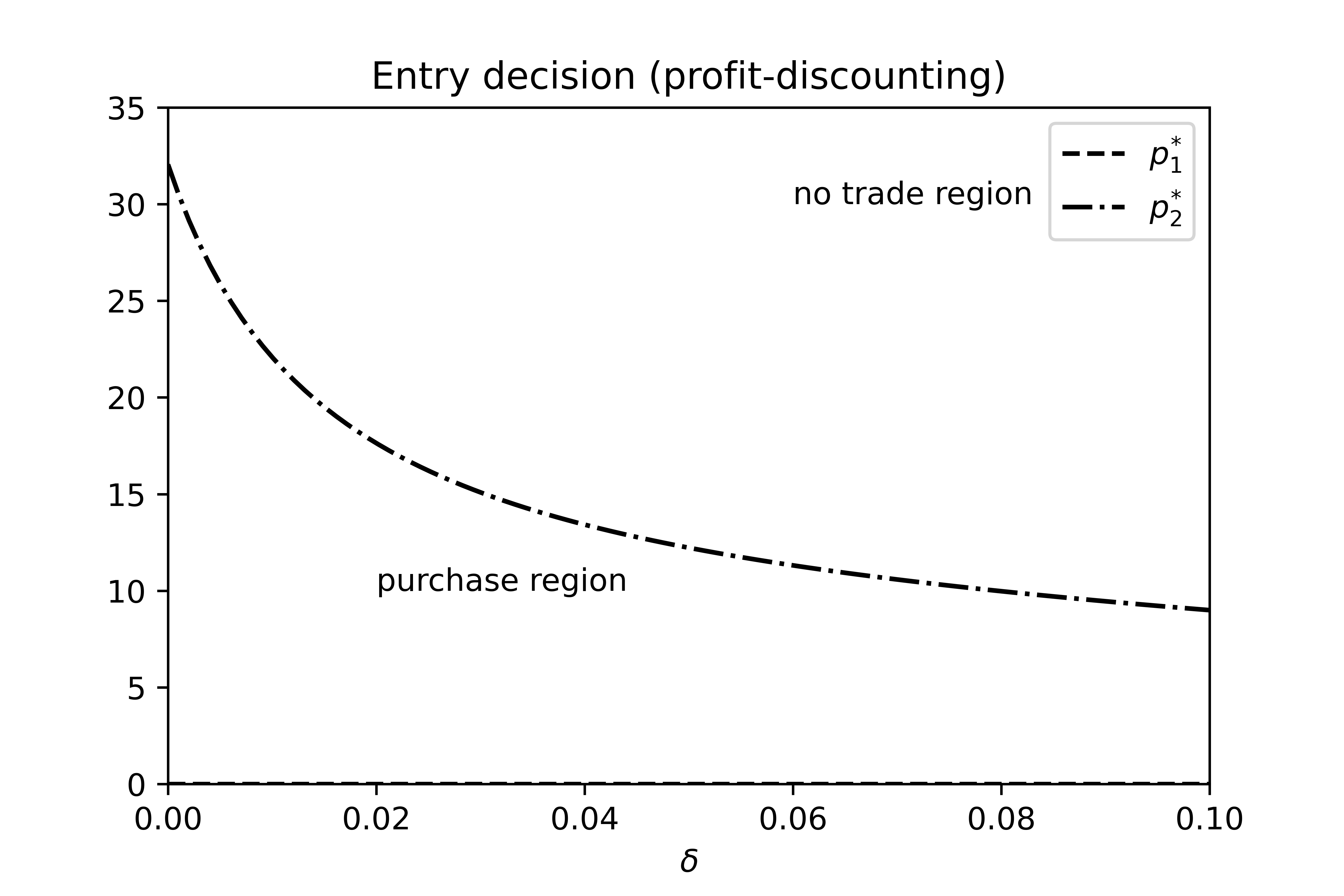}}
	\subcaptionbox{Optimal sale (exit) boundaries as a function of $\delta$.\label{fig:profit_dis_exit}}{\includegraphics[scale =0.38]{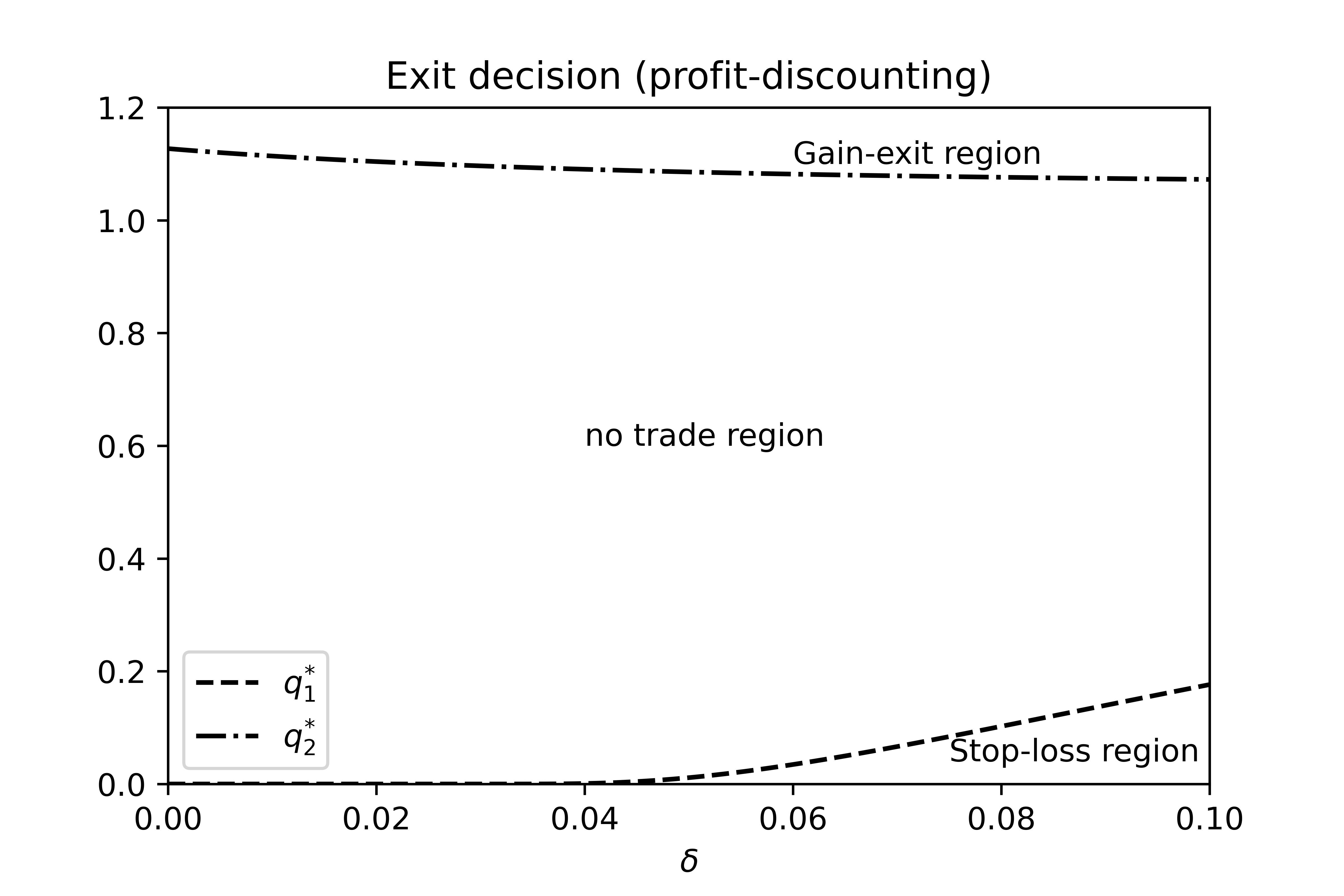}}
	
	\caption{Optimal purchase and sale decision as $\delta$ varies under profit-discounting criterion. Base parameters used are: $\alpha=0.5$, $k=2.25$, $R=1$, $\lambda=1.05$, $\gamma=0.95$, $\Psi=0$, $\mu=0.025$, $\sigma=0.5$.}
	\label{fig:profit_dis}
\end{figure}

\subsubsection{Utility-discounting}

The second possibility to incorporate discounting is to assume the utility of the round-trip proceed is discounted by a single discount factor evaluated at the liquidation date. We call this approach ``{\it utility-discounting}''. The problem is formulated as
\begin{align}
\sup_{\tau,\nu\in\mathcal{T}:\tau\leq \nu}\mathbb{E}\left[e^{-\delta \nu}U\left(P_{\nu} {\mathbbm 1}_{\{\tau<\infty,\nu<\infty\}}-\left(\lambda P_{\tau}+\Psi\right) {\mathbbm 1}_{\{\tau<\infty\}}-R\right)\Bigl | P_0=p\right].
\label{eq:uti_discount}
\end{align}
The downside of this approach is that it does not properly take into the account the timing of the cash outflow (incurred at $\tau$) and inflow (incurred at $\nu$), but an advantage is that this formulation resembles a standard discounted optimal stopping problem.

It turns out that introducing discounting in this fashion will drastically change the agent's entry behavior, as summarized by the proposition below where the proof is given in Appendix \ref{app:discount}.
\begin{proposition}
Suppose $\delta>0$ and let $\omega_1<0<\omega_2$ be the two distinct real  roots to the quadratic equation $\frac{\sigma^2}{2}x^2+(\mu-\frac{\sigma^2}{2})x-\delta=0$. Under the assumption $\alpha< \beta$, the pair of stopping times
\begin{align*}
\tau^*=0\qquad\text{and}\qquad \nu^*=\inf\left\{t\geq \tau^*: P_t\geq \frac{\omega_2(\lambda P_{\tau^*}+\Psi+R) }{\gamma(\omega_2-\alpha)}\right\}
\end{align*}
is optimal to problem \eqref{eq:uti_discount}. In other words, the agent always enters the trade immediately at any price level and subsequently adopts a gain-exit strategy.
\label{prop:uti_discount}
\end{proposition}

Proposition \ref{prop:uti_discount} applies to the case of $\beta>1$ as well. Unlike profit-discounting, the agent will never stop-loss under utility-discounting even when the asset drift is negative. More remarkably, the entry strategy now becomes trivial provided that the standing assumption $\alpha<\beta$ holds. It is also interesting to point out that the optimal strategy is not continuous at $\delta=0$, in the sense that letting $\delta \downarrow 0$ in Proposition \ref{prop:uti_discount} does not recover the no-discounting baseline results in Proposition \ref{prop:entry} or \ref{prop:entryspecial}. See Figure \ref{fig:uti_discount_exit} as an illustration where we show the optimal sale boundary as a function of $\delta$ under utility-discounting. Once discounting is applied to the utility term (no matter how small $\delta$ is), the impact of poor trading performance in form of negative utility can be mitigated by indefinitely deferring the realization of loss. The agent is then effectively protected from negative outcomes and there is no downside to take risk regardless of the asset quality or expensiveness of the transaction costs. The agent is impatient so it is optimal to enter the trade as early as possible, and there is no reason to realize loss thereafter because it can be discounted away.

In view of the above results, profit-discounting seems to yield more reasonable and flexible predictions of the agent's optimal trading behaviors. Nonetheless, a proper understanding of the implications behind the two discounting approaches shall be of interests to the field of behavioral economics.

\begin{figure}[!htbp]
	\centering
	\includegraphics[scale =0.4]{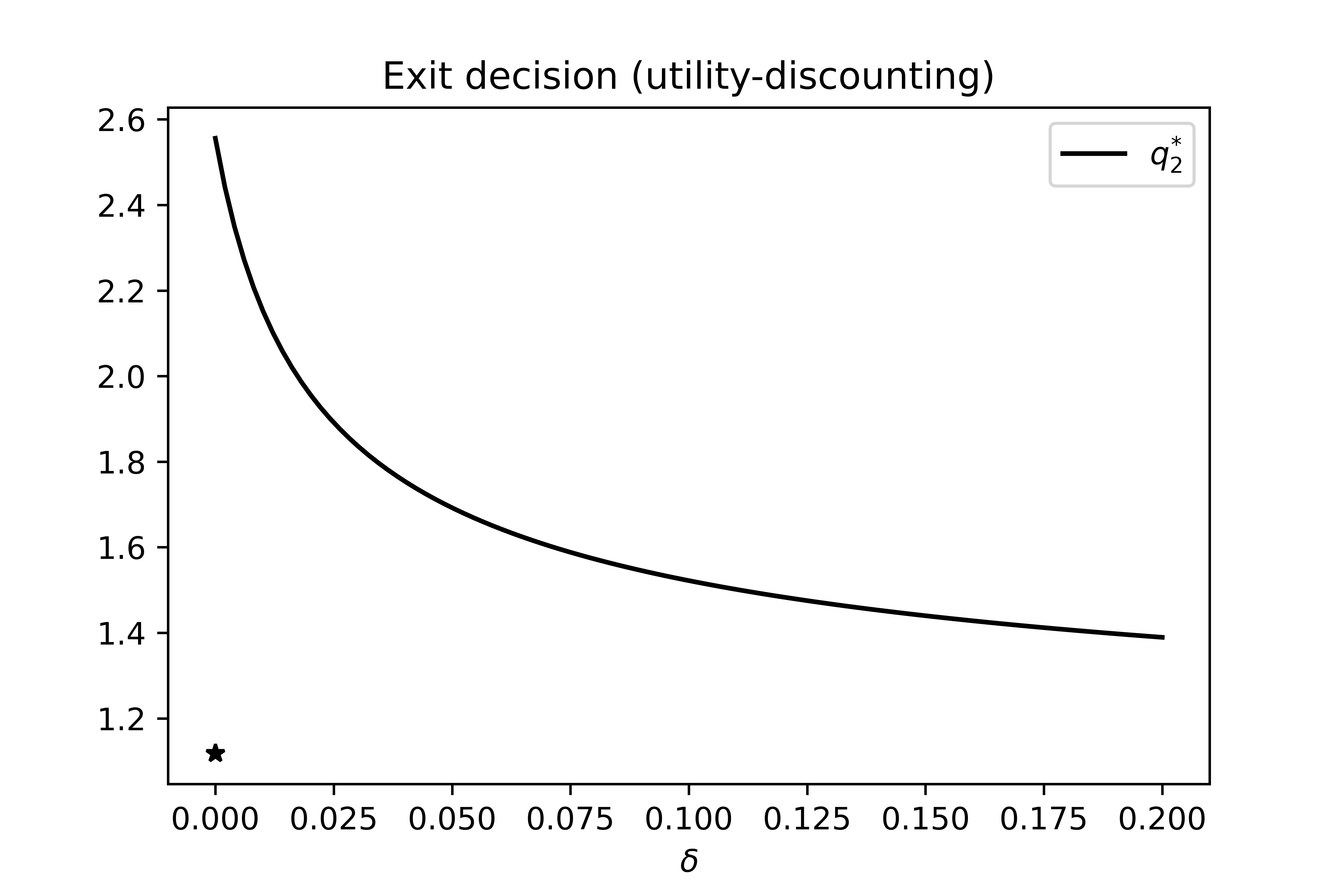}
	\caption{Optimal sale decision as $\delta$ varies under utility-discounting criterion, where the agent will sell the endowed asset whenever the price level is at or above $q^*_2 H$. There is a discontinuity at $\delta=0$ where the star marks the optimal threshold for the undiscounted problem. Base parameters used are: $\alpha=0.5$, $k=2.25$, $\beta=0.85$, $\lambda=1.05$, $\gamma=0.95$, $\Psi=1$.}
	\label{fig:uti_discount_exit}
\end{figure}

\subsection{Multiple round-trip trades}

We have exclusively focused on the case that the agent's utility is derived from a single round-trip trade. But what if the agent can repeatedly purchase and sell the asset? Similar to Section \ref{sect:discount}, we discuss two possible modeling choices regarding how utility over multiple trades can be computed.

\subsubsection{Optimization of utility over total trading profit}
\label{sect:multi_npv}

The first possibility is to consider utility derived from the total net present value of the trading proceed, which is similar to the profit-discounting idea in Section \ref{sect:discount}. Suppose the agent can perform $N$ round-trip trades, then the objective function can be written as
\begin{align}
V(p):=\sup_{\tau_i,\nu_i\in\mathcal{T}:i=1,...,N}J(p;\{\tau_i\}_{i=1,...,N},\{\nu_i\}_{i=1,...,N})
\label{eq:Ntrade}
\end{align}
with
\begin{align*}
&J(p;\{\tau_i\}_{i=1,...,N},\{\nu_i\}_{i=1,...,N})\\
&:=\mathbb{E}\Biggl[ U\biggl(\sum_{i=1}^N \Bigl(\gamma P_{\nu_i} {\mathbbm 1}_{\{\tau_i<\infty,\nu_i<\infty\}}-\left(\lambda P_{\tau_i}+\Psi\right) {\mathbbm 1}_{\{\tau_i<\infty\}}\Bigr)-R\biggr)\Biggl | P_0=p\Biggr].
\end{align*}
Here $\{\tau_i\}_{i=1,...,N}$ (resp. $\{\nu_n\}_{i=1,...,N}$) is an increasing sequence of stopping times representing the entry (resp. exit) time of the $n^{th}$ trade with $$0\leq \tau_1\leq\nu_1\leq \tau_2\leq\nu_2\leq \tau_3\leq \dots\leq\tau_N\leq \nu_N\leq +\infty.$$ In particular, the agent's utility is now derived from the total trading profits from the $N$ available round-trip trading opportunities.

The problem can still be approached by a similar backward induction principle which we outline below based on heuristics. Define
\begin{align*}
&V_b^{(n)}(p;Q):=\\
&\sup_{(\tau_i,\nu_i)_{i=1,...,n}}\mathbb{E}\Biggl[ U\biggl(\sum_{i=1}^n \Bigl(\gamma P_{\nu_i} {\mathbbm 1}_{\{\tau_i<\infty,\nu_i<\infty\}}\\
&\qquad\qquad\qquad\qquad-\left(\lambda P_{\tau_i}+\Psi\right) {\mathbbm 1}_{\{\tau_i<\infty\}}\Bigr)-Q\biggr)\Biggl | P_0=p\Biggr]\\
\end{align*}
and
\begin{align*}
&V_s^{(n)}(p;H):=\\
&\sup_{\nu_1,(\tau_i,\nu_i)_{i=2,...,n}}\mathbb{E}\Biggl[ U\biggl(\sum_{i=2}^n \Bigl(\gamma P_{\nu_i} {\mathbbm 1}_{\{\tau_i<\infty,\nu_i<\infty\}}-\left(\lambda P_{\tau_i}+\Psi\right) {\mathbbm 1}_{\{\tau_i<\infty\}}\Bigr)\\
&\qquad\qquad\qquad\qquad+\gamma P_{\nu_1} {\mathbbm 1}_{\{\nu_1<\infty\}} -H\biggr)\Biggl | P_0=p\Biggr].
\end{align*}
Here $V^{(n)}_b(p;Q)$ represents the value function when there are $n$ purchase and sale opportunities available under some reference point $Q$, and $V^{(n)}_s(p;H)$ represents the value function when the agent has already owned the asset and there are $n-1$ purchase and $n$ sale opportunities available under some reference point $H$. Then heuristically, due to the dynamic programming principle we expect
\begin{align*}
&V_b^{(n)}(p;Q)=\sup_{\tau_1}\mathbb{E}\Biggl[{\mathbbm 1}_{\{\tau_1=\infty\}}U(-Q)+{\mathbbm 1}_{\{\tau_1<\infty\}}\sup_{\nu_1,(\tau_i,\nu_i)_{i=2,...,n}}\Upsilon\Biggl | P_0=p\Biggr]
\end{align*}
where
\begin{align*}
\Upsilon&:=\mathbb{E}\biggl[ U\Bigl(\sum_{i=2}^n \bigl(\gamma P_{\nu_i} {\mathbbm 1}_{\{\tau_i<\infty,\nu_i<\infty\}}-\left(\lambda P_{\tau_i}+\Psi\right) {\mathbbm 1}_{\{\tau_i<\infty\}}\bigr)\\
&\qquad\qquad+\gamma P_{\nu_1} {\mathbbm 1}_{\{\nu_1<\infty\}} -(\lambda P_{\tau_1}+\Psi)-Q\Bigr)\biggl | \mathcal{F}_{\tau_1}\biggr].
\end{align*}
Hence we conclude
\begin{align}
&V_b^{(n)}(p;Q)\nonumber\\
&=\sup_{\tau_1}\mathbb{E}\Bigl[ {\mathbbm 1}_{\{\tau_1=\infty\}}U(-Q)+ {\mathbbm 1}_{\{\tau_1<\infty\}} V^{(n)}_{s}(P_{\tau_1};\lambda P_{\tau_1}+\Psi+Q)\Bigl | P_0=p\Bigr]\label{eq:maxopt}\\
&=\sup_{\tau_1}\mathbb{E}\Bigl[\max\bigl\{V^{(n)}_{s}(P_{\tau_1};\lambda P_{\tau_1}+\Psi+Q),U(-Q)\bigr\}\Bigl | P_0=p\Bigr].\nonumber
\end{align}
The first equality is due to the definition of $V^{(n)}_s$, and the second equality is expected to hold because for the optimal stopping problem in \eqref{eq:maxopt} it is clearly suboptimal to stop the process when $V^{(n)}_{s}(P_{t};\lambda P_{t}+\Psi+Q)<U(-Q)$ and hence the payoff function can be replaced by $\max\{V^{(n)}_{s}(P_{\tau_1};\lambda P_{\tau_1}+\Psi+Q),U(-Q)\}$.

Based on similar reasoning, for $n>1$ we expect
\begin{align*}
V_s^{(n)}(p;H)
&=\sup_{\nu_1}\mathbb{E}\Bigl[ {\mathbbm 1}_{\{\nu_1=\infty\}}U(-H)+ {\mathbbm 1}_{\{\nu_1<\infty\}}\sup_{(\tau_i,\nu_i)_{i=2,...,n}}\Theta\Bigl | P_0=p\Bigr]
\end{align*}
where
\begin{align*}
	\Theta:=\mathbb{E}\Biggl[ U\biggl(\sum_{i=2}^n \Bigl(\gamma P_{\nu_i} {\mathbbm 1}_{\{\tau_i<\infty,\nu_i<\infty\}}-\left(\lambda P_{\tau_i}+\Psi\right) {\mathbbm 1}_{\{\tau_i<\infty\}}\Bigr) +\gamma P_{\nu_1}  -H\biggr)\Biggl | \mathcal{F}_{\nu_1}\Biggr].
\end{align*}
Then
\begin{align*}
	&V_s^{(n)}(p;H)\\
	&=\sup_{\nu_1}\mathbb{E}\Bigl[ {\mathbbm 1}_{\{\nu_1=\infty\}}U(-H)+ {\mathbbm 1}_{\{\nu_1<\infty\}} V^{(n-1)}_{b}(P_{\nu_1};-\gamma P_{\nu_1}+H)\Bigl | P_0=p\Bigr]\\
	&=\sup_{\nu_1}\mathbb{E}\Bigl[ V^{(n-1)}_{b}(P_{\nu_1};-\gamma P_{\nu_1}+H)\Bigl | P_0=p\Bigr].
\end{align*}
The last equality holds because trivially $V^{(n)}_b(p;Q)$ is increasing in $p$ and decreasing in $Q$ such that $V^{(n)}_b(p;-\gamma p+H)\geq V^{(n)}_b(0;H)= U(-H)$ for all $p$ and $n$. Finally, we obviously have
\begin{align*}
 V^{(1)}_s(p;H)&=\sup_{\nu}\mathbb{E}\Bigl[U(\gamma P_{\nu} {\mathbbm 1}_{\{\nu_1<\infty\}}-H)\Bigl|P_0=p\Bigr]\\
 &=\sup_{\nu}\mathbb{E}\Bigl[U(\gamma P_{\nu}-H)\Bigl|P_0=p\Bigr].
\end{align*}

In summary, $V^{(n)}_s$ and $V^{(n)}_b$ should satisfy the recursion
\begin{align*}
V^{(n)}_s(p;H)&=\sup_{\nu}\mathbb{E}\Bigl[U(\gamma P_{\tau}-H)\Bigl | P_0=p\Bigr],& n=1;\\
V^{(n)}_b(p;Q)&=\sup_{\tau}\mathbb{E}\Bigl[\max\bigl\{V_s^{(n)}(P_{\tau};\lambda P_{\tau}+\Psi+Q),U(-Q)\bigr\}\Bigl | P_0=p\Bigr],&n\geq 1;\\
V^{(n)}_s(p;H)&=\sup_{\nu}\mathbb{E}\Bigl[V^{(n-1)}_{b}(P_{\nu};-\gamma P_{\nu}+H)\Bigl | P_0=p\Bigr],& n\geq 2.
\end{align*}
The required value function for problem \eqref{eq:Ntrade} with $N$ round-trip trading opportunities is given by $V^{(N)}_b(p;R)$. A formal verification of the above assertion as well as a thorough analysis of this recursive problem is beyond the scope of the current paper.

{\cb
\subsubsection{Optimization of total utilities from individual trading episodes}
\label{sect:realization_uti}

One may also assume a burst of utility is derived upon completion of each round-trip trade, i.e. the objective function with $N$ round-trip trading opportunities is
\begin{align*}
&J(p;\{\tau_i\}_{i=1,...,N},\{\nu_i\}_{i=1,...,N})\\
&:=\mathbb{E}\Biggl[ \sum_{i=1}^N U\biggl(\gamma P_{\nu_i} {\mathbbm 1}_{\{\tau_i<\infty,\nu_i<\infty\}}-\left(\lambda P_{\tau_i}+\Psi\right) {\mathbbm 1}_{\{\tau_i<\infty\}}-R\biggr)\Biggl | P_0=p\Biggr].
\end{align*}
Unlike the criterion in Section \ref{sect:multi_npv} where a single utility is derived from the total profit of all trades, utility is now realized upon completion of each round-trip trade and the agent's goal is to optimize the sum of those utilities. Define
\begin{align*}
&V_b^{(n)}(p):=\\
&\sup_{(\tau_i,\nu_i)_{i=1,...,n}}\mathbb{E}\Biggl[\sum_{i=1}^n U\biggl( \gamma P_{\nu_i} {\mathbbm 1}_{\{\tau_i<\infty,\nu_i<\infty\}}-\left(\lambda P_{\tau_i}+\Psi\right) {\mathbbm 1}_{\{\tau_i<\infty\}}-R\biggr)\Biggl | P_0=p\Biggr]
\end{align*}
and
\begin{align*}
&V_s^{(n)}(p;H):=\\
&\sup_{\nu_1,(\tau_i,\nu_i)_{i=2,...,n}}\mathbb{E}\Biggl[U\biggl(\gamma P_{\nu_1} {\mathbbm 1}_{\{\nu_1<\infty\}} -H\biggr)\\
&\qquad\qquad+\sum_{i=2}^n  U\biggl(\gamma P_{\nu_i} {\mathbbm 1}_{\{\tau_i<\infty,\nu_i<\infty\}}-\left(\lambda P_{\tau_i}+\Psi\right) {\mathbbm 1}_{\{\tau_i<\infty\}}-R\biggr)\Biggl | P_0=p\Biggr].
\end{align*}
Here $V^{(n)}_b(p)$ is the value function with $n$ purchase and sale opportunities remaining, and $V^{(n)}_s(p;H)$ represents the value function when agent has an endowed asset (with some given reference point $H$ for the first trading episode) and there are $n-1$ purchase and $n$ sale opportunities remaining. 

Based on the dynamic programming principle, one can heuristically write down the recursive system satisfied by the value functions as
\begin{align}
V^{(1)}_s(p;H)&:=\sup_{\nu}\mathbb{E}\Bigl[U(\gamma P_{\nu}-H)\Bigl | P_0=p\Bigr];\label{eq:realized_uti_1}\\
V^{(n)}_b(p)&:=\sup_{\tau}\mathbb{E}\biggl[\max\Bigl(V^{(n)}_s(P_{\tau};\lambda P_{\tau}+\Psi+R),n U(-R)\Bigr)\biggl | P_0=p\biggr];\label{eq:realized_uti_purchase}\\
V^{(n)}_s(p;H)&:=\sup_{\nu}\mathbb{E}\Bigl[U(\gamma P_{\nu}-H)+V^{(n-1)}_b(P_{\nu})\Bigl | P_0=p\Bigr],\label{eq:realized_uti_sale}
\end{align}
where \eqref{eq:realized_uti_purchase} holds for $n\geq 1$ and \eqref{eq:realized_uti_sale} holds for $n\geq 2$. For the ``payoff function'' in \eqref{eq:realized_uti_sale}, the first term $U(\gamma P_{\nu}-H)$ represents the utility burst when the endowed asset is sold and the second term $V_b^{(n-1)}(P_{\nu})$ reflects the maximal sum of expected utilities from the remaining $n-1$ round-trip trading opportunities. Otherwise, if the agent does not own the asset to begin with and decides to purchase at time $\tau$, the maximal expected utility attainable is given by $V^{(n)}_s(P_{\tau};\lambda P_{\tau}+\Psi+R)$ where a new trading episode is initiated with reference point set to $H=\lambda P_{\tau}+\Psi+R$ (the total cost of purchase at time $\tau$ plus the exogenous aspiration level). But the agent can also choose not to purchase at all and forgo all the $n$ remaining trading opportunities. This will result in a payoff of $U(-R)$ for each trading opportunity given up. Hence in \eqref{eq:realized_uti_purchase} the ``payoff function'' for the entry problem is given by $\max\left(V^{(n)}_s(P_{\tau};\lambda P_{\tau}+\Psi+R),n U(-R)\right)$. When $n=1$, \eqref{eq:realized_uti_1} and \eqref{eq:realized_uti_purchase} agree with the sequential optimal stopping problem deduced in Section \ref{sect:solmethod}.

This generalization is conceptually close to the realization utility model in the literature. The canonical formulation (in our notation) of such model is
\begin{align*}
\sup_{(\tau_i,\nu_i)_{i=1,2,3,..}}\mathbb{E}\left[ \sum_{i=1}^\infty e^{-\delta \nu_i}U\left(G_{\nu_i-},Q_{\nu_i-}\right) \right].
\end{align*}
$Q=(Q_t)_{t\geq 0}$ is the reference point process such that $Q_{t}$ is the benchmark to be used for performance evaluation at time $t$. $G=(G_t)_{t\geq 0}$ is the gain-and-loss process with $G_t:= \gamma P_t - Q_{t}$ representing the size of realized gain-and-loss if the agent liquidates an owned asset at time $t$. Typically, the function $U(G,Q)$ is assumed to be homogeneous in $Q$ such that $U(G,Q)=Q^{\eta}u(G/Q)$ for some $\eta\in(0,1]$ and $u(\cdot)$ is S-shaped.

There are many choices with the reference point process $Q$. Ingersoll and Jin~\cite{ingersoll-jin13} consider $Q_{t}= P_{\tau_i}$ for $t\in[\tau_{i},\tau_{i+1})$ (up to a constant multiplier) which is simply the most recent purchase price of the asset. Barberis and Xiong~\cite{barberis-xiong12} and Dai et al.~\cite{dai-qin-wang22} take $Q_{t}=  P_{\tau_i} e^{r(t-\tau_i)}$ for $t\in[\tau_{i},\tau_{i+1})$ which is the most recent purchase price growing at the risk-free rate. He and Yang~\cite{he-yang19} incorporate an additional term which asymmetrically adapts to the paper gain-and-loss. Kong et al.~\cite{kong-qin-yue22} study a path-dependent reference point which is a weighted-average of the asset prices throughout the current trading episode. A common feature among the cited papers above is that $Q_t$ is proportional to $P_{\tau_i}$ over $t\in[\tau_{i},\tau_{i+1})$ and in turn dimension reduction is possible via introducing a new state variable $X_t:=P_t/Q_t$. This greatly simplifies the entry problem but it also trivializes the optimal strategy where one either immediately enters the trade again after a sale or never enters the trade in the first place. This observation remains the same even if one introduces additional modeling elements such as a Poisson random termination time and an extra utility term over final wealth. See Proposition 3.4 of He and Yang~\cite{he-yang19}.

Our formulation can be seen as a version of the realization utility model with finite number of trading opportunities where the reference level process is (see Remark \ref{remark:refpt} at the end of this subsection as well)
\begin{align}
Q_t:=
\begin{cases}
\lambda P_{\tau_i} + \Psi +R,& t\in[\tau_i,\nu_i);\\
R,& t\in[\nu_i,\tau_{i+1}),
\end{cases}
\label{eq:refpt_process}
\end{align}
the gain-and-loss process is $G_t:=\gamma P_t  {\mathbbm 1}_{\{t<\infty\}}-Q_t$, the utility function is $U(G,Q)=Q^{\alpha}u(G/Q)$ with $u(x):=x^{\alpha} {\mathbbm 1}_{\{x\geq 0\}}-k|x|^{\alpha}{\mathbbm 1}_{\{x< 0\}} $ and the discount rate $\delta$ is set to zero. The most important distinction of our framework from the existing realization utility models is that our reference point consists of a constant component $\Psi+R$ reflecting fixed transaction cost and some baseline aspiration level. Without this component, the reference level over a particular trading episode is always proportional to the asset value at the beginning of the episode. Specifically, if $\Psi=R=0$, then using \eqref{eq:realized_uti_1} to \eqref{eq:realized_uti_sale}, the scaling property of $U$ and the geometric Brownian motion assumption of $P$, we can inductively deduce for all $n$ that $V^{(n)}_s(p;H)=H^{\alpha}V(\frac{p}{H};1)$ and
\begin{align*}
V^{(n)}_b(p)&=\sup_{\tau}\mathbb{E}\Bigl[\max\bigl(V^{(n)}_s(P_{\tau};\lambda P_{\tau}),0\bigr)\Bigl|P_0=p\Bigr]\\
&=\sup_{\tau}\mathbb{E}\Bigl[\max\bigl(\lambda^{\alpha} P_{\tau}^{\alpha}V^{(n)}_s(1/\lambda;1),0\bigr)\Bigl|P_0=p\Bigr].
\end{align*}
If $V^{(n)}_s(1/\lambda;1)<0$, then $\tau=+\infty$ is optimal and the associated value function for the entry problem is $V^{(n)}_b(p)=0$. If instead $V^{(n)}_s(1/\lambda;1)>0$, then we have $V^{(n)}_b(p)=\lambda^{\alpha}V^{(n)}_s(1/\lambda;1)\sup_{\tau}\mathbb{E}[P_{\tau}^{\alpha}|P_0=p]=\lambda^{\alpha}V^{(n)}_s(1/\lambda;1)p^{\alpha}$ under the standing assumption $\alpha\leq \beta$. The corresponding optimal entry strategy is $\tau=0$. 

The above observations in conjunction with our main theoretical results under $n=1$ suggest that incorporation of a constant component within the dynamic reference point (e.g. in form of fixed transaction cost or a default aspiration level) might enable a realization utility model to generate more realistic and non-trivial purchase behaviors. For the sale decision, recall that in the baseline model with $n=1$, the optimal sale strategy is a simple gain-exit rule (Lemma \ref{lem:henderson}). We expect this to change when $n>1$ since the effective payoff function of the exit problem \eqref{eq:realized_uti_sale} now contains an additional term $V^{(n-1)}_b(p)$ which will drastically change the convexity/concavity of the scaled exit payoff function. We leave the full analysis of the problem for future work.

\begin{remark}
	At the first sight, \eqref{eq:refpt_process} looks more complicated than the ones proposed in the existing literature where our value of $Q_t$ depends on whether the agent is inside a trading episode holding the asset ($t\in[\tau_i,\nu_i)$ for some $i$) or outside a trading episode without any asset ($t\in[\nu_i,\tau_{i+1})$ for some $i$). From an economic point of view, the reference point should not depend on $P_{\tau_i}$ anymore once the $i^{th}$ trading episode is complete and hence should be reset to the baseline aspiration level until the start of the $(i+1)^{th}$ trading episode. In the special case of $\tau_{i+1}=\nu_i$ for all $i$ such that there is no time delay between the exit of an existing trade and the entry of a new trade, \eqref{eq:refpt_process} simplifies to $Q_t=\lambda P_{\tau_i}+\Psi+R$ for $t\in[\tau_i,\tau_{i+1})$ which resembles the usual definition in the literature. In absence of a constant component within the reference point process, the entry decision is trivial where $\tau_{i+1}=\nu_i$ for all $i$ is indeed optimal because of the scaling property discussed previously so there is no need to ``correctly'' specify $Q_t$ over $t\in[\nu_i,\tau_{i+1})$. But this result is no longer true when $R$ or $\Psi$ is non-zero and hence it is necessary to model the reference point process more carefully.
\label{remark:refpt}
\end{remark}
}

\subsection{Endogenous aspiration level}
\label{sect:endoR}

Among all the model parameters, $R$ the ``aspiration level'' of the agent is the hardest one to be interpreted and estimated. It can be a pure psychological parameter representing the agent's subjective threshold which distinguishes gains and losses. In some applications such as delegated portfolio management, $R$ can also be the performance target imposed on the agent by a manager. Instead of calibrating $R$ where the exercise can be very context-specific, one may also seek to endogenize this parameter by introducing a further optimality criterion. 
	
Consider a principal-agent setup as an example. The principal imposes an aspiration level on the agent in form of a performance target. Under a given $R$, the agent's optimal trading rule $(\tau^*(R),\nu^*(R))$ can be obtained by solving \eqref{eq:valfun}. If the principal has a utility function $\tilde{U}(\cdot)$ over the trading profit, then a particular choice of $R$ will bring the principal an expected utility level of
\begin{align}
\tilde{V}(R):=\mathbb{E}\left[\tilde{U}(\gamma P_{\nu^*(R)} {\mathbbm 1}_{\{\tau^*(R)<\infty,\nu^*(R)<\infty\}}-(\lambda P_{\tau^*(R)}+\Psi) {\mathbbm 1}_{\{\tau^*(R)<\infty\}})\right].
\label{eq:principal}
\end{align}
The above can be maximized (numerically) with respect to $R$ where the solution in general depends on the initial asset price $P_0$. In Figure \ref{fig:ce}, we consider a risk-averse principal with utility function $\tilde{U}(x)=-e^{-\eta x}$ where $\eta>0$ is the constant absolute risk-aversion level and $\tilde{V}(R)$ can be maximized at some interior $R$ in this particular example. The higher the Sharpe ratio of the asset (equivalent to a lower level of $\beta$), the higher the level of endogenous aspiration. In other words, a more aggressive performance goal is set in a bullish market. However, there are also examples that the principal's maximization problem is degenerate (e.g. $\tilde{V}(R)$ being monotonically increasing or decreasing in $R$). We leave the complete analysis of such principal-agent problem for future research. 

An alternative consideration to endogenize $R$ is to modify the agent's utility function such that a round-trip profit of $x$ leads to a utility value of $U(x-R,R)$, where $U(\cdot,\cdot)$ is increasing in the first argument and decreasing in the second argument. The second argument of $U$ can reflect the agent's desire for ``self-improvement'' and ``self-enhancement'' which is achieved by choosing a high aspiration level $R$. See for example Falk and Knell~\cite{falk-knell04}. The optimal $R$ can then be determined alongside with the agent's optimal trading rule.

\begin{figure}[!htbp]
	\centering
	\includegraphics[scale =0.4]{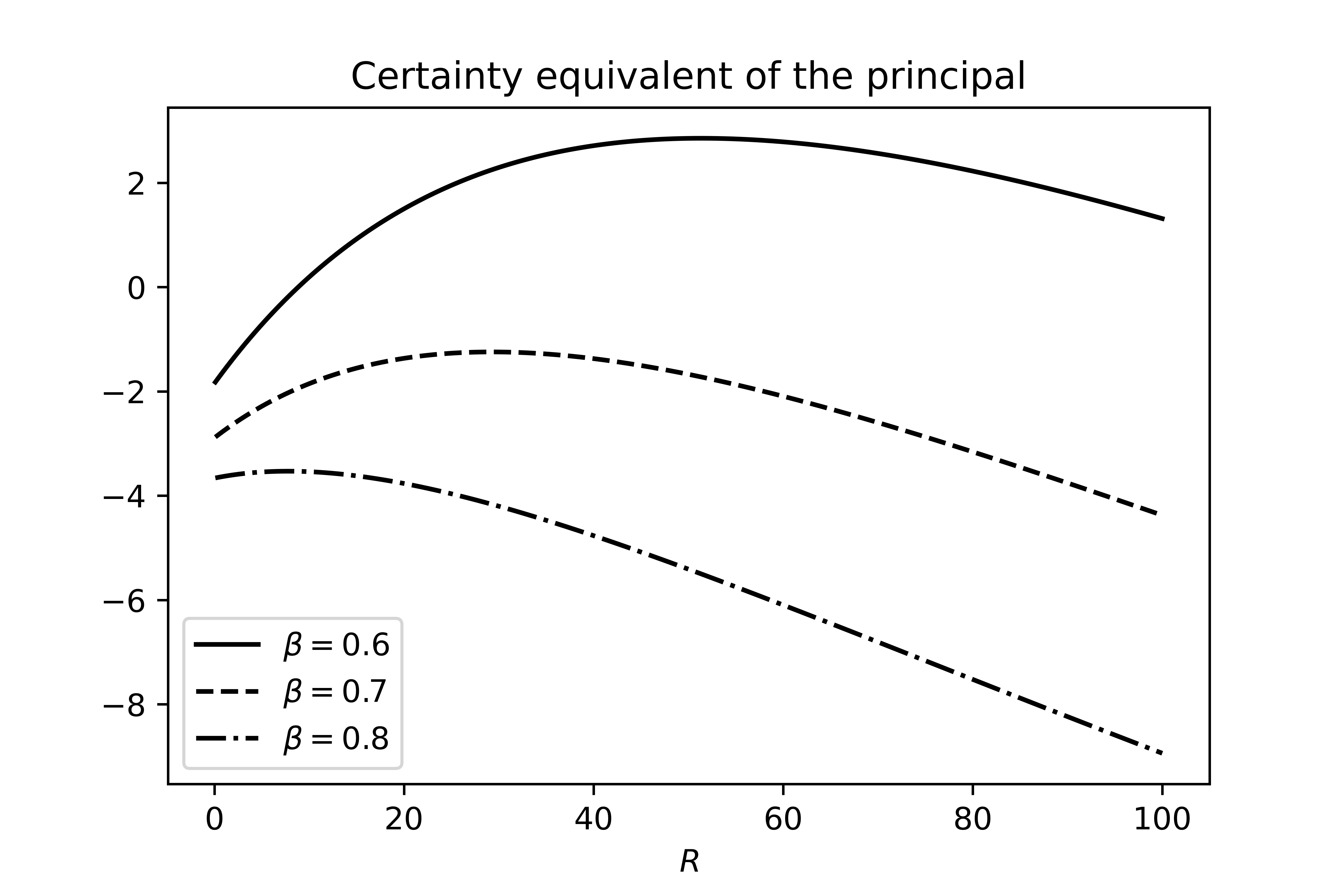}
	\caption{The principal's certainty equivalent $-\ln (- \tilde{V}(R))/\eta$ as a function of $R$ when $\tilde{U}(x)=-e^{-\eta x}$. Base parameters used are: $\alpha=0.5$, $k=2.25$, $\lambda=1.05$, $\gamma=0.95$, $\Psi=1$, $\eta=0.01$, $P_0=30$.}
	\label{fig:ce}
\end{figure}
}

\section{Concluding remarks}
\label{sect:conc}

This paper considers a dynamic trading model under Prospect Theory preference with transaction costs. By solving a sequential optimal stopping problem, we find that the optimal trading strategy can have various forms depending on the model parameters and the price level of the asset. The impact of transaction costs is subtle. In contrast to conventional wisdom, increasing transaction costs does not necessarily deter economic agents from trading participation because the agents may face a higher reference point and in turn be more risk-aggressive in an expensive trading environment. These results could potentially be useful to policy makers to better understand how undesirable speculative trading behaviors in certain markets can be effectively curbed. 

Our key mathematical results are derived under a somewhat stylized modeling specification. In particular, asymmetry of degree of risk-aversion/seeking over gains/losses, fixed transaction cost on sale and negative aspiration level are currently omitted from the baseline analysis. While these omissions allow us to derive sharp characterization and comparative statics of the optimal trading rules, it will nonetheless be constructive to extend the model to further examine the impact of other economic factors. Section \ref{sect:extend} also highlights a number of possible extensions for further rigorous mathematical analysis in presence of negative drift, subjective discounting, repeated trading opportunities and endogenous aspiration level. {\cb In particular, a natural extension is to further explore the implications of our results to the literature of realization utility where we believe a different specification of the reference point process (e.g. incorporation of a constant component) can lead to more realistic prediction of purchase behaviors.}

A more ambitious goal is to further incorporate probability weighting within our continuous-time optimal stopping model (as per Xu and Zhou~\cite{xu-zhou13} and Henderson et al.~\cite{henderson-hobson-tse18}) to fully reflect the features of the \textit{Cumulative Prospect Theory} framework of Tversky and Kahneman~\cite{tversky-kahneman92}. However, technical difficulties are likely to arise due to the time-inconsistency brought by probability weighting. Precise formulation of the problem as well as development of the appropriate mathematical techniques should prove to be an another interesting proposal for future research.

Finally, the surprising comparative statics documented in this paper also reveal that the economic interaction between market frictions and behavioral preferences can be subtle or even counter-intuitive. The model considered in this paper is just one of the many possible motivations that these two topics should not be studied in isolation, and this could potentially open up a new strand of literature to advance our understanding towards trading behaviors in a more realistic setup.

\section*{Acknowledgment}

The authors are grateful to the two anonymous reviewers whose comments and suggestions have helped improve the previous version of the paper. This research was supported in part by the EPSRC (UK) grant (EP/V008331/1).

\bibliographystyle{abbrv}

\appendix\normalsize

\section{Appendix}

\subsection{Proofs of Proposition \ref{prop:entry}, \ref{prop:entryspecial} and \ref{prop:compstat}.}

We start with two useful lemmas.

\begin{lemma}
	Write $\xi:=\frac{\lambda}{\gamma}$. For the function $f$ defined in \eqref{eq:f} we have
	\begin{align}
	\lim_{x\to +\infty}f(x)=
	\begin{cases}
	+\infty,& \xi<\left[\frac{\alpha}{\beta k}c^{1-\beta}(c-1)^{\alpha-1}\right]^{\frac{1}{\beta}};\\
	0,& \xi=\left[\frac{\alpha}{\beta k}c^{1-\beta}(c-1)^{\alpha-1}\right]^{\frac{1}{\beta}};\\
	-\infty,& \xi>\left[\frac{\alpha}{\beta k}c^{1-\beta}(c-1)^{\alpha-1}\right]^{\frac{1}{\beta}}.
	\end{cases}
	\end{align}
	Moreover:
	\begin{enumerate}[wide, labelwidth=!, labelindent=0pt]
		\item Suppose $\alpha<\beta<1$:
		
		\begin{enumerate}[wide, labelwidth=!, labelindent=0pt, label={(\arabic*)}]
			\item If $\xi\leq[\frac{\alpha}{\beta k}c^{1-\beta}(c-1)^{\alpha-1}]^{\frac{1}{\beta}}$, then $f$ is an increasing concave function.
			\item If $\xi>[\frac{\alpha}{\beta k}c^{1-\beta}(c-1)^{\alpha-1}]^{\frac{1}{\beta}}$, then $f$ is concave increasing on $[0,x_2^*]$, concave decreasing on $[x_2^*,\tilde{x}]$ and convex decreasing on $[\tilde{x},\infty)$. Here $x_2^*$ and $\tilde{x}$ are respectively the solutions to the equation
			\begin{align}
			c^{1-\beta}(c-1)^{\alpha-1}\left(x^{-\frac{1}{\beta}}+\frac{\xi \alpha}{\beta}\right)-k\xi\left(x^{-\frac{1}{\beta}}+\xi\right)^\beta=0
			\label{eq:turnpt}
			\end{align}
			and
			\begin{align}
			0&=c^{1-\beta}(c-1)^{\alpha-1}\left[-\frac{\xi \alpha}{ \beta^2}(\beta-\alpha)+\frac{1}{\beta}\left(\frac{\alpha}{\beta}-\beta+\alpha-1\right)x^{-\frac{1}{\beta}}\right]\nonumber\\
			&\qquad-k\left(\xi+x^{-\frac{1}{\beta}}\right)^\beta\left[-\xi\left(1-\frac{\alpha}{\beta}\right)+\left(\frac{1}{\beta}-1\right)x^{-\frac{1}{\beta}}\right].
			\label{eq:inflexpt}
			\end{align}
		\end{enumerate}
		
		\item Suppose $\alpha<\beta=1$:
		
		\begin{enumerate}[wide, labelwidth=!, labelindent=0pt]
			\item If $\xi\leq\frac{\alpha}{k}(c-1)^{\alpha-1}$, then $f$ is an increasing concave function.
			\item If $\frac{\alpha}{ k}(c-1)^{\alpha-1}<\xi\leq\frac{1}{ k}(c-1)^{\alpha-1}$, then $f$ is concave increasing on $[0,x_2^*]$, concave decreasing on $[x_2^*,\tilde{x}]$ and convex decreasing on $[\tilde{x},\infty)$ with
			\begin{align*}
			x^*_2:=\frac{(c-1)^{\alpha-1}-k\xi}{\xi\bigl[k\xi-\alpha(c-1)^{\alpha-1}\bigr]},\qquad 	\tilde{x}:=\frac{2(c-1)^{\alpha-1}-k\xi}{\xi\bigl[k\xi-\alpha(c-1)^{\alpha-1}\bigr]}.
			\end{align*}
			\item If $\xi>\frac{1}{ k}(c-1)^{\alpha-1}$, then $f$ is a decreasing function.
		\end{enumerate}
		
	\end{enumerate}

	\label{lem:shapef}
\end{lemma}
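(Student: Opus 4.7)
The proof falls into three layers: (i) asymptotic analysis of $f(x)$ at $\infty$, (ii) a factorized form of $f'(x)$ to read off monotonicity, and (iii) a factorized form of $f''(x)$ to locate inflection points. The essential substitution throughout is $y:=x^{-1/\beta}\in(0,\infty)$, which turns the algebraic-fractional expressions into tractable polynomial-type expressions in $y$ and $\xi+y$.

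For the limit, expanding $(\xi x^{1/\beta}+1)^\beta = \xi^\beta x + \beta\xi^{\beta-1}x^{1-1/\beta}+O(x^{1-2/\beta})$ and inserting into $f$ gives
\begin{equation*}
f(x)=(A-k\xi^\beta)\,\xi^{\alpha-\beta}\,x^{\alpha/\beta} - k\beta\,\xi^{\alpha-1}\,x^{(\alpha-1)/\beta} + \text{lower order},
\end{equation*}
with $A:=\tfrac{\alpha}{\beta}c^{1-\beta}(c-1)^{\alpha-1}$. The three limit cases follow from the sign of $A-k\xi^\beta$; at the boundary value $A=k\xi^\beta$ the leading term vanishes, and the subleading term, of order $x^{(\alpha-1)/\beta}$ with negative exponent, decays to $0$.

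For monotonicity, using the identity $\xi x^{1/\beta}+1=(\xi+y)/y$, direct differentiation collapses to
\begin{equation*}
f'(x)=\frac{\alpha}{\beta}y^{\beta-\alpha}(\xi+y)^{\alpha-\beta-1}h(y),\qquad h(y):=C\bigl(y+\tfrac{\alpha\xi}{\beta}\bigr)-k\xi(\xi+y)^\beta,
\end{equation*}
where $C:=c^{1-\beta}(c-1)^{\alpha-1}$. The prefactor is strictly positive, so $\sgn f'(x)=\sgn h(y)$. Since $\beta\in(0,1]$, the term $-(\xi+y)^\beta$ is convex, making $h$ strictly convex in $y\geq 0$, with $h(0)=\xi(A-k\xi^\beta)$ and $h(+\infty)=+\infty$. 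When $\xi\leq\xi^*:=(A/k)^{1/\beta}$, one checks $h'(0)\geq(1-\alpha)C>0$ and $h(0)\geq 0$, hence $h\geq 0$ throughout and $f$ is increasing. When $\xi>\xi^*$, the combination $h(0)<0<h(+\infty)$ and convexity forces a unique zero $y^*\in(0,\infty)$; this is exactly \eqref{eq:turnpt} in the variable $y$, and $x_2^*:=(y^*)^{-\beta}$ is the unique maximiser of $f$.

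For concavity, differentiating the expression for $f'$ once more (using $dy/dx=-y^{\beta+1}/\beta$) and simplifying yields
\begin{equation*}
f''(x)=-\frac{\alpha}{\beta^2}y^{2\beta-\alpha}(\xi+y)^{\alpha-\beta-2}H(y),\qquad H(y):=\xi\,K(\xi+y),
\end{equation*}
where
\begin{equation*}
K(z):=\tfrac{C(\beta-\alpha)(\beta+1)}{\beta}z-\tfrac{C(\beta-\alpha)(\beta+1-\alpha)}{\beta}\xi+k(1-\beta)z^{\beta+1}-k(1-\alpha)\xi z^\beta,
\end{equation*}
and rewriting $H(y)=0$ in $x$ is precisely \eqref{eq:inflexpt}. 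The decisive structural observation is
\begin{equation*}
K''(z)=k\beta(1-\beta)z^{\beta-2}\bigl[(\beta+1)z+(1-\alpha)\xi\bigr],
\end{equation*}
which is strictly positive for $\beta<1$, so $K$ is strictly convex on $[\xi,\infty)$. The two strategic evaluations are $K(\xi)=(\beta-\alpha)\xi(A-k\xi^\beta)$ and, using $h(y^*)=0$ to eliminate $(\xi+y^*)^\beta$, $K(\xi+y^*)=\tfrac{Cy^*}{\xi}\bigl[\xi(1-\alpha)+(1-\beta)y^*\bigr]>0$. For $\xi\leq\xi^*$, $K'(\xi)$ is monotone in $\xi$ and evaluates to $C(\beta-\alpha)(\beta+1)/\beta>0$ at $\xi=0$ and to $C(1-\alpha)(\beta+1-\alpha)>0$ at $\xi=\xi^*$, so $K'(\xi)>0$ on $[0,\xi^*]$; convexity then gives $K>0$ on $(\xi,\infty)$ and $f$ is concave throughout. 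For $\xi>\xi^*$, $K(\xi)<0<K(\xi+y^*)$ together with strict convexity and $K(\infty)=\infty$ force a unique zero $\tilde z\in(\xi,\xi+y^*)$, hence a unique inflection point $\tilde x=(\tilde z-\xi)^{-\beta}>x_2^*$, giving the stated partition: concave increasing on $[0,x_2^*]$, concave decreasing on $[x_2^*,\tilde x]$, convex decreasing on $[\tilde x,\infty)$. The case $\beta=1$ reduces $h$ and $K$ to linear expressions, from which the closed forms for $x_2^*$ and $\tilde x$ and the extra purely decreasing case (c) (both linear coefficients of $h$ negative) read off directly. The principal difficulty is recognising that $K$, viewed as a function of $z=\xi+y$, is strictly convex; once this is in hand, the full concavity picture reduces to evaluating $K$ at two strategic points.
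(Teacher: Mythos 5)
Your proof is correct and follows essentially the same route as the paper: factor $f'$ and $f''$ into a signed prefactor times an auxiliary function of $x^{-1/\beta}$, then exploit convexity/concavity of that auxiliary function together with its endpoint values and behaviour at infinity (your $h$ is exactly the paper's $h_1$, and your $K(\xi+y)$ equals $-\beta\, h_2(y)$ in the paper's notation, so the sign analyses coincide). The one genuine refinement is your closed-form evaluation $K(\xi+y^*)=\frac{Cy^*}{\xi}\bigl[\xi(1-\alpha)+(1-\beta)y^*\bigr]>0$, which certifies $\tilde{x}>x_2^*$ explicitly, whereas the paper obtains this ordering implicitly from the fact that an interior maximum must lie in the concave region.
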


\begin{proof}
	
	We can rewrite $f$ as
	\begin{align*}
	f(x)=\frac{\frac{\alpha}{\beta}c^{1-\beta}(c-1)^{\alpha-1}-k(\xi +x^{-\frac{1}{\beta}})^{\beta}}{(\xi +x^{-\frac{1}{\beta}})^{\beta-\alpha}}x^{\frac{\alpha}{\beta}}
	\end{align*}
	such that $\displaystyle \lim_{x\to \infty}f(x)=\pm \infty$ when $\xi \gtrless [\frac{\alpha}{\beta k}c^{1-\beta}(c-1)^{\alpha-1}]^{\frac{1}{\beta}}$. The corner case of $\xi = [\frac{\alpha}{\beta k}c^{1-\beta}(c-1)^{\alpha-1}]^{\frac{1}{\beta}}$ can be analyzed by a simple application of L'Hospital's rule.
	
	We now derive the shapes of $f$ by first focusing on the case of $\beta\neq 1$. Direct differentiation gives
	\begin{align}
	f'(x)&=\frac{\alpha}{\beta}\frac{c^{1-\beta}(c-1)^{\alpha-1}\left(\frac{\xi\alpha}{\beta}x^{\frac{1}{\beta}}+1\right)-k\xi x^{\frac{1}{\beta}-1}\left(\xi x^{\frac{1}{\beta}}+1\right)^\beta}{\left(\xi x^{\frac{1}{\beta}}+1\right)^{\beta-\alpha+1}}\nonumber\\
	&=\frac{\alpha x^{\frac{1}{\beta}}h_1(x^{-\frac{1}{\beta}})}{\beta\left(\xi x^{\frac{1}{\beta}}+1\right)^{\beta-\alpha+1}}
	\label{eq:f_firstder}
	\end{align}
	with
	\begin{align*}
	h_1(z):=c^{1-\beta}(c-1)^{\alpha-1}\left(z+\frac{\xi \alpha}{\beta}\right)-k\xi\left(z+\xi\right)^\beta,
	\end{align*}
	and
	\begin{align*}
	f''(x)&=\frac{\xi\alpha x^{\frac{1}{\beta}-2}}{\beta\left(\xi x^{\frac{1}{\beta}}+1\right)^{\beta-\alpha+2}}\\
	&\qquad\times\Biggl\{c^{1-\beta}(c-1)^{\alpha-1}x\biggl[-\frac{\xi\alpha}{\beta^2}(\beta-\alpha)x^{\frac{1}{\beta}}+\frac{1}{\beta}\Bigl(\frac{\alpha}{\beta}-\beta+\alpha-1\Bigr)\biggr] \\
	&\qquad\qquad -k\biggl(\xi x^{\frac{1}{\beta}}+1\biggr)^\beta\biggl[-\xi\Bigl(1-\frac{\alpha}{\beta}\Bigr)x^{\frac{1}{\beta}}+\frac{1}{\beta}-1\biggr]\Biggl\} \\
	&=\frac{\xi\alpha x^{\frac{2}{\beta}-1}h_2(x^{-\frac{1}{\beta}})}{\beta\left(\xi x^{\frac{1}{\beta}}+1\right)^{\beta-\alpha+2}}
	\end{align*}
	where
	\begin{align*}
	h_2(z)&:=c^{1-\beta}(c-1)^{\alpha-1}\biggl[-\frac{\xi \alpha}{ \beta^2}(\beta-\alpha)+\frac{1}{\beta}\Bigl(\frac{\alpha}{\beta}-\beta+\alpha-1\Bigr)z\biggr]\\
	&\qquad-k(\xi+z)^\beta\biggl[-\xi\Bigl(1-\frac{\alpha}{\beta}\Bigr)+\Bigl(\frac{1}{\beta}-1\Bigr)z\biggr].
	\end{align*}
	
	We first investigate the convexity/concavity of $f$ by studying the sign of $f''(x)$, which is determined by that of $h_2(x^{-\frac{1}{\beta}})$. Check that
	\begin{align*}
	h_2(0)&=\xi\left(1-\frac{\alpha}{\beta}\right)\left[-\frac{\alpha}{\beta}c^{1-\beta}(c-1)^{\alpha-1}+k\xi^\beta\right], \\
	h_2'(0)&=c^{1-\beta}(c-1)^{\alpha-1}\frac{1}{\beta}\left(\frac{\alpha}{\beta}-\beta+\alpha-1\right)-k\xi^{\beta}\left(\frac{1}{\beta}-\beta+\alpha-1\right)
	\end{align*}
	and
	\begin{align*}
	h_2''(z)=-k\left(\xi+z\right)^{\beta-2}\bigl[(1-\beta)(1+\beta)z+\xi(1-\beta)(2+\beta-\alpha)\bigr]<0
	\end{align*}
	for all $z>0$ since $\alpha\leq \beta\leq 1$ and thus $h_2$ is concave. Then there are two possibilities.
	
	Suppose $\xi\leq[\frac{\alpha}{\beta k}c^{1-\beta}(c-1)^{\alpha-1}]^{\frac{1}{\beta}}$, then $h_2(0)\leq 0$ and
	\begin{align*}
	h_2'(0)&=c^{1-\beta}(c-1)^{\alpha-1}\frac{1}{\beta}\left(\frac{\alpha}{\beta}-\beta+\alpha-1\right)-k\xi^{\beta}\left(\frac{1}{\beta}-\beta+\alpha-1\right) \\
	&<c^{1-\beta}(c-1)^{\alpha-1}\frac{1}{\beta}\left(\frac{\alpha}{\beta}-\beta+\alpha-1\right)-k\xi^{\beta}\left(\frac{\alpha}{\beta}-\beta+\alpha-1\right) \\
	&=k\left(\frac{\alpha}{\beta}-\beta+\alpha-1\right)\left[\frac{c^{1-\beta}(c-1)^{\alpha-1}}{\beta k}-\xi^\beta\right] \\
	&\leq k\left(\frac{\alpha}{\beta}-\beta+\alpha-1\right)\left[\frac{\alpha c^{1-\beta}(c-1)^{\alpha-1}}{\beta k}-\xi^\beta\right] \\
	&\leq 0
	\end{align*}
	where we have used the facts that $\alpha<1$ and $\frac{\alpha}{\beta}-\beta+\alpha-1<\alpha-\beta\leq 0$. Since $h_2$ is concave, we must have $h_2(z)\leq 0$ for all $z>0$. Hence $f_2''(x)\leq 0$ for all $x\geq 0$, i.e. $f$ is a concave function. 
	
	Suppose instead $\xi>[\frac{\alpha}{\beta k}c^{1-\beta}(c-1)^{\alpha-1}]^{\frac{1}{\beta}}$, then $h_2(0)> 0$ and
	\begin{align*}
	\lim_{z\to\infty}\frac{h_2(z)}{z^{\beta+1}}=-k\left(\frac{1}{\beta}-1\right)<0
	\end{align*}
	such that $h_2(z)\to -\infty$ as $z\to\infty$. As $h_2$ is concave, we must have $h_2(z)$ down-crossing zero exactly once on $(0,\infty)$. Hence $f_2''(x)\propto h_2(x^{-\frac{1}{\beta}})$ has exactly one sign change from negative to positive, i.e. $f$ is concave for small $x$ and convex for large $x$ with a unique inflexion point $\tilde{x}$ which is given by the solution to $h_2(x^{-\frac{1}{\beta}})=0$. This corresponds to equation \eqref{eq:inflexpt}.
	
	Now we look at the monotonicity of $f$ via the sign of $f'(x)$ which in turn is determined by that of $h_1(x^{-\frac{1}{\beta}})$. Check that
	\begin{align*}
	h_1(0)&=\xi\left[\frac{\alpha}{\beta}c^{1-\beta}(c-1)^{\alpha-1}-k\xi^\beta\right],\\
	h_1'(z)&=c^{1-\beta}(c-1)^{\alpha-1}-\frac{k\xi\beta}{\left(z+\xi\right)^{1-\beta}}.
	\end{align*}
	Observe that $h_1'$ is increasing and thus $h_1$ is convex. There are two cases.
	
	Suppose $\xi\leq[\frac{\alpha}{\beta k}c^{1-\beta}(c-1)^{\alpha-1}]^{\frac{1}{\beta}}$, then $h_1(0)\geq 0$ and 
	\begin{align*}
	h_1'(0)&=c^{1-\beta}(c-1)^{\alpha-1}-k\beta\xi^\beta \geq \alpha c^{1-\beta}(c-1)^{\alpha-1}-k\beta\xi^\beta
	\geq 0.
	\end{align*}
	As $h_1$ is convex, we must have $h_1(z)\geq 0$ for all $z>0$. Hence $f'(x)\geq 0$ for all $x\geq 0$, i.e. $f$ is an increasing function. Together with the consideration of $f''$ in this parameter regime, $f$ is an increasing concave function.
	
	Suppose $\xi>[\frac{\alpha}{\beta k}c^{1-\beta}(c-1)^{\alpha-1}]^{\frac{1}{\beta}}$, then we have $h_1(0)<0$ instead. We also have
	\begin{align}
	\lim_{z\to\infty}\frac{h_1(z)}{z}=c^{1-\beta}(c-1)^{\alpha-1}>0
	\label{eq:h1infty}
	\end{align}
	and hence $h_1(z)\to \infty$ as $z\to\infty$. Since $h_1$ is convex, $h_1$ must up-cross zero exactly once on $(0,\infty)$. Therefore $f'(x)\propto h_1(x^{-\frac{1}{\beta}})$ changes sign exactly once, from which we conclude $f$ is first increasing and then decreasing with a unique turning point $x_2^*$. Moreover, $x_2^*$ is the solution to $h_1(x^{-\frac{1}{\beta}})=0$ which is equivalent to \eqref{eq:turnpt}. Taking the behavior of $f''$ into consideration, we conclude that $f$ is increasing concave on $[0,x_2^*]$, decreasing concave on $[x_2^*,\tilde{x}]$ and decreasing convex on $[\tilde{x},\infty)$. 
	
	The case of $\beta=1$ can be handled similarly. The key difference is that \eqref{eq:h1infty} no longer holds when $\beta=1$ but rather we will have
	\begin{align*}
	\lim_{z\to\infty}\frac{h_1(z)}{z}=(c-1)^{\alpha-1}-k\xi
	\end{align*}
	instead which can be either positive or negative. In the case of $\xi>\frac{(c-1)^{\alpha-1}}{k}$, we have $h_1(z)\to -\infty$ as $z\to \infty$. We can then deduce $f'(x)$ is negative for all $x$ and thus $f$ is decreasing. \qed
\end{proof}

\begin{lemma}
	If $\xi:=\frac{\lambda}{\gamma}>[\frac{\alpha}{\beta k}c^{1-\beta}(c-1)^{\alpha-1}]^{\frac{1}{\beta}}$, then $f(x)<0$ for all $x$ where $f$ is defined in \eqref{eq:f}.
	\label{lem:zerobound}
\end{lemma}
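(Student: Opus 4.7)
The plan is to observe that the denominator of $f(x)$ is strictly positive, so the sign of $f(x)$ is entirely determined by the sign of the numerator
\[
N(x):=\tfrac{\alpha}{\beta}c^{1-\beta}(c-1)^{\alpha-1}x - k\bigl(\xi x^{1/\beta}+1\bigr)^{\beta},
\]
and I would show $N(x)<0$ for all $x\ge 0$ by bounding the second term from below. The hypothesis $\xi>\bigl[\tfrac{\alpha}{\beta k}c^{1-\beta}(c-1)^{\alpha-1}\bigr]^{1/\beta}$ is exactly $k\xi^{\beta}>\tfrac{\alpha}{\beta}c^{1-\beta}(c-1)^{\alpha-1}$, which is the key quantitative input.

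First I would record the elementary inequality
\[
\bigl(\xi x^{1/\beta}+1\bigr)^{\beta}\ \ge\ \bigl(\xi x^{1/\beta}\bigr)^{\beta}\ =\ \xi^{\beta}x,\qquad x\ge 0,
\]
which is immediate from the monotonicity of $y\mapsto y^{\beta}$ on $[0,\infty)$. Combined with the reformulated hypothesis, this yields
\[
k\bigl(\xi x^{1/\beta}+1\bigr)^{\beta}\ \ge\ k\xi^{\beta}x\ >\ \tfrac{\alpha}{\beta}c^{1-\beta}(c-1)^{\alpha-1}x
\]
for every $x>0$, so $N(x)<0$ on $(0,\infty)$. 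At $x=0$ one checks directly that $N(0)=-k<0$, hence $f(x)<0$ for all $x\ge 0$.

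There is no real obstacle here; the only thing to double-check is strictness (which follows from the strict hypothesis on $\xi$) and the edge case $x=0$ handled separately. No appeal to the finer shape analysis of Lemma~\ref{lem:shapef} is needed — the bound $(\xi x^{1/\beta}+1)^{\beta}\ge \xi^{\beta}x$ together with the defining inequality on $\xi$ settles the claim in two lines.
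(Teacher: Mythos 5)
Your proof is correct and follows essentially the same route as the paper's: both reduce the claim to the strict inequality $k\xi^{\beta}>\tfrac{\alpha}{\beta}c^{1-\beta}(c-1)^{\alpha-1}$ combined with the elementary bound $(\xi x^{1/\beta}+1)^{\beta}\geq \xi^{\beta}x$. The paper phrases this as a chain of inequalities on a product representation of $f$ while you argue via the sign of the numerator, but the content is identical.
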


\begin{proof}
	The result follows directly from the definition of $f$ that
	\begin{align*}
	f(x)&=\left[\frac{\alpha}{\beta}c^{1-\beta}(c-1)^{\alpha-1}x\left(\frac{\lambda}{\gamma} x^{1/\beta}+1\right)^{-\beta}-k\right]\left(\frac{\lambda}{\gamma} x^{1/\beta}+1\right)^{\alpha} \\
	&<\left[k\left(\frac{\lambda}{\gamma}\right)^\beta x\left(\frac{\lambda}{\gamma} x^{1/\beta}+1\right)^{-\beta}-k\right]\left(\frac{\lambda}{\gamma} x^{1/\beta}+1\right)^{\alpha}\\
	&<\left[k\left(\frac{\lambda}{\gamma}\right)^\beta x\left(\frac{\lambda}{\gamma} x^{1/\beta}+0\right)^{-\beta}-k\right]\left(\frac{\lambda}{\gamma} x^{1/\beta}+1\right)^{\alpha}=0.
	\end{align*} \qed
\end{proof}

With the help of Lemma \ref{lem:shapef} and \ref{lem:zerobound}, we now prove Proposition \ref{prop:entry}, \ref{prop:entryspecial} and \ref{prop:compstat}.

\begin{proof}[Proof of Proposition \ref{prop:entry}]
	From the discussion in Section \ref{sect:solmethod}, we have to identify $\bar{g}_2(\cdot)$ the smallest concave majorant of the function 
	\begin{align*}
	g_2(\theta)&:=\max\Bigl\{V_1\bigl(s^{-1}(\theta); \lambda s^{-1}(\theta)+\Psi+R\bigr),U(-R)\Bigr\}\\
	&=\max\Bigl\{V_1(\theta^{\frac{1}{\beta}}; \lambda \theta^{\frac{1}{\beta}}+\Psi+R),U(-R)\Bigr\}
	\end{align*}
	where $V_1$ is the value function of the exit problem given in Lemma \ref{lem:henderson}. Since $c>1$ and we assume that $R> 0$, $\Psi\geq 0$ and $\gamma\leq 1\leq \lambda$, we have $c(\frac{\lambda \theta^{\frac{1}{\beta}}+\Psi+R}{\gamma})\geq \theta^{\frac{1}{\beta}}$ and hence the first regime in \eqref{eq:exitvalufun} will always apply when evaluating $V_1(\theta^{\frac{1}{\beta}}; \lambda \theta^{\frac{1}{\beta}}+\Psi+R)$, i.e.
	\begin{align}
	v_1(\theta)&:=V_1(\theta^{\frac{1}{\beta}}; \lambda \theta^{\frac{1}{\beta}}+\Psi+R) \nonumber\\
	&=-k(\lambda \theta^{\frac{1}{\beta}}+\Psi+R)^{\alpha}+\frac{\alpha}{\beta}(\lambda \theta^{\frac{1}{\beta}}+\Psi+R)^{\alpha-\beta}c^{1-\beta}(c-1)^{\alpha-1}\gamma^\beta\theta \nonumber\\
	&=R^\alpha\left(1+\frac{\Psi}{R}\right)^\alpha\Biggl[\frac{\alpha}{\beta}c^{1-\beta}(c-1)^{\alpha-1}\biggl(\frac{\lambda}{\gamma}\frac{\gamma \theta^{\frac{1}{\beta}}}{R+\Psi}+1\biggr)^{\alpha-\beta}\biggl(\frac{\gamma}{R+\Psi}\biggr)^\beta\theta\nonumber\\
	&\qquad\qquad\qquad\qquad -k\biggl(\frac{\lambda}{\gamma}\frac{\gamma \theta^{\frac{1}{\beta}}}{R+\Psi}+1\biggr)^\alpha\Biggr] \nonumber \\
	&=R^\alpha\left(1+\frac{\Psi}{R}\right)^{\alpha}f\left(\left(\frac{\gamma}{R+\Psi}\right)^{\beta}\theta\right)
	\label{eq:v1rep}
	\end{align}
	where $f$ is defined in \eqref{eq:f}. The shape of $f$ under different parameters combination is given by Lemma \ref{lem:shapef} and thus we have the following cases.
	
	When $\xi:=\frac{\lambda}{\gamma}\leq[\frac{\alpha}{\beta k}c^{1-\beta}(c-1)^{\alpha-1}]^{\frac{1}{\beta}}$, $f$ is increasing concave and we have $\displaystyle \lim_{x\to \infty}f(x)= +\infty$. These properties are inherited by $v_1$. Furthermore, $$v_1(0)=R^{\alpha}(1+\Psi/R)^{\alpha}f(0)=-kR^{\alpha}(1+\Psi/R)^{\alpha}\leq -kR^{\alpha}$$ and $$\lim_{\theta\to \infty}v_1(\theta)> 0>-kR^{\alpha}.$$ Thus $g_2$ is constructed by truncating an increasing concave function from below at $-kR^{\alpha}$. The smallest concave majorant of $g_2$ is formed by drawing a tangent line passing through $(0,-kR^{\alpha})$ which touches $v_1$ at some $\theta^*_1$. See Figure \ref{fig:case1}. The exact form of the smallest concave majorant is
	\begin{align*}
	\bar{g}_2(\theta)=
	\begin{cases}
	\frac{v_1(\theta_1^*)+kR^{\alpha}}{\theta_1^*}\theta-kR^{\alpha},& \theta<\theta_1^*;\\
	v_1(\theta),& \theta\geq \theta_1^*,
	\end{cases}
	\end{align*}
	which is equivalent to \eqref{eq:exitvalfun1} upon observing that $V_2(p)=\bar{g}_2(p^\beta)$.
	
	The optimal strategy is to sell the asset when its transformed price $\Theta_t$ first reaches $\theta_1^*$ or above. The corresponding threshold in the original price scale is given by $p_1^*:=s^{-1}(\theta_1^*)=(\theta_1^*)^{1/\beta}$. Since $\theta_1^*$ is the point of contact of the tangent line to $v_1$ which passes $(0,-kR^{\alpha})$, $\theta_1^*$ should solve
	\begin{align}
	v_1'(\theta)-\frac{v_1(\theta)+kR^{\alpha}}{\theta}=0.
	\label{eq:theta1eq}
	\end{align}
	Furthermore, we can deduce from a graphical inspection that the solution to \eqref{eq:theta1eq} is a down-crossing. For the large proportional transaction costs case $\xi=\frac{\lambda}{\gamma}>[\frac{\alpha}{\beta k}c^{1-\beta}(c-1)^{\alpha-1}]^{\frac{1}{\beta}}$, the straight line passing $(0,-kR^{\alpha})$ can touch $v_1$ at two distinct locations. A simple geometric inspection will tell us that the required root is the smaller one. Using the representation of $v_1(\theta)$ in \eqref{eq:v1rep}, \eqref{eq:theta1eq} can be rewritten as
	\begin{align*}
	0&=R^\alpha\left(1+\frac{\Psi}{R}\right)^{\alpha}\left(\frac{\gamma}{R+\Psi}\right)^{\beta}f'\left(\left(\frac{\gamma}{R+\Psi}\right)^{\beta}\theta\right)\\
	&\qquad-\frac{R^\alpha\left(1+\frac{\Psi}{R}\right)^{\alpha}f\left(\left(\frac{\gamma}{R+\Psi}\right)^{\beta}\theta\right)+kR^{\alpha}}{\theta}.
	\end{align*}
	A further substitution of $x=(\frac{\gamma}{R+\Psi})^\beta \theta$ leads to
	\begin{align}
	\left(1+\frac{\Psi}{R}\right)^{\alpha}[xf'(x)-f(x)]=k.
	\label{eq:x1}
	\end{align}
	Then $p_1^*=(\theta_1^*)^{1/\beta}=\frac{R+\Psi}{\gamma}(x_1^*)^{1/\beta}$ where $x_1^*$ is defined as the solution to \eqref{eq:x1} which is equivalent to \eqref{eq:p1eq}. 
	
	When $\xi=\frac{\lambda}{\gamma}>[\frac{\alpha}{\beta k}c^{1-\beta}(c-1)^{\alpha-1}]^{\frac{1}{\beta}}$, Lemma \ref{lem:shapef} implies that $v_1$ is first concave increasing, reaching a global maximum at some $\theta_2^*$, concave decreasing and finally convex decreasing with  $\displaystyle \lim_{\theta\to \infty}v_1(\theta)= -\infty$. There are two further possibilities. 
	
	If $v_1(\theta_2^*)> -kR^{\alpha}$, then there must exist $0\leq \hat{\theta}_1<\hat{\theta}_2$ such that we have $g_2(\theta)=-kR^{\alpha}$ on $[0,\hat{\theta}_1]\cup [\hat{\theta}_2,\infty)$ and $g_2(\theta)=v_1(\theta)$ on $[\hat{\theta}_1,\hat{\theta}_2]$. The smallest concave majorant of $g_2(\theta)$ is formed by a chord passing $(0,-kR^\alpha)$ which touches $v_1$ at some $\theta_1^*<\theta_2^*$ on $\theta< \theta_1^*$, a horizontal line at level $g(\theta_2^*)$ on $\theta>\theta_2^*$, and the function $g_2$ itself on $\theta_1^*\leq \theta\leq \theta_2^*$. See Figure \ref{fig:case2}. The smallest concave majorant is
	\begin{align*}
	\bar{g}_2(\theta)=
	\begin{cases}
	\frac{v_1(\theta_1^*)+kR^{\alpha}}{\theta_1^*}\theta-kR^{\alpha},& \theta<\theta_1^*;\\
	v_1(\theta),& \theta_1^*\leq \theta\leq \theta_2^*; \\
	v_1(\theta_2^*),& \theta>\theta_2^*.
	\end{cases}
	\end{align*}
	This gives the form of the value function in \eqref{eq:exitvalfun2a}.
	
	The optimal strategy is to purchase the asset when its transformed price $\Theta_t$ first enters the interval $[\theta_1^*,\theta_2^*]$. The boundary of the purchase regions in the naive scale can be recovered via $p_i^*=(\theta_i^*)^{1/\beta}$ for $i=1,2$. Given that $\theta_2^*$ is the maximizer of $v_2(\theta)$, using the representation of \eqref{eq:v1rep} $\theta_2^*$ should then solve $f'((\frac{\gamma}{R+\Psi})^{\beta}\theta)=0$. Using \eqref{eq:f_firstder}, $x_2^*:=(\frac{\gamma}{R+\Psi})^\beta \theta_2^*$ is a solution to
	\begin{align*}
	h_1(x^{-\frac{1}{\beta}})=c^{1-\beta}(c-1)^{\alpha-1}\left(x^{-\frac{1}{\beta}}+\frac{\xi \alpha}{\beta}\right)-k\xi\left(x^{-\frac{1}{\beta}}+\xi\right)^\beta=0.
	\end{align*}
	Then $p_2^*=(\theta_2^*)^{1/\beta}=\frac{R+\Psi}{\gamma}(x_2^*)^{1/\beta}$ where $x_2^*$ is given by the solution to \eqref{eq:p2eq}.
	
	If $v_1(\theta_2^*)\leq -kR^{\alpha}$ instead, then $v_1(\theta)\leq -kR^{\alpha}$ for all $\theta$. Thus we have $g_2(\theta)=-kR^{\alpha}$ which is a flat horizontal line, and it is also the smallest concave majorant of itself, i.e. $\bar{g}_2(\theta)=-kR^{\alpha}$. The optimal strategy is not to trade at all at any price level such that the utility received is always $U(-R)=-kR^{\alpha}$. See Figure \ref{fig:case3}. The ``never purchase'' case arises if and only if $v_1(\theta_2^*)\leq -kR^{\alpha}$ or equivalently
	\begin{align*}
	R^\alpha\left(1+\frac{\Psi}{R}\right)^{\alpha}f\left(\left(\frac{\gamma}{R+\Psi}\right)^{\beta}\theta_2^*\right) \leq -kR^{\alpha} \iff \left(1+\frac{\Psi}{R}\right)^{\alpha}f(x_2^*) \leq -k
	\end{align*}
	where $x_2^*$ is the maximizer of $f$ introduced in Lemma \ref{lem:shapef} and it is independent of $\Psi$ and $R$. Using the fact that $f(0)=-k$ and Lemma \ref{lem:zerobound}, we deduce that $-k<f(x^*)<0$ and hence there must exist $C:=[-\frac{k}{f(x^*)}]^{1/\alpha}-1>0$ such that $\left(1+\frac{\Psi}{R}\right)^{\alpha}f(x^*) \leq -k$ if and only if $\Psi/R\geq C$. \qed
\end{proof}

\begin{proof}[Proof of Proposition \ref{prop:entryspecial}]
	Omitted since it is largely than same as the proof of Proposition \ref{prop:entry}. \qed
\end{proof}

\begin{proof}[Proof of Proposition \ref{prop:compstat}]
	
	From the proof of Proposition \ref{prop:entry}, the required solution to equation \eqref{eq:p1eq} is a down-crossing. Then given that the left hand side of \eqref{eq:p1eq} is increasing in $\Psi$ (when evaluated at $x=x_1^*$) we can deduce $x_1^*$ and in turn $p_1^*$ are both increasing in $\Psi$.
	
	To show that $p_1^*$ is decreasing in $\gamma$, consider a substitution of $q=\frac{x^{1/\beta}}{\gamma}$. Then $p_1^*=(R+\Psi)q_1^*$ where $q_1^*$ is the solution to
	\begin{align}
	k&=\left(1+\frac{\Psi}{R}\right)^{\alpha}\nonumber\\
	&\qquad\times\frac{k\left(\lambda q+1\right)^\beta\left[\lambda\left(1-\frac{\alpha}{\beta}\right)q+1\right]-\frac{\alpha}{\beta}c^{1-\beta}(c-1)^{\alpha-1}\lambda \gamma^{\beta}\left(1-\frac{\alpha}{\beta}\right)q^{\beta+1}}{\left(\lambda q+1\right)^{\beta-\alpha+1}}
	\label{eq:qeq1}
	\end{align}
	where the right hand side of \eqref{eq:qeq1} is decreasing in $\gamma$. Hence $q_1^*$ and in turn $p_1^*$ are both decreasing in $\gamma$.
	
	The monotonicity of $p_2^*$ with respect to $\Psi$ is trivial because equation \eqref{eq:p2eq} which defines $x_2^*$ does not depend on $\Psi$. To check the monotonicity with respect to $\gamma$, consider a substitution of $q=\frac{x^{1/\beta}}{\gamma}$ again so that $p_2^*=(R+\Psi)q_2^*$ where $q_2^*$ is defined as the solution to
	\begin{align}
	c^{1-\beta}(c-1)^{\alpha-1}\left(\frac{1}{q}+\frac{\lambda \alpha}{\beta}\right)-\frac{k\lambda}{\gamma^{\beta}}\left(\frac{1}{q}+\lambda\right)^\beta=0.
	\label{eq:qeq}
	\end{align}
	From the proof of Lemma \ref{lem:shapef}, the solution to $h_1(x^{-\frac{1}{\beta}})=0$ is a down-crossing. This property is inherited by \eqref{eq:qeq}. Moreover, the left hand side of \eqref{eq:qeq} is increasing in $\gamma$. Hence $q_2^*$ and in turn $p_2^*$ is increasing in $\gamma$.
	
	Similarly, consider a substitution of $y=\lambda^\beta x$. Then $p_2^*=\frac{R+\Psi}{\lambda \gamma }(y_2^*)^{1/\beta}$ where $y_2^*$ is defined as the solution to
	\begin{align}
	c^{1-\beta}(c-1)^{\alpha-1}\left(y^{-\frac{1}{\beta}}+\frac{ \alpha}{\beta\gamma}\right)-\frac{k\lambda^\beta}{\gamma}\left(y^{-\frac{1}{\beta}}+\frac{1}{\gamma}\right)^\beta=0.
	\label{eq:y}
	\end{align}
	The left hand side of \eqref{eq:y} is decreasing in $\lambda$ and hence $y_2^*$ is decreasing in $\lambda$. Therefore $p_2^*$ is decreasing in $\lambda$ as well. \qed
\end{proof}

{\cred
\subsection{Proof of Proposition \ref{prop:uti_discount}}
\label{app:discount}

We first show that the solution method in Section \ref{sect:mgmethod} can be extended to the problem with utility-discounting. A general exposition can be found in Dayanik and Karatzas~\cite{dayanik-karatzas03} but we will outline the key ideas under our specific model to introduce some notation to be used.

Let $\mathcal{A}:=\frac{\sigma^2}{2}\frac{d^2}{dp^2}+\mu\frac{d}{dp}$ be the infinitesimal generator of $P$. The second order ordinary differential equation $\mathcal{A}u(p)=\delta u(p)$ with $\delta>0$ admits $r_1(p):=p^{\omega_1}$ and $r_2(p):=p^{\omega_2}$ as two linearly independent solutions where $\omega_1<0<\omega_2$ are the distinct real roots to the quadratic equation 
\begin{align}
\frac{\sigma^2}{2}r^2+\left(\mu-\frac{\sigma^2}{2}\right)r-\delta=0.
\label{eq:quad}
\end{align}
Moreover, for $p\in[a,b]\subseteq \mathcal{J}$, $\varphi(p;a,b):=\mathbb{E}\left[e^{-\delta \tau_a} {\mathbbm 1}_{\{\tau_a<\tau_b\}}|P_0=p\right]$ is the solution to $\mathcal{A}\varphi=\delta \varphi$ with boundary conditions $\varphi(a;a,b)=1$ and $\varphi(b;a,b)=0$ where $\tau_{\ell}:=\inf\{t\geq 0: P_t=\ell\}$ (likewise, $\vartheta(p;a,b):=\mathbb{E}\left[e^{-\delta \tau_b} {\mathbbm 1}_{\{\tau_b<\tau_a\}}|P_0=p\right]$ has similar properties). From this, we obtain
\begin{align*}
\varphi(p;a,b)=\frac{r_2(b)r_1(p)-r_1(b)r_2(p)}{r_1(a)r_2(b)-r_2(a)r_1(b)},\qquad \vartheta(p;a,b)=\frac{r_1(a)r_2(p)-r_2(a)r_1(p)}{r_1(a)r_2(b)-r_2(a)r_1(b)}.
\end{align*}

Now consider a discounted optimal stopping problem in form of $$V(p)=\sup_{\tau\in\mathcal{T}}\mathbb{E}\bigl[e^{-\delta \tau}G(P_\tau)\bigl|P_0=p\bigr].$$ As before, it is sufficient to search for an optimal stopping time in form of $\tau_{a,b}:=\tau_a\wedge \tau_b$. Then 
\begin{align*}
J(p;\tau_{a,b})&:=\mathbb{E}\bigl[e^{-\delta \tau_{a,b}}G(P_{\tau_{a,b}})\bigl|P_0=p\bigr]\\
&=G(a)\mathbb{E}\bigl[e^{-\delta \tau_a} {\mathbbm 1}_{\{\tau_a<\tau_b\}}\bigl|P_0=p\bigr]+G(b)\mathbb{E}\bigl[e^{-\delta \tau_b} {\mathbbm 1}_{\{\tau_b<\tau_a\}}\bigl|P_0=p\bigr]\\
&=G(a)\varphi(p;a,b)+G(b)\vartheta(p;a,b)\\
&=G(a)\frac{r_2(b)r_1(p)-r_1(b)r_2(p)}{r_1(a)r_2(b)-r_2(a)r_1(b)}+G(b)\frac{r_1(a)r_2(p)-r_2(a)r_1(p)}{r_1(a)r_2(b)-r_2(a)r_1(b)}\\
&=r_1(p)\biggl[\frac{r_2(b)/r_1(b)-r_2(p)/r_1(p)}{r_2(b)/r_1(b)-r_2(a)/r_1(a)}\times\frac{G(a)}{r_1(a)}\\
&\qquad+\frac{r_2(p)/r_1(p)-r_2(a)/r_1(a)}{r_2(b)/r_1(b)-r_2(a)/r_1(a)}\times \frac{G(b)}{r_1(b)}\biggr]\\
&=r_1\bigl(s^{-1}(\theta)\bigr)\biggl[\frac{s(b)-s(p)}{s(b)-s(a)}\times \phi\bigl(s(a)\bigr)+\frac{s(p)-s(a)}{s(b)-s(a)}\times \phi\bigl(s(b)\bigr)\biggr]
\end{align*}
where $s(x):=r_2(x)/r_1(x)=x^{\omega_2-\omega_1}$, $\theta:=s(p)$ and $$\phi(x):=\bigl(G/r_1\bigr)\bigl(s^{-1}(x)\bigr)=x^{-\frac{\omega_1}{\omega_2-\omega_1}}G(x^{1/(\omega_2-\omega_1)}).$$ The optimal stopping rule can be deduced by maximizing the above with respect to $a$ and $b$. Upon replacing the dummy variables via $a'=s(a)$ and $b'=s(b)$, we have
\begin{align*}
V(p)=\sup_{a,b:a\leq p\leq b}J(p;\tau_{a,b})&=r_1\bigl(s^{-1}(\theta)\bigr)\sup_{a',b:a'\leq\theta\leq b'}\left[\frac{b'-\theta}{b'-a'}\phi(a')+\frac{\theta-a'}{b'-a'}\phi(b')\right]\\
&=:v(\theta).
\end{align*}
The supremum in the second last term can be characterized by $\bar{\phi}$ the smallest concave majorant to the scaled payoff function $\phi=\frac{G}{r_1}\circ s^{-1}$, and the value function in the original coordinate is given by $V(p)=v(s(p))=r_1(p)\bar{\phi}(s(p))$.

\begin{proof}[Proof of Proposition \ref{prop:uti_discount}]

Similar to the baseline problem, \eqref{eq:uti_discount} which features utility-discounting can be solved by decomposing the problem into the sub-problems of exit and entry. The discounted exit problem is 
\begin{align*}
V_1(p;H):=\sup_{\nu\in\mathcal{T}}\mathbb{E}\bigl[e^{-\delta \nu}U(\gamma P_{\nu}-H)\bigl|P_0=p\bigr]
\end{align*}
where $H\geq 0$ is some given constant. The scaled-payoff function is given by
\begin{align*}
g_1(\theta)&:=\theta^{-\frac{\omega_1}{\omega_2-\omega_1}}U(\gamma \theta^{\frac{1}{\omega_2-\omega_1}}-H)\\
&=
\begin{cases}
-k\theta^{-\frac{\omega_1}{\omega_2-\omega_1}}(H-\gamma \theta^{\frac{1}{\omega_2-\omega_1}})^{\alpha},& 0\leq \theta<\left(\frac{H}{\gamma}\right)^{\omega_2-\omega_1};\\
\theta^{-\frac{\omega_1}{\omega_2-\omega_1}}(\gamma \theta^{\frac{1}{\omega_2-\omega_1}}-H)^{\alpha},&\theta\geq \left(\frac{H}{\gamma}\right)^{\omega_2-\omega_1}.
\end{cases}
\end{align*}
Note that $g_2(0)=g_2((\frac{H}{\gamma})^{\omega_2-\omega_1})=0$ and $g_2(\theta)< 0$ on $0<\theta< (\frac{H}{\gamma})^{\omega_2-\omega_1}$. Furthermore, on $\theta> (\frac{H}{\gamma})^{\omega_2-\omega_1}$, we have
\begin{align*}
g_1'(\theta)=\theta^{-\frac{\omega_2}{\omega_2-\omega_1}}(\gamma \theta^{\frac{1}{\omega_2-\omega_1}}-H)^{\alpha-1}\left[-\frac{\omega_1(\gamma \theta^{\frac{1}{\omega_2-\omega_1}}-H)}{\omega_2-\omega_1}+\frac{\alpha \gamma \theta^{\frac{1}{\omega_2-\omega_1}}}{\omega_2-\omega_1}\right]> 0
\end{align*}
and
\begin{align*}
g_1''(\theta)=\frac{\theta^{-\frac{\omega_1}{\omega_2-\omega_1}-2}(\gamma\theta^{\frac{1}{\omega_2-\omega_1}})^2(\gamma \theta^{\frac{1}{\omega_2-\omega_1}}-H)^{\alpha-2}}{(\omega_2-\omega_1)^2}h(1-H \gamma^{-1} \theta^{-\frac{1}{\omega_2-\omega_1}})
\end{align*}
where $$h(z):=\omega_1\omega_2 z^2+\alpha(1-\omega_1-\omega_2)z-\alpha(1-\alpha).$$

We now show that on $z\in[0,1]$ the quadratic function $h(z)$ is strictly negative which in turn will imply $g_1$ is a strictly increasing and concave function on $\theta\geq (\frac{H}{\gamma})^{\omega_2-\omega_1}$. Note that $\omega_1+\omega_2=1-\frac{2\mu}{\sigma^2}=\beta$ by considering the sum of roots of the quadratic equation \eqref{eq:quad}. Then we have $h(0)=-\alpha(1-\alpha)<0$ and $$h(1)=\omega_1\omega_2 +\alpha(1-\omega_1-\omega_2)-\alpha(1-\alpha)=(\alpha-\omega_1)(\alpha-\omega_2)<0$$ since $\omega_1<0$ and $0<\alpha<\beta=\omega_1+\omega_2<\omega_2$. If $\beta\geq 1$, then $$0\geq \alpha( 1-\beta)= \alpha(1-\omega_1-\omega_2)=h'(0)> h'(1)$$ and we must have $h$ being decreasing and in turn negative on $z\in[0,1]$. If $\beta< 1$ and $\alpha\geq -\frac{2\omega_1\omega_2}{1-\beta}$, then $$h'(0)>h'(1)=2\omega_1\omega_2+\alpha(1-\omega_1-\omega_2)=2\omega_1\omega_2+\alpha(1-\beta)\geq 0$$ and $h$ must be increasing and thus negative on $z\in[0,1]$. Finally, if  $\beta<1$ and $\alpha< -\frac{2\omega_1\omega_2}{1-\beta}$, we can compute the discriminant of the quadratic function $h$ as
\begin{align*}
\triangle:=\alpha^2(1-\omega_1-\omega_2)^2+4\omega_1\omega_2\alpha(1-\alpha)&=\alpha^2(1-\beta)^2+4\omega_1\omega_2\alpha(1-\alpha)\\
&<\alpha\left\{-2\omega_1\omega_2(1-\beta)+4\omega_1\omega_2(1-\alpha)\right\}\\
&=-2\omega_1\omega_2\alpha\left\{(\alpha-1)-(\alpha-\beta)\right\}<0
\end{align*}
and hence $h(z)<0$ for all $z$.

Now, the smallest concave majorant of $g_1$ can be formed by drawing a straight line from $(0,0)$ which touches $g_1$ at some $\theta^*>(\frac{H}{\gamma})^{\omega_2-\omega_1}$. The point of contact is given by the unique $\theta^*$ satisfying $\frac{g_1(\theta^*)}{\theta^*}=g_1'(\theta^*)$. The required $\theta^*$ is thus given by the solution to the equation
\begin{align*}
\frac{\theta^{-\frac{\omega_1}{\omega_2-\omega_1}}(\gamma \theta^{\frac{1}{\omega_2-\omega_1}}-H)^{\alpha}}{\theta}&=(\gamma \theta^{\frac{1}{\theta_2-\theta_1}}-H)^{\alpha-1}\theta^{-\frac{\omega_2}{\omega_2-\omega_1}}\\
&\qquad\times\left[\frac{\alpha\gamma \theta^{\frac{1}{\omega_2-\omega_1}}}{\omega_2-\omega_1}-\frac{\omega_1}{\omega_2-\omega_1}(\gamma \theta^{\frac{1}{\omega_2-\omega_1}}-H)\right]
\end{align*}
 which admits an explicit solution $$\theta^*=\left(\frac{\omega_2 H}{\gamma(\omega_2-\alpha)}\right)^{\omega_2-\omega_1}.$$ The smallest concave majorant of $g_1$ is then
\begin{align*}
\bar{g}_1(\theta)&=\bar{g}_1(\theta;H)\\
&=
\begin{cases}
\left(\frac{\omega_2}{\gamma(\omega_2-\alpha)}\right)^{-\omega_2}\left(\frac{\alpha}{\omega_2-\alpha}\right)^{\alpha}H^{\alpha-\omega_2}\theta,&\theta<\left(\frac{\omega_2 H}{\gamma(\omega_2-\alpha)}\right)^{\omega_2-\omega_1};\\
\theta^{-\frac{\omega_1}{\omega_2-\omega_1}}(\gamma \theta^{\frac{1}{\omega_2-\omega_1}}-H)^{\alpha},&\theta\geq \left(\frac{\omega_2 H}{\gamma(\omega_2-\alpha)}\right)^{\omega_2-\omega_1}.
\end{cases}
\end{align*}
The value function of the exit problem is thus $$V_1(p;H)=r_1(p)\bar{g}_1\bigl(s(p)\bigr)=p^{\omega_1}\bar{g}_{1}(p^{\omega_2-\omega_1};H).$$ From the form of the value function, the optimal exit strategy is to sell the asset whenever its price level is at or above $p^*=s^{-1}(\theta^*)=\frac{\omega_2 H}{\gamma(\omega_2-\alpha)}$.

Now we look at the entry problem. Note that unlike the problem without discounting which objective function is \eqref{eq:obj}, the strategy of $\tau=\infty$ now yields a value of zero rather than $U(-R)$ under utility-discounting. Hence the payoff function for the entry problem is $\max\left\{V_1(P_{\tau};\lambda P_{\tau}+\Psi+R),0\right\}$ rather than $\max\left\{V_1(P_{\tau};\lambda P_{\tau}+\Psi+R),U(-R)\right\}$. The entry problem is thus
\begin{align*}
V_2(p):=\sup_{\tau}\mathbb{E}\biggl[e^{-\delta \tau}\max\Bigl\{V_1\bigl(P_{\tau};\lambda P_{\tau}+\Psi+R\bigr),0\Bigr\}\biggr].
\end{align*}
It is also clear that $V_1$ is non-negative. Then the scaled payoff function of the entry problem is
\begin{align*}
g_2(\theta)&:=\theta^{-\frac{\omega_1}{\omega_2-\omega_1}}V_1(\theta^{\frac{1}{\omega_2-\omega_1}};\lambda \theta^{\frac{1}{\omega_2-\omega_1}}+\Psi+R)\\
&=\bar{g_1}(\theta;\lambda \theta^{\frac{1}{\omega_2-\omega_1}}+\Psi+R )\\
&=\left(\frac{\omega_2}{\gamma(\omega_2-\alpha)}\right)^{-\omega_2}\left(\frac{\alpha}{\omega_2-\alpha}\right)^{\alpha}(\lambda \theta^{\frac{1}{\omega_2-\omega_1}}+\Psi+R)^{\alpha-\omega_2}\theta.
\end{align*}
The last equality holds because
\begin{align*}
\left(\frac{\omega_2 }{\gamma(\omega_2-\alpha)}\right)^{\omega_2-\omega_1} (\lambda \theta^{\frac{1}{\omega_2-\omega_1}}+\Psi+R )^{\omega_2-\omega_1}&\geq \left(\frac{\omega_2 }{\gamma(\omega_2-\alpha)}\right)^{\omega_2-\omega_1} \lambda ^{\omega_2-\omega_1}\theta\\
&=\left(\frac{\omega_2 }{\omega_2-\alpha}\right)^{\omega_2-\omega_1} \left(\frac{\lambda}{\gamma}\right)^{\omega_2-\omega_1}\theta\\
&> \theta
\end{align*}
due to the facts that $R>0$, $\Psi\geq 0$, $\omega_1<\omega_2$ and $\gamma\leq 1\leq \lambda$, and thus the linear regime of $\bar{g}_1$ always applies when evaluating $\bar{g_1}(\theta;\lambda \theta^{\frac{1}{\omega_2-\omega_1}}+\Psi+R )$. It remains to identify $\bar{g}_2$ the smallest concave majorant of $g_2$. But by following similar (and indeed less tedious) calculus exercise as the one for the exit problem, one can verify that $\theta\to (\lambda \theta^{\frac{1}{\omega_2-\omega_1}}+\Psi+R)^{\alpha-\omega_2}\theta$ is an increasing concave function for all $\theta\geq 0$ given the condition $\alpha<\beta$. We hence must have $g_2=\bar{g}_2$ and the strategy of $\tau=0$ is optimal. \qed
\end{proof}

}
\end{document}